\documentclass[10pt,journal,compsoc]{IEEEtran}
\usepackage[T1]{fontenc}
\ifCLASSOPTIONcompsoc

\usepackage{cite}    
\usepackage{bbm}
\usepackage{bm}
\usepackage{amsmath}
\usepackage{amsthm}
\usepackage{float}
\usepackage{setspace}
\usepackage{amssymb}
\usepackage{stfloats}
\usepackage{cite}
\usepackage{ragged2e}
\usepackage[ruled,vlined,linesnumbered]{algorithm2e}
\usepackage{amsfonts}
\usepackage{mathrsfs}
\usepackage{amsmath,amsthm}
\usepackage{array,booktabs}
\usepackage{subfigure}
\usepackage{multirow}
\usepackage{cuted}
\usepackage{multicol}
\usepackage{graphicx}
\usepackage{subfigure}
\usepackage{graphicx,xcolor,bm}
\usepackage{hyperref}
\usepackage{threeparttable}
\usepackage{dcolumn}
\usepackage{setspace}
\usepackage{makecell}
\usepackage{lipsum}
\usepackage{enumerate}
\usepackage{graphicx} 		
\usepackage{subfigure}		
\usepackage{multirow}    
\usepackage{amssymb}     
\usepackage{xcolor}       
\usepackage{booktabs}    
\usepackage{tabularx}    
\usepackage{multirow}
\usepackage{array}
\usepackage{amssymb}
\usepackage{xcolor}
\usepackage{booktabs}
\usepackage{pifont}

\usepackage{booktabs}
\usepackage[table]{xcolor}

\newtheorem*{myPro}{Proof}

\newtheorem{thm}{Theorem}

\usepackage{cite,bm}
\graphicspath{{figures/}}
\begin{document}
	\title{Look One Step Ahead: Forward-Looking Incentive Design with Strategic Privacy for Proactive Service Provisioning over Air-Ground Integrated Edge Networks}
	
	\author{Sicheng Wu, Minghui Liwang, \IEEEmembership{Senior Member}, \IEEEmembership{IEEE}, Yangyang Gao, \\
		Deqing Wang, \IEEEmembership{Member}, \IEEEmembership{IEEE}, Wenbo Zhu, \IEEEmembership{Member}, \IEEEmembership{IEEE}, Yiguang Hong, \IEEEmembership{Fellow}, \IEEEmembership{IEEE}, \\ Wei Ni, \IEEEmembership{Fellow}, \IEEEmembership{IEEE}, and Seyyedali Hosseinalipour, \IEEEmembership{Senior Member}, \IEEEmembership{IEEE}
		\vspace{-0.2cm}
		\thanks{S. Wu (wusch2025@lzu.edu.cn) is with School of Information Science and Engineering, Lanzhou University, Gansu, China. M. Liwang (minghuiliwang@tongji.edu.cn), Y. Gao (gaoyangyang2004@outlook.com), W. Zhu (wbzhu@tongji.edu.cn) and Y. Hong (yghong@tongji.edu.cn) are with the Department of Control Science and Engineering, the National Key Laboratory of Autonomous Intelligent Unmanned Systems, and also with Frontiers Science Center for Intelligent Autonomous Systems, Ministry of Education, Tongji University, Shanghai, China. D. Wang (deqing@xmu.edu.cn) is with the School of Informatics, Xiamen University, Fujian, China. W. Ni (Wei.Ni@ieee.org) is with the School of Engineering, Edith Cowan University, Perth, WA 6027, and the School of Computer Science and Engineering, University of New South Wales, Sydney, NSW, Australia. S. Hosseinalipour (alipour@buffalo.edu) is with Department of Electrical Engineering, University at Buffalo-SUNY, USA.
	}}

	\IEEEtitleabstractindextext{
		\begin{abstract}
			\justifying
			In \underline{a}ir-\underline{g}round \underline{i}ntegrated \underline{n}etworks (AGINs), unmanned aerial vehicles (UAVs) are employed to provide on-demand edge services to ground vehicles. Realizing this vision, however, requires carefully designed incentives to coordinate interactions among self-interested/selfish and mobile participants. This challenge is further exacerbated by the inherently dynamic nature of AGINs, where rapid spatio-temporal variations introduce significant uncertainty in matching service providers (i.e., UAVs) and requesters (i.e., vehicles). At the same time, existing real-time service provisioning methods typically relies on precise vehicle trajectory information to achieve accurate UAV-to-vehicle matching, which not only raises privacy concerns but also incurs decision latency. To address these intertwined challenges, we propose a novel framework, termed \underline{l}ook \underline{o}ne-\underline{s}tep \underline{a}head (LOSA), for efficient and privacy-aware service provisioning between UAVs and vehicles. The key insight is to exploit the predictable vehicle travel time between road intersections, which enables a principled decomposition of the service provisioning process into two tightly coupled phases: \textit{(i)} a privacy-aware \textit{look-ahead phase} and \textit{(ii)} a lightweight real-time \textit{execution phase}. The look-ahead phase allows vehicles to adaptively adjust their privacy budgets based on historical utility, thereby balancing trajectory exposure and vehicle-to-UAV matching accuracy in a controlled manner. Leveraging the resulting information, a carefully designed double auction mechanism establishes binding \textit{one-step-ahead agreements (OSAAs)} through trajectory similarity clustering, while simultaneously constructing \textit{preference lists} to hedge against potential disruptions caused by vehicle mobility uncertainty. With these agreements in place, the execution phase enforces the pre-established OSAAs and associated preference lists, while resolving potential real-time resource conflicts \textit{without} requiring costly re-negotiations. This two-phase design not only reduces real-time computational overhead but also preserves robustness against dynamic network conditions. We analytically corroborate that LOSA guarantees key economic properties, including \textit{truthfulness}, \textit{individual rationality}, and \textit{budget balance}. Extensive experiments on real-world datasets (DAIR-V2X, HighD, and RCooper) demonstrate that LOSA achieves superior privacy protection while lowering transaction latency, thereby enabling robust, privacy-preserving, and time-efficient service provisioning in dynamic AGIN environments, compared to baseline approaches.
		\end{abstract}

		\begin{IEEEkeywords}
			Air-Ground Integrated Networks, Privacy-Preserving Auction, Trajectory Privacy, Spatio-Temporal Coupling
		\end{IEEEkeywords}
	}
	
\maketitle
\IEEEdisplaynontitleabstractindextext
%
\IEEEpeerreviewmaketitle

\section{Introduction}
\label{sec:intro}

\IEEEPARstart{T}{he} rapid advancement of intelligent vehicles and the low-altitude economy have catalyzed the development of \underline{a}ir-\underline{g}round \underline{i}ntegrated \underline{n}etworks (AGINs). In AGINs, unmanned aerial vehicles (UAVs), owing to their high mobility, flexible deployment \cite{11014565}, and line-of-sight links \cite{10843374}, can act as mobile edge servers to assist ground vehicles in processing computation/data-intensive tasks \cite{10413524, 11203884, 11030943}, such as autonomous driving and real-time navigation. To facilitate edge service provisioning, double auction mechanisms are widely adopted due to their effectiveness in coordinating self-interested participants \cite{10342859, 11204540}. However, they face a fundamental tension: high efficiency requires fine-grained vehicle trajectories, conflicting with location privacy \cite{10937134}. Privacy-preserving perturbation mitigates exposure but degrades spatial accuracy, leading to infeasible UAV–vehicle matching or suboptimal auction outcomes \cite{10430221}. Moreover, existing designs largely rely on purely real-time trading, ignoring the partially predictable nature of vehicle mobility, which incurs unnecessary latency and underutilizes temporal information for service coordination \cite{11183623, 11155207}.


These limitations motivate a service provisioning framework exploiting mobility predictability while preserving trajectory privacy over AGINs. Thus, we propose the new \underline{l}ook \underline{o}ne-\underline{s}tep \underline{a}head (LOSA) framework, which predicts each vehicle’s arrival at its next intersection instead of requiring full trajectory disclosure. This coarse-grained abstraction enables privacy-sensitive and computation-intensive auction decisions to be shifted to a \textit{look-ahead phase} based on intersection-level mobility alignment. Subsequently, LOSA decouples decision-making from execution: vehicles and UAVs are matched via a privacy-aware double auction on predicted intersection arrivals, and resulting agreements are enforced with minimal real-time computation during the \textit{execution phase}. By replacing repeated online optimization with one-step mobility prediction, LOSA reduces latency while limiting trajectory exposure.

\subsection{Challenges and Motivation}
Designing a privacy-preserving and time-efficient auction-based service provisioning mechanism for AGINs is a non-trivial task, as it requires addressing several fundamental challenges. To holistically tackle these issues, we articulate them as a set of key research questions in the following: 

\noindent$\bullet$ \textit{How to reconcile the conflict between spatio-temporal obfuscation and matching accuracy?} 
High-quality service matching in AGINs relies on accurate spatio-temporal alignment between service buyers and sellers. However, directly sharing mobility information exposes vehicles to privacy risks. This creates an inherent trade-off: excessive information obfuscation degrades spatial fidelity and can lead to infeasible or inefficient matches, e.g., assigning a UAV that is physically out of range to a vehicle, whereas insufficient obfuscation preserves matching accuracy but reveals sensitive mobility patterns. To address this, we develop a location obfuscation framework based on discretized polar coordinates, coupled with a dynamic privacy budget mechanism. In this framework, instead of relying on static noise injection, each vehicle adaptively adjusts its privacy budget (i.e., $\xi_t^n$) according to historical service utility. Privacy is selectively relaxed only when it yields tangible utility gains, enabling accurate matching while protecting location information against potential eavesdroppers.

\noindent$\bullet$ \textit{How to overcome the service latency and uncertainty caused by high mobility?}
In AGINs, the strong spatio-temporal coupling between UAVs and vehicles causes service trading opportunities to evolve rapidly as participants move, making timely service coordination challenging. At the same time, service demand is inherently stochastic and context-dependent. For example, a vehicle's service request may arise probabilistically, rendering static service allocation strategies ineffective in the presence of sudden demand fluctuations or unexpected seller (i.e., UAV) unavailability. To address these challenges, we propose to proactively utilize idle travel time by introducing a look-ahead phase (Phase 1 of LOSA), during which one-step-ahead agreements (OSAAs) are established based on predicted intersection-level mobility. This anticipatory design shifts decision-making away from latency-critical moments and enables timely service coordination. To account for stochastic demand and potential disruptions, we construct buyer-specific preference lists that serve as contingency plans for the execution phase (Phase 2 of LOSA). These pre-computed alternatives allow the system to perform fast/real-time supplementary UAV-to-vehicle matching when needed, without incurring the overhead of full auction negotiations.

\noindent$\bullet$ \textit{How to ensure economic properties in a market with endogenous privacy costs?}
Incorporating privacy loss into the utility function of resource buyers/sellers fundamentally complicates the economic design of the service provisioning market. In particular, rational buyers will participate in the market only if their expected service value exceeds both the monetary payment and the associated privacy cost. On the other hand, the service provisioning mechanism must remain robust to strategic behaviors of participants, ensuring that they cannot benefit from misreporting their private valuations or privacy sensitivities, even under stochastic demand conditions. To address this challenge, we develop a Vickrey–Clarke–Groves (VCG)-based pricing mechanism that incorporates privacy costs meticulously into the utility formulation. To ensure scalability in dynamic AGIN settings, this pricing scheme is further integrated with path similarity clustering, which reduces the computational complexity of the matching process while preserving the matching efficiency. Building on this design, we rigorously establish that our proposed service provisioning mechanism satisfies key economic properties, including truthfulness, budget balance, and individual rationality, even when privacy costs are endogenously embedded in the utility function.

\subsection{Novelty and Contribution}
Building on the components developed in response to the aforementioned formulated research questions, our core contributions can be summarized as follows:

\noindent$\bullet$ \textit{Framework design: a novel ``look one-step ahead'' service provisioning framework over AGINs.} 
We introduce a temporal abstraction to alleviate the spatio-temporal coupling induced by the mobility of vehicles and UAVs. Specifically, by leveraging the otherwise underutilized travel time between intersections, LOSA decomposes the service provisioning process into two coordinated phases: a privacy-aware look-ahead phase (Phase 1) and a real-time execution one (Phase 2). This decomposition allows OSAAs to be established before execution, converting mobility uncertainty into structured anticipatory decisions. Consequently, LOSA stabilizes real-time service provisioning and reduces reliance on reactive, latency-sensitive coordination in the network.

\noindent$\bullet$ \textit{Micro-perspective: achieving dynamic privacy-utility trade-off.} 
In LOSA, we address the inherent tension between UAV-to-vehicle matching accuracy and user privacy. To this end, rather than relying on static noise injection, we develop a dynamic privacy-budget adjustment mechanism integrated with a three-step location obfuscation model that discretizes both the perturbation radius and angle. This design establishes a closed-loop feedback mechanism between historical matching utility and privacy budgets, allowing each vehicle to autonomously regulate its level of location exposure (i.e., $\xi_t^n$). As a result, privacy is selectively relaxed only when it leads to meaningful utility gains, ensuring accurate service matching when needed while maintaining sufficient uncertainty in the shared information to mislead potential adversaries/eavesdroppers.

\noindent$\bullet$ \textit{Solution design: a two-phase synergistic decision with contingency plans.} 
To enable efficient UAV-to-vehicle matching under privacy-preserving inputs/information and demand uncertainty, we develop a two-phase double-auction framework that integrates proactive planning with real-time execution. In Phase 1, we cluster participants based on trajectory similarity using the Fréchet distance and jointly optimize active UAV deployments, which allows us to derive risk-aware OSAAs along with VCG-based pricing within the auction. We also construct buyer-specific preference lists that serve as lightweight contingency plans. Building on these pre-computed decisions, Phase 2 executes the established OSAAs in real-time and, when disruptions occur, seamlessly switches to backup sellers according to the preference lists. This design eliminates the need to re-initiate time-consuming negotiations and ensures service continuity under stochastic demand and mobility uncertainty.

\noindent$\bullet$ \textit{Validation: rigorous theoretical guarantees and real-world data-driven evaluation.} 
We validate our proposed LOSA framework through both theoretical analysis and experiments. Theoretically, we rigorously prove that LOSA satisfies key economic properties, including individual rationality, budget balance, and truthfulness. Empirically, we conduct evaluations using a fused dataset constructed from three real-world sources (DAIR-V2X, HighD, and RCooper), enabling realistic modeling of mobility and service dynamics. Results demonstrate that LOSA consistently surpasses state-of-the-art benchmarks, providing enhancement in privacy protection strength while sustaining a high time-efficiency in dynamic AGINs.

\section{Literature Investigation}
\label{sec:related_work}

We provide an overview of the related literature, while highlighting the key differences of this work.

\noindent $\bullet$ \textit{Service provisioning in AGINs.} The integration of UAVs with terrestrial networks has been recognized as a promising strategy to augment both coverage and edge computing capabilities of ground infrastructures. For instance, \textit{Liu et al.} \cite{11192086} proposed a space-air-ground edge computing architecture leveraging UAVs as edge servers to reduce processing delays. \textit{Li et al.} \cite{9965749} investigated the joint optimization of trajectory and transmit power, using iterative successive convex approximation. \textit{Zhang et al.} \cite{10521811} addressed joint UAV trajectory planning and resource allocation in heterogeneous edge networks through a potential game framework. \textit{Park et al.} \cite{10238831} introduced an attention-driven multi-agent deep reinforcement learning (MADRL) approach for joint UAV trajectory and resource optimization to reduce the energy use and delay of service provisioning. Other studies have considered real-time decision-making under dynamic network edge conditions. For example, \textit{He et al.} \cite{10700794} studied online task scheduling via Lyapunov optimization, and \textit{Xing et al.} \cite{11362958} developed MADRL-based real-time UAV service provisioning strategies. Despite their cotributions, existing approaches are generally reactive, ignoring predictable mobility between intersections, which delays decisions and risks mismatches. LOSA instead uses short-term predictions to establish agreements in advance, reducing the need for online optimization requirements and the risk of UAV-to-vehicle mismatches.

\noindent $\bullet$ \textit{Double auctions for edge service trading.} Double auction mechanisms are widely adopted to facilitate resource sharing. For instance, \textit{Wu et al.} \cite{11224729} investigated iterative double auction for task scheduling over multi-server edge marketplaces to maximize social welfare. Moreover, \textit{Liu et al.} \cite{10892012} proposed a truthful mixed double auction model to facilitate cooperative resource provisioning in vehicle-assisted edge networks. Although such VCG-based auctions have desirable economic properties, directly applying them in large-scale AGINs is computationally expensive due to their combinatorial nature. To improve scalability, some works have proposed lightweight or decomposed auction designs. For instance, \textit{Qiu et al.} \cite{11115158} introduced a relay-based auction mechanism that decomposes the auction process into smaller sub-problems, while \textit{Khamse-Ashari et al.} \cite{9361144} developed an iterative auction game to enable distributed and low-complexity service provisioning. Beyond computational burden, existing mechanisms assume static demand and rely on reactive clearing, making them inadequate for stochastic AGIN environments. Sustaining efficiency thus requires frequent re-clearing, incurring extra latency. Recent works instead explore anticipatory or multi-phase decisions. For example, \textit{Wu et al.} \cite{11204540} proposed a two-phase double auction with overbooking to handle dynamic supply and demand, while \textit{Qi et al.} \cite{11155207} introduced a future resource bank framework that combines offline risk-aware contracts with online backup matching. Building on this line of work, LOSA introduces a mobility-aware look-ahead auction using short-term trajectory predictions. Subsequently, by clustering similar vehicle trajectories for tractable VCG pricing and obtaining OSAAs in advance, it shifts decisions to the travel interval and handles demand uncertainty without repeated market clearing.

\noindent $\bullet$ \textit{Trajectory privacy preservation for location-based services.} Privacy is a critical concern in AGINs, where accurate trajectory information is essential for effective service matching. This creates a fundamental tension between preserving user privacy and maintaining matching accuracy. To address this issue, a wide range of privacy-preserving techniques have been developed, including $k$-anonymity, differential privacy (DP), and Geo-Indistinguishability (Geo-I). For example, \textit{Huang et al.} \cite{11106247} proposed a DP-based framework to protect users' behavioral patterns and trajectories in location-based services, while \textit{Min et al.} \cite{10325612} developed a personalized 3D location privacy scheme based on Geo-I to obfuscate user locations while preserving service quality. Nevertheless, most of these approaches rely on static privacy mechanisms, like fixed DP noise or preset budgets, enforcing a rigid privacy–utility trade-off. Some works have introduced utility-aware designs. Particularly, \textit{Yao et al.} \cite{10197160} balanced privacy and utility, while \textit{Chen et al.} \cite{10891693} optimized the privacy–utility trade-off by quantifying the inherent privacy risk before data publishing. However, they still relied on manual budgets or static privacy thresholds, limiting their adaptation to AGINs' dynamic service conditions. More recently, efforts have been made to integrate privacy considerations into economic service provisioning. For example, \textit{Song et al.} \cite{10964145} incorporated DP into a double-layer auction framework to jointly optimize resource allocation and privacy budgets. \textit{Xu et al.} \cite{11159128} proposed a privacy-preserving auction using combinatorial obfuscation to protect user locations in UAV-assisted services. Compared to these methods that treat privacy budgets as fixed, LOSA makes privacy dynamic and tunable: by linking each user’s budget to historical matching utility, it creates a closed-loop feedback that lets vehicles adjust trajectory exposure, allowing privacy to be relaxed when acceptable and beneficial, achieving an adaptive, context-aware privacy–utility trade-off.


\begin{table}[t!]
	\centering
	\caption{Comparative summary of existing literature and our work's features. (DA: Double Auction, PP: Privacy Preservation, CE: Computational Efficiency, LA: Look-Ahead, DPB: Dynamic Privacy Budget, SD: Stochastic Demand, TP: Trajectory Planning)}
	\vspace{-0.3cm}
	\label{tab:comparison}
	\scriptsize
	\setlength{\tabcolsep}{5pt}
	\renewcommand{\arraystretch}{1.2}
	\begin{tabular}{c|ccc|cccc}
		\toprule
		
		\multirow{2}{*}{\textbf{Reference}} & \multicolumn{3}{c}{\textbf{Basic Attributes}} & \multicolumn{4}{c}{\textbf{Advanced Functionalities}} \\
		\cmidrule(lr){2-4} \cmidrule(lr){5-8}
		& \cellcolor{gray!20}DA & \cellcolor{gray!20}PP & \cellcolor{gray!20}CE & \cellcolor{gray!20}LA & \cellcolor{gray!20}DPB & \cellcolor{gray!20}SD & \cellcolor{gray!20}TP \\
		\midrule
		\cite{11192086}, \cite{9965749}, \cite{10521811}, \cite{10238831} &  &  & $\checkmark$ &  &  &  & $\checkmark$ \\ 
		\cite{10700794}, \cite{11362958} &  &  & $\checkmark$ &  &  & $\checkmark$ &  \\
		\cite{11224729}, \cite{10892012}, \cite{11115158}, \cite{9361144} & $\checkmark$ &  & $\checkmark$ &  &  &  &  \\
		\cite{11204540}, \cite{11155207} & $\checkmark$ &  & $\checkmark$ & $\checkmark$ &  & $\checkmark$ &  \\
		\cite{11106247}, \cite{10325612}, \cite{10197160}, \cite{10891693} &  & $\checkmark$ &  &  &  &  &  \\
		\cite{10964145}, \cite{11159128} & $\checkmark$ & $\checkmark$ & $\checkmark$ &  &  &  & $\checkmark$ \\
		\midrule
		\rowcolor{gray!15}
		\textbf{LOSA (Ours)} & $\checkmark$ & $\checkmark$ & $\checkmark$ & $\checkmark$ & $\checkmark$ & $\checkmark$ & $\checkmark$ \\
		\bottomrule
	\end{tabular}
	\vspace{-0.3cm}
\end{table}

\section{System Model and Operational Overview}

We consider an AGIN architecture, comprising vehicles and low-altitude UAVs. Within this setting, our goal is to enable time-efficient, privacy-preserving, and reliable service provisioning under dynamic and uncertain conditions, while navigating the interplay among the following factors: \textit{(i)} spatio-temporal coupling between service requestors (i.e., vehicles) and providers (i.e., UAVs); \textit{(ii)} uncertainty in service demands and availability; and \textit{(iii)} the trade-off between service quality/effectiveness and user privacy. To address these factors, we next introduce the key components of the system model.

\begin{figure}[htb]
	\centering
	\includegraphics[trim=0cm 0cm 0cm 0cm, clip, width=\columnwidth]{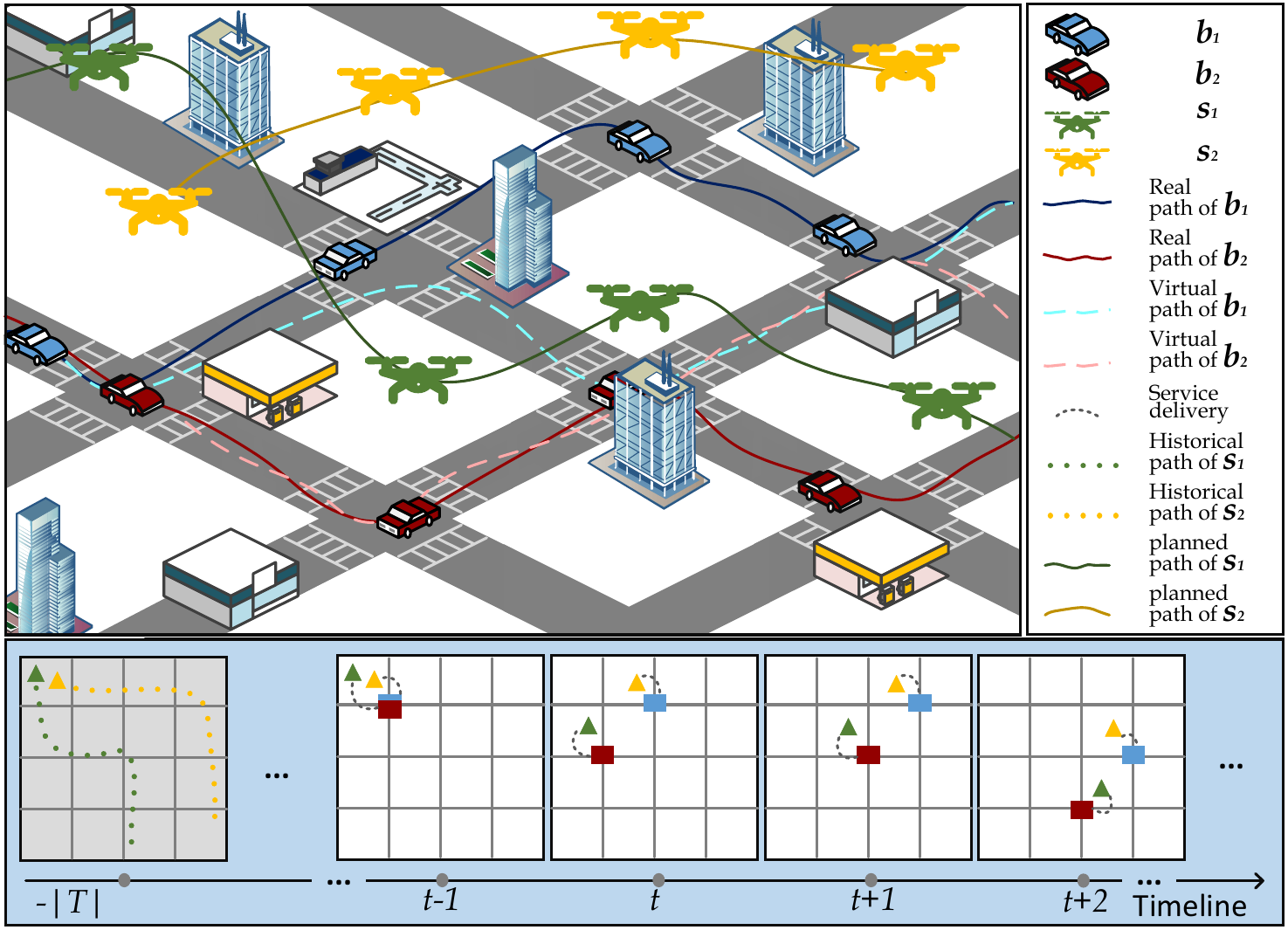}
	\vspace{-0.6cm}
	\caption{AGIN framework of our interest with privacy-preserving vehicles and trajectory-planning enabled UAVs, along with a timeline example.}
	\label{fig_1}
\end{figure}

\vspace{-0.3cm}
\subsection{System Description: Map, Time Horizon, and Participants}

\textit{Map and time discretization.} We consider an orthogonally structured road network that forms a grid pattern, similar to that of Manhattan. At each intersection, a roadside unit (RSU) is deployed as an auctioneer, facilitating local service coordination\footnote{We consider a single logical auctioneer entity whose functions are implemented by distributed units, namely RSUs. Accordingly, in Section \ref{sec:AuctioneerUtility}, we represent the auctioneer’s utility as a single, unified entity.}. To capture the spatio-temporal dynamics of the system, the time horizon is discretized into a set of timeslots $\bm{\mathcal{T}} = \left\{1, \ldots, t, \ldots, {|\bm{\mathcal{T}}|}\right\}$. Each timeslot $t$ has a fixed duration $\Delta t$ defined as the time required for a vehicle to traverse the distance between two adjacent intersections. This discretization aligns mobility with the grid structure and enables tractable modeling of vehicle movement. Under this model, vehicle speeds are represented as integer multiples of the intersection-to-intersection distance per timeslot. If the distance between two adjacent intersections is 100 meters and $\Delta t$ is defined such that traversing one intersection-to-intersection segment takes one timeslot, then a vehicle with unit speed moves from its current intersection to the next intersection in one timeslot. A faster vehicle may traverse two consecutive intersections within the same timeslot, corresponding to a speed of two intersection-lengths per $\Delta t$. This representation ensures that all vehicle positions are aligned with intersection locations at each timeslot, hence simplifying subsequent analyses. 

\noindent\textit{Key participants in the service provisioning market.} Considering two types of participants, namely buyers and sellers, as detailed below:

\noindent$\bullet$\textbf{ Buyers (vehicles):} We denote the set of buyers (i.e., moving vehicles) as $\bm{\mathcal{B}} = \left\{b_1, \ldots, b_n, \ldots, b_{|\bm{\mathcal{B}}|}\right\}$. Each buyer $b_n$ follows a private trajectory $\bm{\mathcal{L}}_{n}^{\mathsf{b}}=\left\{l_{n}^{\mathsf{b},1},\ldots,l_{n}^{\mathsf{b},{|\bm{\mathcal{T}}|}}\right\}$, where $l_{n}^{\mathsf{b},t}$ describes its coordinates at the end of timeslot $t$. These trajectories are considered sensitive information and are not disclosed during service provisioning. Instead, buyers may strategically obfuscate their mobility patterns to preserve privacy. In addition, buyers exhibit heterogeneous and potentially multi-type service demands. For example, a vehicle may request multiple services, such as real-time sensing data for safe navigation, access to nearby charging infrastructure, or context-aware environmental information.

\noindent$\bullet$\textbf{ Sellers (UAVs):} We denote the set of sellers (i.e., UAVs) as $\bm{\mathcal{S}} = \left\{s_1, \ldots, s_m, \ldots, s_{|\bm{\mathcal{S}}|}\right\}$. Each seller $s_m$ is equipped with sensing, processing, and storage capabilities, and maintains a record of its historical flight trajectory $\bm{\mathcal{L}}_{m}^{\mathsf{s}}$. Each seller can provide multiple types of services. We consider $J$ service types, such as high-definition (HD) map updates and real-time obstacle detection. The asking price of a seller is closely tied to the spatio-temporal characteristics of the service, including its location, availability, and quality.

The above-described market operates under a flexible many-to-many (M2M) matching paradigm. Specifically, a \textit{matching} defines the assignment between buyers and sellers, together with the allocation of service types and corresponding resources. Under this definition, a buyer's composite demand may be fulfilled by multiple sellers, each providing a subset of the requested services, while a seller can simultaneously serve multiple buyers by allocating its available resources across different service requests. This formulation captures the combinatorial nature of service provisioning in dynamic AGIN environments.

\subsection{Overview of the LOSA Framework}
To address the coupled challenges of spatio-temporal dynamics and trajectory privacy, we develop LOSA. Unlike conventional approaches that execute a single auction when buyers and sellers meet at intersections, LOSA decomposes the trading process into two coordinated phases that exploit the travel time between intersections to improve both decision efficiency and adaptability. These two phases are described below. We provide an illustrative example remove the involving buyer $b_1$ over timeslots $t-1$ and $t$, as illustrated in Fig. \ref{fig_1}. Details are provided in Appendix \ref{Example}. 

\textit{Phase 1. Look-Ahead Decision Phase.} This phase is executed during timeslot $t$ for services that will be delivered at the next intersection, i.e., at the end of timeslot $t$. Its objective is to establish OSAAs between buyers and sellers, serving as binding commitments for future service transaction. By forming these agreements in advance, LOSA enables coordinated decision-making while accounting for mobility and uncertainty. Below, we describe the behavior of each entity in this phase.

\noindent$\bullet$\textbf{ Buyer behavior:} Based on its true/private trajectory $\bm{\mathcal{L}}_{n}^{\mathsf{b},t}$ and its current privacy budget $\xi_{n}^{t}$, each buyer $b_n$ generates and reports a privacy-preserving \textit{virtual trajectory} $\hat{\bm{\mathcal{L}}}_{n}^{\mathsf{b},t}$, along with its bid $\bm{bid}_{n}^{t}$ derived from its service valuation $\bm{\mathcal{V}}_n$. This mechanism allows the buyer to balance privacy and service quality, as excessive deviation from the true trajectory may degrade the subsequent matching accuracy.

\noindent$\bullet$\textbf{ Seller behavior:} Each seller $s_m$ submits its asking price $\bm{ask}_m^{t}$ based on its service cost $\bm{\mathcal{C}}_m$. In addition to its asking price, its suitability for serving a buyer is determined by its historical trajectory, denoted as $\bm{\mathcal{L}}_{m}^{\mathsf{s},t}$, which reflects its spatial coverage and service capability over time.

\noindent$\bullet$\textbf{ Auctioneer behavior:} The trading process is coordinated by an auctioneer, implemented through RSUs. Based on the buyers' and sellers' reports, the auctioneer evaluates the \textit{trajectory similarity} ($\Gamma_{m,n}^{t}$) between each buyer's virtual trajectory and each seller's historical trajectory.

Based on the submitted bids, asks, and trajectory similarity measures, the auctioneer determines the winning buyer–seller assignments (i.e., the matching) and corresponding trading prices, which together constitute the OSAAs. In addition to these primary assignments, the auctioneer constructs, for each buyer $b_n$, a preference list $\mathcal{O}_n^t$ that ranks unselected yet feasible sellers according to their suitability. These preference lists serve as backup options during the execution phase, enabling rapid UAV-to-vehicle reassignment without re-running the auction and thereby improving time-efficiency.

\textit{Phase 2. OASS and Preference list Execution.} This phase takes place at the end of timeslot $t$, when participants arrive at the intersection. Its objective is to realize timely and reliable service delivery by executing the pre-signed OSAAs, while handling real-time supply/demand uncertainties. If a pre-established agreement fails or a new demand arises, buyers can immediately switch to alternative sellers using their preference lists, thereby maintaining service continuity without initiating a new auction. This phase consists of the following components:

\noindent$\bullet$\textbf{ OSAA execution:} For each winning buyer–seller pair, the pre-signed OSAA is executed if the buyer's service demand materializes. In this case, the corresponding services are delivered and the agreed payments are completed. If the demand does not materialize, the seller's reserved resources are released and become available for other requests.

\noindent$\bullet$\textbf{ Contingency plan:} To address demand uncertainty and mobility-related disruptions, a lightweight contingency mechanism is activated: if a contracted seller becomes unavailable or a buyer experiences an unforeseen service need, the buyer consults its preference list $\mathcal{O}_n^t$ and selects the highest-ranked available seller. This reassignment is performed without auction re-running, enabling rapid adaptation and ensuring service continuity with minimal delay.

\section{Formal Modeling of LOSA}
This section provides a detailed mathematical formalization of core components. Without loss of generality, we focus on the trading process within a representative timeslot $t$, leveraging the periodic structure of the system.

\subsection{Privacy Objectives and Threat Model}
\label{subsec:threat_model}

Protecting the trajectory privacy of buyers (i.e., vehicles) is a central requirement in AGIN-based service markets. We first formalize the adversarial setting and then outline the design objectives of our privacy-preserving mechanism.

\subsubsection{Adversary Model}
We consider the auctioneer and sellers (UAVs) as \textit{honest-but-curious} (HBC) adversaries as follows: 

\noindent$\bullet$ \textit{Honest:} They faithfully follow the market rules (e.g., Algorithms \ref{alg4_3}-\ref{alg4_4} in Section \ref{subs5.2}) and do not deviate from the specified execution procedures.

\noindent$\bullet$ \textit{Curious:} They may collect and analyze historical transaction data, particularly the reported virtual trajectories $\hat{\bm{\mathcal{L}}}_{n}^{\mathsf{b},t}$, with the goal of inferring the buyers' true trajectories $\bm{\mathcal{L}}_{n}^{\mathsf{b},t}$.

We assume that the adversary possesses strong inference capabilities, characterized by:
\textit{(i) Background knowledge:} The adversary has full knowledge of the road network topology and statistical distribution of traffic flows.
\textit{(ii) Algorithm transparency:} Following Kerckhoffs's principle \cite{6231360}, the location obfuscation mechanism (see Section \ref{sec_location_obfuscation}) is publicly known.
\textit{(iii) Inference capability:} The adversary can employ inference attacks (e.g., Bayesian estimation) to compute the posterior probability that a buyer is located at position $l$, given an observed obfuscated location $\hat{l}$, denoted as $\Pr(l | \hat{l})$.

\subsubsection{Threat Method}
We focus on \textit{location inference} as the primary privacy threat. Due to the road network constraints, vehicle mobility is highly structured. In particular, a vehicle cannot occupy arbitrary spatial locations, as its movement is restricted to valid road segments and intersections. As a result, naive location perturbation may generate physically infeasible positions, which can be easily identified and discarded by an adversary, thereby weakening the effectiveness of obfuscation. Moreover, the adversary can exploit temporal correlations in the reported trajectory data: by observing a sequence of obfuscated locations over time $\mathcal{T}$, the adversary can employ sequential inference techniques, such as Markov models \cite{10978388}, to refine estimates of the true trajectory. This temporal aggregation reduces uncertainty and enables the adversary to recover sensitive mobility patterns, including frequently visited locations and travel routines.

\subsubsection{Privacy Design Goals}
To defend against the aforementioned adversaries, traditional heuristic privacy methods (e.g., $k$-anonymity or simple additive noise \cite{10296013}) are insufficient as they often do not provide rigorous guarantees upon having adversaries with background knowledge and inference capabilities.
We adopt the principle of \textit{DP}, specifically \textit{Geo-I}, which is well suited for spatial data.
Our goal is to ensure that geographically nearby locations remain statistically indistinguishable after obfuscation. In particular, for any two locations $l$ and $l'$ in the true trajectory, the probability of generating the same obfuscated location $\hat{l}$ should satisfy:
\begin{equation}
	\label{eq:dp_goal}
	\Pr(\hat{l} | l) \le e^{\epsilon \cdot d(l, l')} \Pr(\hat{l} | l'),
\end{equation}
where $d(\cdot)$ is a distance metric and $\epsilon$ corresponds to a privacy budget ($\epsilon = \xi_{n}^{t}$). This guarantee ensures that, regardless of the adversary's prior knowledge, the ability to distinguish whether a user is located at $l$ or $l'$ is bounded by $\epsilon$. As a result, smaller values of $\epsilon$ provide stronger privacy protection, while larger values allow for higher utility.

We design a \textit{dynamic privacy budget} $\xi_t^n$, together with a \textit{polar-coordinate-based obfuscation scheme} (Section \ref{sec_location_obfuscation}). This design enables the adaptive control of privacy levels over time, allowing each user to balance privacy protection and service utility in response to the observed system outcomes (i.e., matchings).

\subsection{Modeling of OSAAs and Preference Lists}
We proceed to model two key components of the proposed framework, namely OSAAs and preference lists.

\textbf{\textit{Modeling of OSAAs:}} An OSAA $\mathbb{C}_{m,n}^{t}$ represents a binding agreement established during Phase 1 of LOSA between a buyer $b_n$ and a seller $s_m$ for service delivery at the end of timeslot $t$; mathematically, it is defined as:
\begin{equation}
	\mathbb{C}_{m,n}^{t}=\left\{ \bm{p}_{m,n}^{t},\bm{r}_{m,n}^{t},\mathbb{L}_{m,n}^{t} \right\},
\end{equation}
which consists of the following components:

\noindent\textit{(i) The payment vector}  $\bm{p}_{m,n}^{t}=\left\{p_{m,n}^{1,t},\ldots,p_{m,n}^{j,t},\ldots,p_{m,n}^{J,t}\right\}$ specifies the buyer's payment for each potential service type, with $p_{m,n}^{j,t}$ reflecting the unit price that buyer $b_n$ agreed to pay for service type $j$ offered by $s_m$. These prices can depend on factors, such as service quality and market conditions.

\noindent\textit{(ii) The revenue vector}  $\bm{r}_{m,n}^{t}=\left\{r_{m,n}^{1,t},\ldots,r_{m,n}^{j,t},\ldots,r_{m,n}^{J,t}\right\}$ defines the corresponding revenue for the seller. Each element $r_{m,n}^{j,t}$ reflects the compensation received for providing service type $j$.

\noindent\textit{(iii) The spatio-temporal anchor}  $\mathbb{L}_{m,n}^{t}$ specifies the intersection at which the service is to be delivered, thereby enforcing spatial consistency with the mobility model.
\begin{figure*}[b]
	\centering
	\hrulefill
	\begin{equation}\label{eq3}
		\begin{aligned}
			\xi_t^{n+1} = \xi_t^n - \underbrace{\eta  \tanh\left(\gamma  \Delta U_{n}^{t}\right) \times\left(1 - \frac{\xi_t^n}{\xi_{\max}}\right)}_{\text{Utility gradient term}} + \underbrace{\theta  C_n^t  \left(\xi_{\max} - \xi_t^n\right)}_{\text{Matching failure penalty term}} + \underbrace{\epsilon^{t}}_{\text{Stochastic noise term}},
		\end{aligned}
	\end{equation}
	\vspace{-5mm}
\end{figure*}
\textbf{\textit{Modeling of preference lists:}} To enhance robustness against demand and mobility uncertainty, each buyer $b_n$ maintains a preference list $\bm{\mathcal{O}}_{n}^{t} = \{\mathcal{O}_{n}^{1,t}, \dots,\mathcal{O}_{n}^{j,t}, \dots, \mathcal{O}_{n}^{J,t}\}$ as a contingency mechanism. For each service type $j$, sub-list $\mathcal{O}_{n}^{j,t}$ contains an ordered set of candidate sellers that are not selected in OSAA but remain feasible alternatives. The construction of each sub-list follows two criteria: \textit{(i)} all sellers on the list should meet a minimum threshold on service quality for type $j$, and \textit{(ii)} they are ranked in non-decreasing order of their asking prices. This structure creates a price-ordered candidate pool that enables efficient reassignment during execution (i.e., Phase 2 of LOSA). If an OSAA cannot be fulfilled, e.g., due to seller unavailability, the buyer can immediately select the highest-ranked available seller from its preference list without re-running the auction.

\subsection{Modeling of Buyers}
\label{subsec:buyer_modeling}
We model each buyer $b_n \in \bm{\mathcal{B}}$ as a strategic agent whose behavior is driven by mobility, privacy preferences, economic considerations, and uncertain service demands. For any timeslot $t$, the state and characteristics of buyer $b_n$ are captured by the following tuple:
\begin{equation}
	\label{eq2}
	b_n = \left( 
	\bm{\mathcal{L}}_{n}^{\mathsf{b},t},\ 
	\hat{\bm{\mathcal{L}}}_{n}^{\mathsf{b},t},\ 
	\mathbbm{r}_n,\  
	\xi_{n}^{t},\ 
	\bm{\mathcal{V}}_n,\ 
	\bm{\mathcal{K}}_n,\ 
	\bm{bid}_{n}^{t},\ 
	\bm{\mathcal{Q}}_{n}^{t} 
	\right),
\end{equation}
which captures the key factors influencing its decision-making. The components of (\ref{eq2}) are described in the following subsections.

\subsubsection{Privacy-aware Mobility and Dynamic Budget}
The mobility and privacy behavior of buyers, represented by vehicles, plays a central role in the proposed framework and are detailed below.

\noindent $\bullet$ \textit{Trajectories and Privacy Radius:} Each buyer follows a \textit{true trajectory} $\bm{\mathcal{L}}_{n}^{\mathsf{b},t} = \left\{l_{n}^{\mathsf{b},t}, l_{n}^{\mathsf{b},t+1}, \ldots, l_{n}^{\mathsf{b},|\bm{\mathcal{T}}|}\right\}$ starting from timeslot $t$, which is treated as private information. To join the resource trading market while protecting its privacy, the buyer discloses a \textit{virtual trajectory} $\hat{\bm{\mathcal{L}}}_{n}^{\mathsf{b},t} = \left\{l_{n}^{\mathsf{b},t}, \hat{l}_{n}^{\mathsf{b},t+1}, \ldots, \hat{l}_{n}^{\mathsf{b},|\bm{\mathcal{T}}|}\right\}$. The current location $l_{n}^{\mathsf{b},t}$ is reported truthfully to maintain contractual feasibility, whereas future locations are deliberately perturbed. This perturbation is confined within a predefined \textit{privacy radius} $\mathbbm{r}_n$, which restricts the deviation between true and virtual trajectories. This radius ensures that the reported trajectory remains physically plausible while providing a controllable level of privacy protection.

\noindent $\bullet$ \textit{Dynamic Privacy Budget:} The extent of location obfuscation is regulated by the dynamic privacy budget $\xi_{n}^{t}$, which controls the trade-off between privacy and utility for buyer $b_n$ at timeslot $t$. $\xi_{n}^{t}$ determines the level of indistinguishability in the Geo-I mechanism and reflects the buyer's tolerance for location disclosure. A larger $\xi_{n}^{t}$ results in weaker perturbation, producing a virtual trajectory that more closely aligns with the true trajectory, thereby improving matching accuracy and service quality. Conversely, a smaller $\xi_{n}^{t}$ enforces stronger obfuscation, enhancing privacy protection while potentially degrading service effectiveness due to reduced spatial precision.

To capture the adaptive nature of privacy preference of each buyer, we model the evolution of the privacy budget as a \textit{feedback-driven stochastic process}. Specifically, the budget for the next timeslot $\xi_t^{n+1}$ is updated according to the outcomes of the current timeslot $t$ as (\ref{eq3}), where $\Delta U_{n}^{t} \triangleq (U_{n}^{\mathsf{b},t} - \overline{U}_{n}^{\mathsf{b},t-1}) / \overline{U}_{n}^{\mathsf{b},t-1}$ is the \textit{normalized utility variation}, with $\overline{U}_{n}^{\mathsf{b},t-1}$ being the average utility over the past $K$ slots.
Moreover, $C_n^t$ represents the \textit{weighted degree of matching failures} within a sliding window of length $K$ defined as:
\begin{equation}
	C_n^t \triangleq \sum_{\tau=t-K}^{t} \left(1 - \frac{1}{J}\sum_{j=1}^J \mathbb{I}(\exists m: \ x_{m,n}^{j,\tau}=1)\right),
\end{equation}
where $x_{m,n}^{j,\tau} \in \{0,1\}$ denotes the binary matching variable indicating whether buyer $b_n$ is matched with seller $s_m$ for service type $j$ in timeslot $\tau$, and $\mathbb{I}(\cdot)$ denotes an indicator function that returns 1 if a matching is found for service type $j$ in timeslot $\tau$. This formulation captures demand-level service failures in a proportional manner across different service types. In addition, we incorporate a stochastic term $\epsilon^{t} \sim \mathcal{N}(0,\sigma^2)$ to model unobserved environmental factors and behavioral variability. The introduced update mechanism in (\ref{eq3}) is driven by the following principles:

\noindent\textit{(i) Utility inverse regulation:} The privacy budget evolves inversely with recent utility trends. When utility improves, the buyer tends to reduce $\xi_{n}^{t}$ to enhance privacy protection. Conversely, when utility deteriorates, the buyer increases $\xi_{n}^{t}$ to improve matching accuracy in future interactions.

\noindent\textit{(ii) Failure accumulation penalty:} A higher value of $C_n^t$, indicating frequent matching failures, leads to an increase in the privacy budget. This encourages less aggressive obfuscation to improve service reliability.

\noindent\textit{(iii) Stochastic evolution:} The noise term prevents the update process from converging prematurely to suboptimal operating points, allowing continued exploration of the privacy–utility trade-off.

\subsubsection{Economic Attributes and Stochastic Demand}
Economic attribute of a buyer is characterized by its valuations, costs, bids, and the stochastic nature of its service demand. These components are described below.

\noindent $\bullet$ \textit{Valuation, privacy cost, and bids:}
The \textit{valuation vector} $\bm{\mathcal{V}}_n = \left\{v_{n}^{1},\ldots,v_{n}^{J}\right\}$ represents the intrinsic value that the buyer assigns to each service type under ideal conditions. The \textit{privacy cost vector} $\bm{\mathcal{K}}_n = \left\{k_{n}^{1},\ldots,k_{n}^{J}\right\}$ quantifies the sensitivity of the buyer to privacy loss, where $k_{n}^{j}$ denotes the unit cost associated with revealing location information for service type $j$. The \textit{bid vector} $\bm{bid}_{n}^{t} = \left\{bid_{n}^{1,t},\ldots,bid_{n}^{J,t}\right\}$ specifies the willingness-to-pay reported by the buyer to the auctioneer for each service type at timeslot $t$.

\noindent $\bullet$ \textit{Stochastic demand modeling:} In practical AGIN environments, a buyer's service demand is inherently dynamic and uncertain. To capture this behavior for each buyer $b_n$, we model its demand for service type $j$ at timeslot $t$ as a Bernoulli random variable: $Q_{n}^{j,t} \sim \mathrm{Ber}\left(\mathbbm{q}_{n}^{j,t}\right)$, where $\mathbbm{q}_{n}^{j,t}$ denotes the probability that buyer $b_n$ requests service type $j$ at timeslot $t$. We model this demand probability over time based on past service outcomes according to:
\begin{equation}
	\hspace{-4mm}\mathbbm{q}_{n}^{j,t+1} = 
	\begin{cases}
		\mathbbm{q}_{n}^{j,t} \cdot e^{-\lambda}, & 
		\begin{aligned}
			&\text{if the service is fulfilled} \\
			&\text{at timeslot } t;
		\end{aligned} \\
		\mathbbm{q}_{n}^{j,t} + \beta (1-\mathbbm{q}_{n}^{j,t}), & 
		\text{otherwise}.
	\end{cases}
	\hspace{-4mm}
\end{equation}
Here, $\lambda$ is a decay rate and $\beta$ is a reinforcement coefficient. This update captures a \textit{memory-dependent demand evolution}: when a service request is fulfilled, the demand probability decreases due to temporary satisfaction. In contrast, when demand is unmet, the probability increases, reflecting growing urgency. This mechanism enables the model to capture temporal dependencies in user behavior. Accordingly, we define the comprehensive demand state vector for buyer $b_n$ across all $J$ service types at timeslot $t$ as $\bm{\mathcal{Q}}_{n}^{t} = \left\{\mathbbm{q}_{n}^{1,t}, \dots, \mathbbm{q}_{n}^{J,t}\right\}$.

Among the above attributes, the true trajectory $\bm{\mathcal{L}}_{n}^{\mathsf{b},t}$, valuation $\bm{\mathcal{V}}_n$, and privacy cost sensitivity $\bm{\mathcal{K}}_n$ are private information, while the virtual trajectory $\hat{\bm{\mathcal{L}}}_{n}^{\mathsf{b},t}$, bid $\bm{bid}_{n}^{t}$, and the aforementioned demand model parameters $\bm{\mathcal{Q}}_{n}^{t}$ are disclosed to the auctioneer.

\subsubsection{Buyer Utility Formulation}
We denote the utility of buyer $b_n$ during timeslot $t$ by $U_{n}^{\mathsf{b},t}$, which integrates the above discussed attributes to quantify the net benefit obtained from service provisioning. For a given matching decision $x_{m,n}^{j,t}$ (where $x_{m,n}^{j,t}=1$ if $b_n$ is matched with $s_m$ for service $j$, and 0 otherwise), $U_{n}^{\mathsf{b},t}$ is given by
\begin{equation}
	\label{eq7}
	U_{n}^{\mathsf{b},t}=\sum_{j\in J}\sum_{s_m\in \bm{\mathcal{S}}} x_{m,n}^{j,t}\left[Q_{n}^{j,t}\left(\Gamma_{m,n}^{t}v_{n}^{j}-k_{n}^{j} \xi_{n}^{t}-p_{m,n}^{j,t}\right)\right].
\end{equation}
This formulation consists of the following components:

\noindent\textit{(i) Service benefit ($\Gamma_{m,n}^{t}v_{n}^{j}$):} The intrinsic valuation $v_{n}^{j}$ is scaled by \textit{trajectory similarity} $\Gamma_{m,n}^{t}$, which reflects the spatial alignment between the buyer's true trajectory and the seller's historical trajectory (i.e., a higher similarity indicates better service quality and thus higher realized value.) In particular, $\Gamma_{m,n}^{t}$ is computed via the normalized Fréchet distance as follows:
\begin{equation}
	\label{eq8}
	\Gamma_{m,n}^{t} = \left(1 - \frac{F(\bm{\mathcal{L}}_{n}^{\mathsf{b},t},\bm{\mathcal{L}}_{m}^{\mathsf{s},t})}{\max\left(\text{Leg}_{n}^{\mathsf{b},t}, \text{Leg}_{m}^{\mathsf{s},t}\right)}\right) \times 100\%,
\end{equation}
where $F(\cdot,\cdot)$ is the Fréchet distance and $\text{Leg}$ denotes the total trajectory length.

\noindent\textit{(ii) Privacy cost ($k_{n}^{j} \xi_{n}^{t}$):} This term quantifies the disutility from privacy loss, modeled as a linear function of the dynamic privacy budget $\xi_{n}^{t}$ and the buyer's sensitivity $k_{n}^{j}$.

\noindent\textit{(iii) Payment ($p_{m,n}^{j,t}$):} This represents the price paid by buyer $b_n$ for receiving service type $j$ from seller $s_m$.

Note that in (\ref{eq7}), the utility is modulated by the stochastic demand $Q_{n}^{j,t}$, meaning that the realized utility in (\ref{eq7}) is only obtained when the corresponding service is actually required. However, decisions in Phase 1 of LOSA are made prior to the realization of $Q_{n}^{j,t}$. Therefore, buyers optimize their \textit{expected utility}, as given by
\begin{equation}
	\overline{U_{n}^{\mathsf{b},t}} = \sum_{j\in J}\sum_{s_m\in \bm{\mathcal{S}}} x_{m,n}^{j,t}\left[\mathbbm{q}_{n}^{j,t}\left(\Gamma_{m,n}^{t}v_{n}^{j}-k_{n}^{j} \xi_{n}^{t}-p_{m,n}^{j,t}\right)\right].
\end{equation}
This expected utility serves as the objective function guiding the buyer's decision-making and bidding behavior in the proposed auction mechanism.

\subsection{Modeling of Sellers}
In our network of interest, UAVs serve as sellers, offering different types of services to vehicles. For each seller $s_m \in \bm{\mathcal{S}}$, its state and characteristics for a given timeslot $t$ are captured by the following triple:
\begin{equation}
	s_m=\left(\bm{\mathcal{L}}_{m}^{\mathsf{s},t}, \bm{\mathcal{C}}_m, \bm{ask}_m^{t}\right),
\end{equation}
capturing its mobility, cost structure, and pricing behavior. The components of this tuple are described below.

\subsubsection{Proactive Trajectory}
The seller’s mobility is represented by its proactive trajectory $\bm{\mathcal{L}}_{m}^{\mathsf{s},t}$, recording the locations visited up to the current timeslot $t$. Rather than serving as a static historical trace, this trajectory is continuously updated to reflect the seller's evolving movement pattern. Specifically, at each timeslot, the seller appends its current location to $\bm{\mathcal{L}}_{m}^{\mathsf{s},t}$, enabling an incremental representation of its mobility. This trajectory serves two key purposes: \textit{(i)} it provides buyers with sufficient path history to evaluate trajectory similarity, which is a critical factor in determining service quality and matching decisions; and \textit{(ii)} it enables sellers to adapt their future movements based on observed demand patterns. By analyzing historical interactions, a seller can proactively steer its trajectory toward regions with higher vehicle density, thereby increasing the likelihood of successful transactions.

\subsubsection{Economic Attributes}
The economic profile of each seller is characterized by its service costs and declared asking prices, as detailed below.

\noindent$\bullet$ \textit{The cost vector} $\bm{\mathcal{C}}_m=\left\{c_{m}^{1},\ldots,c_{m}^{J}\right\}$ quantifies the expenditure associated with providing each service type, including factors such as flight energy, communication overhead, and computational load. This information is private to the seller and influences its pricing behavior.

\noindent$\bullet$ \textit{The asking price vector} $\bm{ask}_m^{t}=\left\{{ask}_{m}^{1,t},\ldots,{ask}_{m}^{J,t}\right\}$ represents the prices declared by the seller for each service type at timeslot $t$. These values are submitted to the auctioneer during Phase 1 of LOSA and determine the seller's participation in the matching process.

\subsubsection{Seller Utility Formulation}
The utility of seller $s_m$ at timeslot $t$, denoted by $U_{m}^{\mathsf{s},t}$, is defined as the net profit obtained from service provisioning, i.e., the difference between revenue and operational cost. For a given matching decision $x_{m,n}^{j,t}$, this utility is expressed as
\begin{equation}
	\label{eq10}
	U_{m}^{\mathsf{s},t}=\sum_{j\in J}\sum_{b_n\in \bm{\mathcal{B}}} x_{m,n}^{j,t}\left[Q_{n}^{j,t}\left(r_{m,n}^{j,t}-c_{m}^{j}\right)\right],
\end{equation}
where $r_{m,n}^{j,t}$ denotes the revenue received for offering service type $j$.
Similar to that of buyers, since our designed auction in Phase 1 of LOSA occurs before the buyer's demand $Q_{n}^{j,t}$ is realized, sellers optimize their \textit{expected utility}, as given by
\begin{equation}
	\overline{U_{m}^{\mathsf{s},t}}=\sum_{j\in J}\sum_{b_n\in \bm{\mathcal{B}}} x_{m,n}^{j,t}\left[\mathbbm{q}_{n}^{j,t}\left(r_{m,n}^{j,t}-c_{m}^{j}\right)\right].
\end{equation}
This formulation highlights a dual objective for sellers. On the one hand, the sellers aim to maximize immediate profit through favorable transaction margins $(r_{m,n}^{j,t} - c_{m}^{j})$. On the other hand, they can strategically influence future opportunities by adjusting their trajectory $\bm{\mathcal{L}}_{m}^{\mathsf{s},t}$ toward regions with higher expected demand, thereby increasing the likelihood of future matches through elevated $\mathbbm{q}_{n}^{j,t}$.

We further observe an important structural property of the proposed framework: the seller's proactive trajectory planning complements the buyer's privacy-preserving behavior in an asymmetric manner. Specifically, buyers protect their mobility patterns by reporting obfuscated trajectories $\hat{\bm{\mathcal{L}}}_{n}^{\mathsf{b},t}$, whereas sellers reveal their actual trajectories $\bm{\mathcal{L}}_{m}^{\mathsf{s},t}$ to facilitate matching. This asymmetric information structure enables efficient spatio-temporal coordination while preserving vehicles' privacy.

\subsection{Modeling of Auctioneer}
The trading process is coordinated by an auctioneer, implemented as a trusted platform (e.g., edge/cloud infrastructure) and instantiated through RSUs deployed at intersections. At each timeslot, RSUs execute the two-phase trading mechanism of LOSA with their roles and responsibilites detailed below.

\subsubsection{Role and Responsibility}
The responsibilities of the auctioneer during each phase of LOSA are as follows:

\noindent $\bullet$ \textit{Phase 1 (planning and agreement formation):} The auctioneer collects and validates bid vectors $\bm{bid}_{n}^{t}$ from buyers and ask vectors $\bm{ask}_m^{t}$ from sellers. Based on this information, it determines proper buyer-seller matchings ($\bm{X}^{t}$, denoting the global matching configuration), constructs the corresponding OSAAs ($\mathbb{C}_{m,n}^{t}$), and generates preference lists ($\bm{\mathcal{O}}_{n}^{t}$).

\noindent $\bullet$ \textit{Phase 2 (execution and coordination):} The auctioneer oversees the execution of pre-established OSAAs and monitors their fulfillment. When disruptions occur, such as unmet demand or seller unavailability, it facilitates supplementary transactions using the predefined preference lists, ensuring that all interactions comply with the market rules.

\subsubsection{Auctioneer Utility Formulation}
\label{sec:AuctioneerUtility}
The overall economic benefit of auctioneer, denoted by $U^{\mathsf{a},t}$, reflects the value that it extracts from coordinating the market. It is defined as the net margin between payments collected from buyers and revenues disbursed to sellers:
\begin{equation}
	\label{eq4_6}
	U^{\mathsf{a},t}=\sum_{j \in J}\sum_{s_m\in \bm{\mathcal{S}}}\sum_{b_n\in \bm{\mathcal{B}}} x_{m,n}^{j,t}\left[Q_{n}^{j,t}\left(p_{m,n}^{j,t}-r_{m,n}^{j,t}\right)\right],
\end{equation}
where $p_{m,n}^{j,t}$ is the payment collected from buyer $b_n$, and $r_{m,n}^{j,t}$ is the revenue paid to seller $s_m$. The term $(p_{m,n}^{j,t}-r_{m,n}^{j,t})$ represents the auctioneer's margin over transactions, and the utility is aggregated over all valid trades, conditioned on the buyer's demand $Q_{n}^{j,t}$ materialization. Since the demand $Q_{n}^{j,t}$ is stochastic and unknown at the decision phase, the auctioneer instead optimizes its \textit{expected utility}, as given by

\begin{equation}
	\label{eq4_7}
	\overline{U^{\mathsf{a},t}}=\sum_{j \in J}\sum_{s_m\in \bm{\mathcal{S}}}\sum_{b_n\in \bm{\mathcal{B}}} x_{m,n}^{j,t}\left[\mathbbm{q}_{n}^{j,t}\left(p_{m,n}^{j,t}-r_{m,n}^{j,t}\right)\right].
\end{equation}
This formulation highlights the auctioneer's role as a market coordinator that balances revenue extraction with efficient matching: by selecting the matching configuration $\bm{X}^{t}$ to maximize expected utility, the auctioneer implicitly promotes high-probability transactions while avoiding inefficient or unstable matchings.

\subsection{Location Obfuscation Model}
\label{sec_location_obfuscation}
To operationalize privacy preservation for each buyer, we design a mechanism for generating the virtual trajectory $\hat{\bm{\mathcal{L}}}_{n}^{\mathsf{b},t}$ that achieves a controllable balance between location privacy and service effectiveness. Specifically, we design a discrete polar-coordinate-based obfuscation mechanism that generates the virtual trajectory $\hat{\bm{\mathcal{L}}}_{n}^{\mathsf{b},t}$ within a maximum allowable privacy radius $\mathbbm{r}_n$. Specifically, the mechanism sequentially samples a perturbation radius based on a logarithmic probability distribution tightly tuned by the dynamic budget $\xi_{n}^{t}$, and subsequently determines a discretized angular component to finalize the location. The detailed mathematical formulations and step-by-step procedures of this obfuscation mechanism are deferred to Appendix \ref{appendix:obfuscation}. Although the final service execution occurs at discrete intersections, this fine-grained location obfuscation fundamentally impacts both the auction pricing and the defense against continuous trajectory tracking, as discussed in Appendix \ref{appendix:conceptual_discussion}.

\subsection{Privacy Guarantee Analysis}
\label{sec:privacy_analysis}
To quantify the privacy protection achieved by our LOSA, we establish a formal bound on location leakage in terms of the dynamic privacy budget $\xi_{n}^{t}$ as follows:

\begin{thm}[$\epsilon$-Geo-I]
	\label{thm:geo_indistinguishability}
	The randomized obfuscation mechanism $\mathcal{A}$ defined by the discretized polar-coordinate sampling procedure (Eqs. (\ref{eq4_8})-(\ref{eq4_10}) in Appendix \ref{appendix:obfuscation}) satisfies $\epsilon$-Geo-I, where the privacy level $\epsilon$ ($\epsilon = \xi_{n}^{t}$) is strictly bounded by the dynamic privacy budget $\xi_{n}^{t}$. Specifically, for any two true locations $l, l'$ and any obfuscated location $\hat{l}$, we have:
	\begin{equation}
		\label{eq:geo_indistinguishability}
		\frac{\Pr(\mathcal{A}(l) = \hat{l})}{\Pr(\mathcal{A}(l') = \hat{l})} \leq e^{\xi_{n}^{t} \cdot d(l, l')}.
	\end{equation}
\end{thm}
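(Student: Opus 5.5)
The plan is to write the output distribution of $\mathcal{A}$ explicitly as a product of its sampling stages and then bound the likelihood ratio by the triangle inequality, as in the planar Laplace mechanism. By the construction in Appendix~\ref{appendix:obfuscation}, an obfuscated point $\hat{l}$ is determined by a discretized radius $\rho$ and a discretized angle $\phi$ relative to the true location $l$. The radius is drawn from the logarithmic (exponential-type) law governed by $\xi_{n}^{t}$ and truncated to the privacy radius $\mathbbm{r}_n$. The angle is drawn uniformly from a fixed set of directions. I will therefore aim for the representation
\begin{equation*}
\Pr(\mathcal{A}(l)=\hat{l}) = \frac{1}{Z(l)}\, g\big(d(l,\hat{l})\big)\, e^{-\xi_{n}^{t} d(l,\hat{l})},
\end{equation*}
where $g$ collects the radial Jacobian and the angular-discretization factor, and $Z(l)$ is the normalizer.

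First, I will verify from Eqs.~(\ref{eq4_8})--(\ref{eq4_10}) that the mass assigned to the lattice point $\hat{l}$ depends on $l$ only through $d(l,\hat{l})$ in this way. The essential requirement is that each grid point at radius $\rho$ receives mass proportional to $e^{-\xi_{n}^{t}\rho}$. If instead the radial law is taken to have density $\propto \rho\, e^{-\xi_{n}^{t}\rho}$, then dividing by the number of angular cells at that radius must cancel the extra factor of $\rho$. Second, for any $l,l'$, the ratio of the exponential factors is $e^{\xi_{n}^{t}(d(l',\hat{l})-d(l,\hat{l}))}$. This is at most $e^{\xi_{n}^{t} d(l,l')}$ by the triangle inequality.

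Third, I will show that the remaining factors cancel, so that the residual ratio $g(d(l,\hat{l}))Z(l')/\big(g(d(l',\hat{l}))Z(l)\big)$ equals $1$. For the normalizers, $Z(l)=Z(l')$ because the mechanism is translation invariant: the same discretized polar grid and the same truncation at $\mathbbm{r}_n$ are placed around every true location. This step also needs the remap of the sampled point to a feasible road position to be applied identically to $l$ and $l'$.

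The main obstacle is $\hat{l}$ lying in the support for $l$ but not for $l'$, which can arise from truncation at $\mathbbm{r}_n$ or from discretization. In that case the ratio is infinite and the bound fails. I would first handle this by one of two moves. The preferred option is to read the sampling procedure as assigning positive mass to every candidate cell, with the truncation applied by remapping to the boundary rather than by discarding. This keeps the supports equal and turns the remap into post-processing, which preserves Geo-I. The fallback is to restrict the claim to pairs $l,l'$ whose reachable sets coincide, and to state that the discretization factor $g$ is exactly uniform, so that no extra constant multiplies $\xi_{n}^{t}$. If neither works, the honest conclusion is $\epsilon=\xi_{n}^{t}+\delta$, where $\delta$ is a discretization slack, and I would flag that.
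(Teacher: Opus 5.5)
Your argument analyzes a different mechanism from the one in the statement. Eq.~(\ref{eq4_9}) is not an exponential radial law. Radius $\rho$ gets mass proportional to $\log[1+\xi_{n}^{t}(\mathbbm{r}_n-\rho)]$, and by Eq.~(\ref{eq4_10}) the angle is uniform over the same $2\pi/\Delta\theta_n$ directions at every radius, so there is no Jacobian factor to cancel either. Hence for $\hat{l}\neq l$ in the support, $\Pr(\mathcal{A}(l)=\hat{l})=\frac{\Delta\theta_n}{2\pi\mathcal{N}}\log[1+\xi_{n}^{t}(\mathbbm{r}_n-d(l,\hat{l}))]$, where $\mathcal{N}=\sum_{\rho'\in\mathbb{R}^\ast_n}\log[1+\xi_{n}^{t}(\mathbbm{r}_n-\rho')]$ does not depend on $l$ at all. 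Your normalizer step is therefore fine, and sharper than the paper's $\mathcal{N}_l\approx\mathcal{N}_{l'}$. But your first step cannot be verified. In your third step, $g(\rho)\propto e^{\xi_{n}^{t}\rho}\log[1+\xi_{n}^{t}(\mathbbm{r}_n-\rho)]$ is not constant, so the residual ratio is not $1$.

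What the triangle-inequality step actually needs is that $\rho\mapsto\ln\Pr$ be $\xi_{n}^{t}$-Lipschitz. Here its slope has magnitude $\xi_{n}^{t}/[(1+\xi_{n}^{t}u)\ln(1+\xi_{n}^{t}u)]$ with $u=\mathbbm{r}_n-\rho$. This exceeds $\xi_{n}^{t}$ once $(1+\xi_{n}^{t}u)\ln(1+\xi_{n}^{t}u)<1$ and diverges as $u\to 0$. For instance, take $\mathbbm{r}_n/\Delta r_n$ an integer at least $3$, $u_\phi=(\cos\phi,\sin\phi)$ for a grid angle $\phi$, $l'=l+\Delta r_n u_\phi$ and $\hat{l}=l+(\mathbbm{r}_n-\Delta r_n)u_\phi$. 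Both masses are positive, and $\Pr(\mathcal{A}(l')=\hat{l})/\Pr(\mathcal{A}(l)=\hat{l})=\ln(1+2\xi_{n}^{t}\Delta r_n)/\ln(1+\xi_{n}^{t}\Delta r_n)$. At $\xi_{n}^{t}=1$, $\Delta r_n=0.5$ this is about $1.71>e^{\xi_{n}^{t}d(l,l')}=e^{0.5}\approx 1.65$.

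The paper's proof has the same skeleton as yours: dependence only on $d(l,\hat{l})$, equal normalizers, a Lipschitz bound, and the triangle inequality. It uses the true law $f(r)=\log[1+\xi_{n}^{t}(\mathbbm{r}_n-r)]$, but its Lipschitz step bounds $|f'|\le\xi_{n}^{t}$. That controls $f(a)-f(b)$ rather than $\ln f(a)-\ln f(b)$, so it has the same hole and cannot be borrowed to close yours.

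The support obstacle you flag is fatal rather than technical, and none of your repairs works.
\begin{itemize}
\item The outermost candidate ring $\rho=\mathbbm{r}_n$ (present whenever $\mathbbm{r}_n/\Delta r_n\in\mathbb{Z}$) receives mass $\log 1=0$ under Eq.~(\ref{eq4_9}), so ``positive mass on every candidate cell'' contradicts the mechanism.
\item Remapping to the boundary, like any truncation, still confines the support of $\mathcal{A}(l)$ to the closed disk of radius $\mathbbm{r}_n$ about $l$.
\item The support of $\mathcal{A}(l')$ is the translate of that of $\mathcal{A}(l)$ by $l'-l$, and a nonempty finite set equals a translate of itself only under the zero shift. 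Restricting to pairs with equal reachable sets therefore leaves only $l=l'$.
\item No slack $\epsilon=\xi_{n}^{t}+\delta$ with finite $\delta$ survives a zero denominator.
\end{itemize}
Concretely, let $\rho_{\max}$ be the largest radius carrying positive mass, $\hat{l}=l+\rho_{\max}u_\phi$, and $l'=l-s\,u_\phi$ for any $s>0$. Then $d(l',\hat{l})=\rho_{\max}+s$ carries zero mass, so the left-hand side of the claimed inequality is $+\infty$ while the right-hand side is $e^{\xi_{n}^{t}s}$. Moreover, for generic $l'$ near $l$ the vector $\hat{l}-l'$ is not even on the polar lattice.

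So the inequality for all $l,l',\hat{l}$ is false for this mechanism and cannot be proved as stated. What your argument does prove correctly is $\xi_{n}^{t}$-Geo-I for the continuous planar Laplace mechanism, with density proportional to $e^{-\xi_{n}^{t}d(l,\hat{l})}$ on the whole plane. Any discretization or road-snapping applied afterwards counts as post-processing and preserves the guarantee. You should present your result as a statement about that modified mechanism, not as a proof of this theorem.
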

\begin{proof}[Proof Sketch]
	The proof analyzes how the sampling distribution changes with the true location. Since the radius distribution in (\ref{eq4_9}) is logarithmic in the distance to the true location, its log-probability can be shown to satisfy a Lipschitz condition under the distance metric. Next, applying the Mean Value Theorem to the logarithmic term allows us to bound the changes in probability when $l$ shifts to $l'$. The triangle inequality on $d(\cdot,\cdot)$ is then used to relate this change to the distance $d(l,l')$. Combining these results yields an upper bound on the likelihood ratio of generating the same obfuscated location from two different true locations, controlled by $\xi_{n}^{t}$ (see the full proof in Appendix \ref{appendix:proof_theorem1}).
\end{proof}

\subsection{Market Design Properties}
LOSA aims to satisfy several economic properties that ensure reliable, efficient, and fair operation in dynamic AGIN environments: \textit{Individual Rationality (IR)}, \textit{Budget Balance (BB)}, \textit{Truthfulness (Incentive Compatibility, IC)}, and \textit{Computational Efficiency (CE, reflecting the timeliness)}, with detailed definitions in Appendix \ref{Market Design Properties}.

\section{LOSA: A Proactive and Privacy-Preserving Double Auction Framework over AGINs}

This section formalizes LOSA (also see Appendix \ref{alg4}). We first develop the look-ahead and privacy-aware double auction in Phase 1 to generate OSAAs and buyer preference lists, and then describe how these decisions are executed in Phase 2. As Phase 2 only executes pre-established and fallback decisions, the technical focus is mostly on Phase 1.

\subsection{Problem Modeling for Phase 1}
Social welfare (SW) is a fundamental metric for assessing overall efficiency of a service trading market. Accordingly, we define SW as the aggregate utility as follows:
\begin{equation}
	\label{eq20}
	U^{\mathsf{sw}}=\sum_{t\in \bm{\mathcal{T}}}\left(\sum_{b_n\in \bm{\mathcal{B}}}U_{n}^{\mathsf{b},t}+\sum_{s_m\in \bm{\mathcal{S}}}U_{m}^{\mathsf{s},t}+U^{\mathsf{a},t}\right).
\end{equation}
Reaching the global SW optimum is computationally intractable due to the strong coupling induced by long-term system dynamics. Nevertheless, we face sequential state dependency: the feasible decision space in each timeslot (e.g., network topology and reachable participants) is determined by the system state at the end of the previous timeslot. This enables us to reduce the global SW maximization to a series of tractable per-timeslot optimization subproblems. Accordingly, for a representative timeslot $t$, we reformulate:
\begin{equation}
	\label{eq_4_12}
	\hspace{-3mm}U^{\mathsf{sw},t}{=}\sum_{j\in J}\sum_{b_n\in \bm{\mathcal{B}}} \sum_{s_m\in \bm{\mathcal{S}}} x_{m,n}^{j,t} Q_{n}^{j,t}\left(\Gamma_{m,n}^{t}v_{n}^{j}-k_{n}^{j} \xi_{n}^{t}-c_{m}^{j}\right)\hspace{-.3mm},\hspace{-3mm}
\end{equation}
where $\Gamma_{m,n}^{t}v_{n}^{j}$ represents the value gain from service matching, $k_{n}^{j} \xi_{n}^{t}$ denotes the privacy cost, and $c_{m}^{j}$ indicates service provisioning cost. This formulation reveals three optimization dimensions: improving matching quality, reducing privacy costs, and enhancing service delivery efficiency. Moreover, since Phase 1 is conducted prior to demand realization, we define the expected social welfare (ESW) as:
\begin{equation}
	\label{eq_4-13}
	\hspace{-3mm}\overline{U^{\mathsf{sw},t}}{=}\sum_{j\in J}\sum_{b_n\in \bm{\mathcal{B}}}\sum_{s_m\in \bm{\mathcal{S}}}x_{m,n}^{j,t}\mathbbm{q}_{n}^{j,t}\left(\Gamma_{m,n}^{t}v_{n}^{j}-k_{n}^{j} \xi_{n}^{t}-c_{m}^{j}\right),\hspace{-3mm}
\end{equation}
which converts stochastic decision-making into a deterministic optimization through the demand probabilities $\mathbbm{q}_{n}^{j,t}$. Accordingly, $\bm{\mathcal{P}}_1$ for Phase 1 is given by
\begin{equation}
	\label{eq4_14}
	\bm{\mathcal{P}}_1:\mathop{\text{argmax}}\limits_{\bm{X}^{t},\mathbb{C}_{m,n}^{t},\bm{\mathcal{O}}_{n}^{t}} \overline{U^{\mathsf{sw},t}}\\
\end{equation}
\vspace{-5mm}
\begin{subequations}\label{p1}
	\begin{align}			
		\text{s.t.}~~&
		0\leq \sum_{s_m\in\bm{\mathcal{S}}}x_{m,n}^{j,t}\le 1,\ \forall\ b_n\in\bm{\mathcal{B}}, j\in J,\tag{C1}\\
		&0\leq \sum_{b_n\in\bm{\mathcal{B}}}x_{m,n}^{j,t}\le 1,\ \forall\ s_m\in\bm{\mathcal{S}}, j\in J,\tag{C2}\\
		&x_{m,n}^{j,t}\in\left\{0,1\right\},\ \forall\ b_n\in\bm{\mathcal{B}},\forall\ s_m\in\bm{\mathcal{S}}.\tag{C3}\\
		&\hspace{-7mm}\Gamma_{m,n}^{t}v_{n}^{j}-k_{n}^{j} \xi_{n}^{t} \ge c_{m}^{j},\ \forall\ b_n\in\bm{\mathcal{B}},\forall\ s_m\in\bm{\mathcal{S}},x_{m,n}^{j,t}=1.\tag{C4}
	\end{align}
\end{subequations}

\noindent In $\bm{\mathcal{P}}_1$, constraints (C1)-(C2) enforce one-to-one matching per service type, while (C3) ensures binary matching decisions. Constraint (C4) guarantees that each selected match contributes positively to the ESW by requiring the buyer's net valuation to exceed the seller's cost. Problem $\bm{\mathcal{P}}_1$ is a \textit{mixed-integer non-linear programming (MINLP)} problem (where matching variables $x_{m,n}^{j,t}$ are discrete binary, while prices and ESW are continuous), which is typically NP-hard \cite{10681251}. Its complexity stems from two main aspects: \textit{(i)} the binary nature of elements in $\bm{X}^{t}$ leads to a combinatorial search space; and \textit{(ii)} the nonlinear coupling between privacy preservation and matching accuracy induces pronounced non-linear coupling effects. To address these challenges, we develop three core modules in Phase 1 of LOSA as described next.

\subsection{Solution Design for Phase 1}
\label{subs5.2}  
We propose a decision engine consisting of three synergistic modules (see more details in Appendix \ref{alg4}): Module A performs demand-aware UAV repositioning via mobility prediction. Module B determines ESW-maximizing buyer–seller matching under updated system states. Module C implements VCG-based pricing to ensure IC, IR, BB, and CE. Together, they jointly optimize positioning, allocation, and payment under privacy–efficiency tradeoffs.

\begin{figure}[]
	\centering
	\includegraphics[trim=0cm 0cm 0cm 0cm, clip, width=\columnwidth]{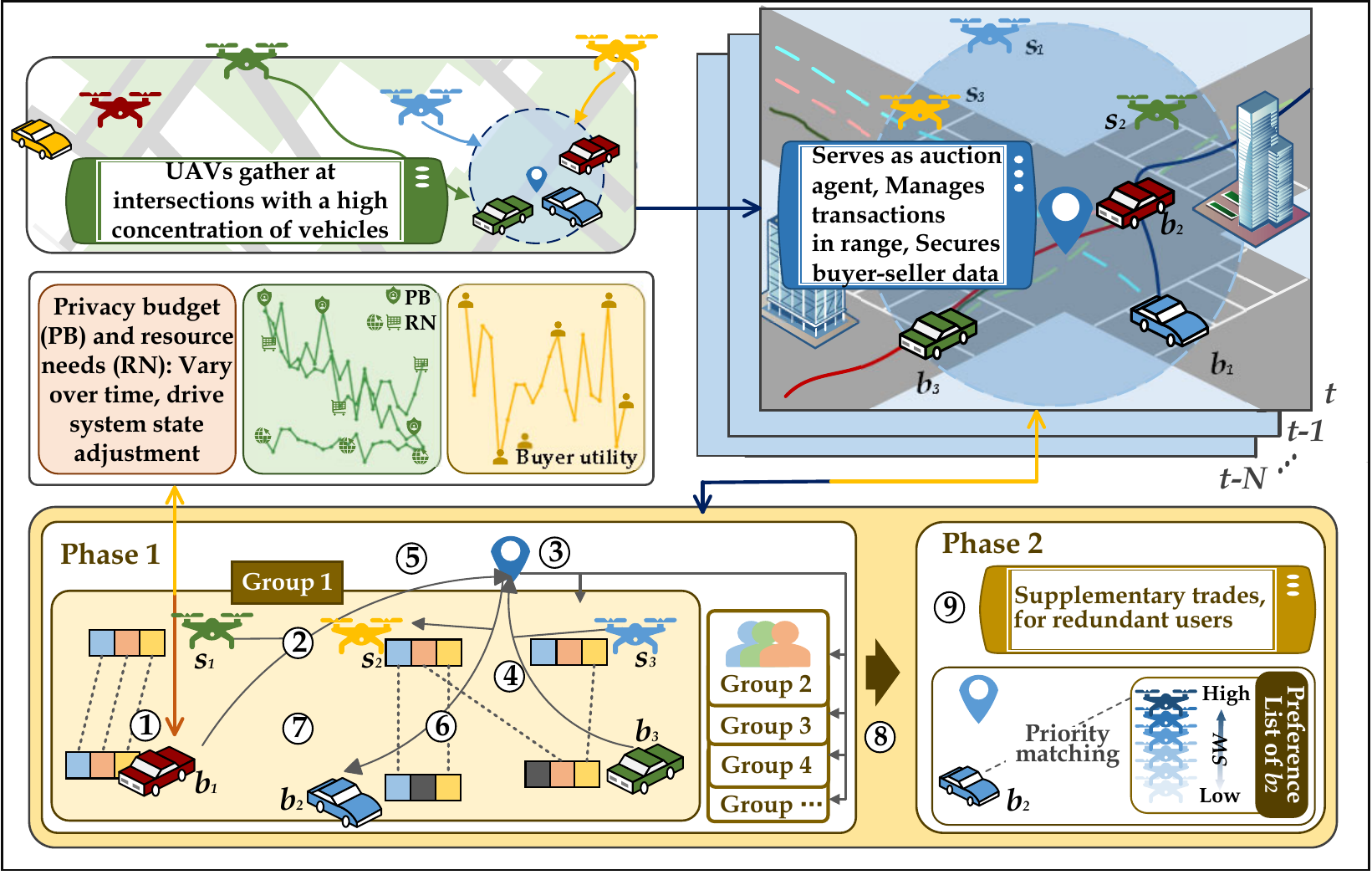} 
	\vspace{-7mm}
	\caption{Diagram of the overall workflow of the LOSA mechanism, integrating modules A, B, and C. 
		Following the UAV deployment in Module A, the workflow proceeds as: \textit{(i)} Module B (Buyer--Seller Matching): covers steps \ding{172}--\ding{176}, where \ding{172} buyers update dynamic parameters (privacy budget $\xi_{n}^{t}$ and demand $\bm{\mathcal{Q}}_{n}^{t}$); \ding{173}--\ding{174} participants upload trajectories (e.g., virtual path $\hat{\bm{\mathcal{L}}}_{n}^{\mathsf{b},t}$) for similarity-based grouping; and \ding{175}--\ding{176} the auctioneer collects bids/asks to execute the designed auction process. \textit{(ii)} Module C (VCG-based Pricing): covers steps \ding{177}--\ding{178}, including \ding{177} generating binding OSAAs $\mathbb{C}_{m,n}^{t}$ with defined service matching and pricing structures; and \ding{178} generating preference lists $\bm{\mathcal{O}}_{n}^{t}$ for backup. \textit{(iii)} Phase 2 (Real-time Execution): covers steps \ding{179}--\ding{180}, where \ding{179} participants proceed to the designated intersection; and \ding{180} RSUs orchestrate supplementary trades for contingency needs (e.g., backup users).}
	\label{fig_3}
\end{figure}

\subsubsection{Module A: UAV Trajectory Planning for the Next Timeslot}

Since the destination of buyers at the end of the next timeslot is determined by their mobility patterns, we proactively guide sellers toward favorable locations to meet anticipated demand. Algorithm \ref{alg4_2} implements a distributed decision-making mechanism that enables each UAV to optimize its trajectory with the objective of maximizing the ESW. The procedure consists of three main steps:

\noindent$\bullet$ \textit{Step 1: Topology identification (lines 3-7).} 
For each seller $s_m \in \bm{\mathcal{S}}$, the algorithm first retrieves its current location at timeslot $t-1$. Based on the grid-structured road network, it then determines the set of feasible intersections that the UAV can reach in the next timeslot, denoted by $N^{locs}$. This candidate set typically includes the current intersection and its four immediate neighbors (i.e., north, south, east, and west), reflecting the mobility constraints imposed by the underlying road topology.

\noindent$\bullet$ \textit{Step 2: Marginal contribution evaluation (lines 8-13).} 
For every candidate intersection $l \in N^{locs}$, the seller evaluates the benefit of relocating to $l$ through its \textbf{marginal contribution} to ESW. To avoid the complexity of global optimization, this evaluation is performed locally. Specifically, the seller first estimates the set of competing peers at location $l$, denoted by $\bm{\mathcal{S}}'_l$ (i.e., other sellers that might also move to $l$). Then, it invokes Algorithm \ref{alg4_3} to calculate ESW at intersection $l$ under two scenarios: \textit{(i)} including $s_m$, and \textit{(ii)} excluding $s_m$. The difference between these two values defines the marginal contribution, denoted by $\text{Mar}^\text{Con}$, which quantifies the net gain in ESW attributable to $s_m$ choosing location $l$.

\noindent$\bullet$ \textit{Step 3: Dual-criteria decision making (lines 14-18).} 
To determine the final target location, the seller employs a two-layer selection logic. It first filters the candidate locations to find those offering the maximum $\text{Mar}^\text{Con}$. If multiple locations achieve the same maximum value, a secondary tie-breaking criterion is applied based on a \textit{minimum congestion principle}: the seller selects the intersection with the smallest predicted number of participants. This mechanism not only preserves the objective of maximizing ESW but also mitigates potential resource congestion by promoting a more balanced spatial distribution of sellers.

Once all sellers complete the above steps, Algorithm \ref{alg4_2} outputs the global location deployment plan $Loc^{t}$. Since the decision process for each seller depends solely on local information, this module naturally supports distributed parallel computing. The corresponding computational complexity\footnote{The computational complexity is derived as follows: The algorithm iterates through each seller in $\bm{\mathcal{S}}$. For every seller, it explores a fixed neighborhood of (i.e., 5) candidate locations (the current position plus four adjacent directions). The core operation at each candidate location involves invoking SW calculation subroutine (denoted as $T_{\text{Alg\ref{alg4_3}}}$) to determine the marginal contribution. Therefore, the total time cost is the product of the number of sellers, the constant search space size (i.e., 5), and the runtime of a single subroutine execution.} is $O(|\bm{\mathcal{S}}| \times 5 \times T_{\text{Alg\ref{alg4_3}}})$, where $T_{\text{Alg\ref{alg4_3}}}$ reflects the complexity of ESW calculation.

\begin{algorithm}[]
	{\footnotesize
		\caption{Module A: UAV Trajectory Planning for the Next Timeslot}
		\label{alg4_2}
		\LinesNumbered
		
		{\bf{Input :}} 
		$\bm{\mathcal{S}}$: Set of all UAVs, $Loc^{t-1}$: Set of UAV coordinates at timeslot $t-1$, Set of all buyer parameters
		
		{\bf{Output :}} 
		$Loc^{t}$: Set of UAV coordinates at timeslot $t$
		
		{\bf{Initialization :}} 
		$Cand \leftarrow \emptyset$
		
		\For{all UAV $s_m \in \bm{\mathcal{S}}$}{
			$cur^{loc} \leftarrow Loc^{t-1}(s_m)$\
			
			$N^{locs} \leftarrow \{cur^{loc}\} \cup \text{GetAdjacentLocs}(cur^{loc})$\
			
			$MC \leftarrow \emptyset$\
			
			\For{all candidate intersection $l \in N^{locs}$}{
				Predicted member set $\bm{\mathcal{S}}'_l \leftarrow \{s_k \mid s_k \text{ potentially moves to } l \} \setminus \{s_m\}$\
				
				Calculate baseline ESW $W_{base} \leftarrow \text{Algorithm \ref{alg4_3}}(\bm{\mathcal{S}}'_l)$\
				
				Calculate new contribution $W_{new} \leftarrow \text{Algorithm \ref{alg4_3}}(\bm{\mathcal{S}}'_l \cup \{s_m\})$\
				
				Marginal contribution $\text{Mar}^\text{Con} \leftarrow W_{new} - W_{base}$\
				
				$MC \leftarrow MC \cup \{(l, \text{Mar}^\text{Con}, |\bm{\mathcal{S}}'_l|)\}$\
			}
			
			\textbf{\# Filter optimal candidates:}\
			
			$max\_\text{Mar}^\text{Con} \leftarrow \max\{\text{Mar}^\text{Con} \mid (l, \text{Mar}^\text{Con}, n) \in MC\}$\
			
			$Best \leftarrow \{(l, n) \mid (l, \text{Mar}^\text{Con}, n) \in MC \text{ and } \text{Mar}^\text{Con} = max\_\text{Mar}^\text{Con}\}$\
			
			Select $target\_loc \leftarrow \arg\min_{(l,n)\in Best} n$\
			
			Update $Loc^{t}(s_m) \leftarrow target\_loc$
		}
		
		\textbf{Return} $Loc^{t}$
	}
\end{algorithm}

\subsubsection{Module B: Double Auction for Timeslot $t$ (Buyer--Seller Matching)}

Given the sets of buyers and sellers assigned to each intersection by Module A, Module B (Algorithm \ref{alg4_3}) performs the core matching process within the auction mechanism, namely, determining the winning buyer--seller pairs. This module comprises four steps:

\noindent$\bullet$ \textit{Step 1: Dynamic state update and trajectory obfuscation (lines 3-10).} 
Before entering the core auction process, each buyer updates its internal state. Specifically, the privacy budget $\xi_n^t$ is adjusted according to the proposed feedback-driven stochastic model (line 6), capturing the buyer's current privacy--utility preference. In parallel, the demand state $\bm{\mathcal{Q}}_n^t$ is updated based on historical data (line 7). To protect privacy, the buyer also generates a virtual trajectory $\hat{\bm{\mathcal{L}}}_n^{\mathsf{b},t}$ using location obfuscation model detailed in Section \ref{sec_location_obfuscation} (line 10). This virtual path serves as the basis for following buyer-seller matching process.

\noindent$\bullet$ \textit{Step 2: Similarity-based clustering (lines 11-14).} 
Directly matching buyers and sellers across multiple service types incurs significant computational complexity due to its combinatorial nature. To improve tractability, this step partitions participants into clusters based on their spatio-temporal compatibility. Specifically, we compute the trajectory similarity $\Gamma_{m,n}^t$ between each buyer's virtual trajectory and each seller's proactive trajectory. Based on these similarity scores, a density-peak-based clustering method is applied to group highly compatible participants into clusters $\{\bm{\mathcal{G}}_k\}$ (line 14). This procedure effectively localizes the matching process by restricting it to smaller, high-potential clusters, thereby reducing computational overhead while preserving matching quality.

\noindent$\bullet$ \textit{Step 3: Feasibility screening and preference list construction (lines 15-25).} 
The matching process is performed independently for each service type $j$, such that the fundamental unit of allocation is a buyer--seller--service triplet. For a given service type $j$ within each cluster, buyers are first sorted in descending order of their bids, while sellers are sorted in ascending order of their asking prices. A VCG-inspired screening procedure (Lines 22--25) is then applied to identify the ``key indices'' ($k_b^*, k_s^*$) at which the bid--ask feasibility condition is satisfied (i.e., the marginal buyer's bid covers the marginal seller's ask). All sellers falling within this feasible trading range are added to the buyer's \textit{service-specific} preference list $\bm{\mathcal{O}}_{k_b}^{j}$ (line 25). This list serves a dual purpose: it defines the set of admissible candidates for contract formation in Phase 1, while simultaneously preserving ranked backup options that can be activated in Phase 2 for contingency handling.

\noindent$\bullet$ \textit{Step 4: ESW-driven matching (lines 26-35).} 
The objective of this step is to construct the final matching matrix by selecting winning buyer--seller pairs. All feasible buyer--seller--service triplets are first ranked in descending order according to their marginal contributions to the ESW (line 27). The algorithm iteratively selects pairs from this ordered list, assigning a matching only if neither the buyer nor the seller has been previously matched for the same service type (lines 29–35). Once a seller is assigned to a buyer as the primary contract partner, it is removed from the corresponding buyer's preference list (line 32), thereby distinguishing finalized matches from backup candidates. The process continues until no further feasible assignments remain, yielding the final total ESW $\overline{U^{\mathsf{sw},t}}$ and the final matching matrix $\bm{\mathcal{M}}^t$.

Upon completion, Module B provides confirmed matching relationships. These results are then passed to Module C, where the terms of the corresponding OSAAs are determined to ensure the desired economic properties.

\begin{algorithm}[]
	{\footnotesize
		\caption{Module B: Buyer--Seller Matching for Timeslot $t$ (Single-Intersection Auction)}
		\label{alg4_3}
		\LinesNumbered
		
		{\bf{Input :}} 
		$\bm{\mathcal{B}}^t$: Buyer set at timeslot $t$, $\bm{\mathcal{S}}^t$: Seller set at timeslot $t$, $\bm{\mathcal{L}}_n^{\mathsf{b},t-1}$: Future path of buyer $n$ at $t-1$, $\xi_n^{t-1}$: Privacy budget of buyer $n$ at $t-1$
		
		{\bf{Output :}} 
		$\bm{\mathcal{M}}^t$: Matching result matrix, $\overline{U^{\mathsf{sw},t}}$: Expected SW, $\bm{\mathcal{O}}_{k_b}^{j}$: Buyer preference list
		
		{\bf{Initialization :}} 
		$\bm{\mathcal{M}}^t \leftarrow \emptyset$, $\overline{U^{\mathsf{sw},t}} \leftarrow 0$
		
		\textbf{\# Procedure: Buyer Parameter Update}\\
		\For{each buyer $b_n \in \bm{\mathcal{B}}^t$}{
			$\xi_n^t \leftarrow \xi_n^{t-1} + \Delta \xi_n$ \tcp*{Update privacy budget via Markov process}
			$\bm{\mathcal{Q}}_n^t \leftarrow \text{UpdateDemand}(\bm{\mathcal{Q}}_n^{t-1}, \bm{\mathcal{L}}_n^{\mathsf{b},t-1})$ \tcp*{Update dynamic demand model}
		}
		
		\textbf{\# Procedure: Path Obfuscation}\\
		\For{each buyer $b_n \in \bm{\mathcal{B}}^t$}{
			$\hat{\bm{\mathcal{L}}}_n^{\mathsf{b},t} \leftarrow \text{GenerateObfuscatedPath}(\bm{\mathcal{L}}_n^{\mathsf{b},t}, \xi_n^t)$ \tcp*{Three-stage location obfuscation model}
		}
		
		\textbf{\# Procedure: Member Grouping}\\
		\For{$(b_n, s_m) \in \bm{\mathcal{B}}^t \times \bm{\mathcal{S}}^t$}{
			$\Gamma_{m,n}^t \leftarrow \frac{\hat{\bm{\mathcal{L}}}_n^{\mathsf{b},t} \cap \bm{\mathcal{L}}_m^{\mathsf{s},t}}{|\hat{\bm{\mathcal{L}}}_n^{\mathsf{b},t}|}$ \tcp*{Calculate path similarity}
		}
		$\{\bm{\mathcal{G}}_k\} \leftarrow \text{Cluster}(\Gamma^t, \delta_{\Gamma})$ \tcp*{$\delta_{\Gamma}$ is similarity threshold}
		
		\textbf{\# Procedure: VCG Matching}\\
		\For{each group $\bm{\mathcal{G}}_k \in \{\bm{\mathcal{G}}_k\}$}{
			\For{each service type $j=1$ \KwTo $J$}{
				Partition $\bm{\mathcal{B}}_{j,k},\bm{\mathcal{S}}_{j,k} \subseteq \bm{\mathcal{G}}_k$\
				
				\textbf{Generate sorted lists:}\\
				$\mathcal{L}_b \leftarrow \langle b_n: bid_{n}^{j} \downarrow \rangle$, $\mathcal{L}_s \leftarrow \langle s_m: ask_{m}^{j} \uparrow \rangle$\
				
				\textbf{Determine key indices:}\\
				\For{$k_b=1$ \KwTo $|\mathcal{L}_b|$}{
					\For{$k_s=1$ \KwTo $|\mathcal{L}_s|$}{
						\If{$bid_{k_b} \ge ask_{k_s}$ and boundary conditions met}{
							Record $(k_b^*, k_s^*)$; Generate preference list $\bm{\mathcal{O}}_{{k_b}}^{j}\leftarrow\mathcal{L}_s $\
						}
					}
				}
				
				$\bm{\mathcal{K}}_B \leftarrow \emptyset, \bm{\mathcal{K}}_S \leftarrow \emptyset$\
				
				\textbf{Generate SW sorted list:} $\mathcal{L}_{sw} \leftarrow \langle (k_b,k_s): SW_{k_b,k_s}^{j} \downarrow \rangle$\
				
				\textbf{Execute non-repetitive matching:}\\
				\For{$(k_b,k_s) \in \mathcal{L}_{sw}$}{
					\If{$k_b \notin \bm{\mathcal{K}}_B$ and $k_s \notin \bm{\mathcal{K}}_S$}{
						$\bm{\mathcal{M}}^t[k_b,k_s] \leftarrow 1$\\
						$\bm{\mathcal{O}}_{{k_b}}^{j} \leftarrow \bm{\mathcal{O}}_{{k_b}}^{j} \setminus k_s$\\
						$\bm{\mathcal{K}}_B \leftarrow \bm{\mathcal{K}}_B \cup \{k_b\}$\\
						$\bm{\mathcal{K}}_S \leftarrow \bm{\mathcal{K}}_S \cup \{k_s\}$\\
						$\overline{U^{\mathsf{sw},t}} \leftarrow \overline{U^{\mathsf{sw},t}} + \mathbbm{q}_{n}^{j,t}\left(\Gamma_{m,n}^{t}v_{n}^{j}-k_{n}^{j} \xi_{n}^{t}-c_{m}^{j}\right)$\
					}
				}
			}
		}
		
		\textbf{Return} $\bm{\mathcal{M}}^t$, $\overline{U^{\mathsf{sw},t}}$, $\{\bm{\mathcal{O}}_{k_b}^{j}\}$
	}
\end{algorithm}

\subsubsection{Module C: OSAA Term Determination and Pricing for Timeslot $t$}

To ensure key economic properties (i.e., motivating participants to report true costs and valuations), Module C (Algorithm \ref{alg4_4}) borrows idea from the VCG-based pricing \cite{11115158} to calculate specific payments for buyers and revenues for sellers through three steps:

\noindent$\bullet$ \textit{Step 1: ESW decomposition (lines 1-3).} 
To quantify the economic impact of individual transactions, we decompose the ESW into pairwise contributions. Specifically, for each matched buyer--seller--service triplet $(s_m, b_n, j)$ in the matching matrix $\bm{\mathcal{M}}^t$, the algorithm computes its marginal ESW contribution using (\ref{eq_4-13}), denoted by $\Delta U_{m,n}^{\mathsf{sw},j}$ (line 3). This term captures the net contribution of the corresponding transaction to the overall ESW, serving as the fundamental building block for subsequent pricing decisions.

\noindent$\bullet$ \textit{Step 2: Buyer payment determination (lines 7-13).} 
For each winning buyer $b_n$, the payment is computed based on the VCG externality principle, i.e., the impact of $b_n$'s participation on the welfare of other participants. Specifically, for each service type $j$, we construct a counterfactual scenario in which $b_n$ is excluded (i.e., setting $\mathbbm{q}_{n}^{j,t} = 0$ and defining the reduced buyer set $\bm{\mathcal{B}}'$). The matching process is then re-executed to obtain the corresponding counterfactual ESW, denoted by $\overline{U_{-n}^{\mathsf{sw}}}$. The final payment is determined by the difference between: \textit{(i)} the ESW achieved by all other participants in the counterfactual scenario, and \textit{(ii)} the ESW of those same participants in the actual outcome (line 12). This construction ensures that each buyer is charged according to the marginal externality it imposes on the system, thereby enforcing IC.

\noindent$\bullet$ \textit{Step 3: Seller revenue determination (lines 14-20).} 
Similarly, for each winning seller $s_m$, the revenue is computed based on its marginal contribution to the ESW. Specifically, for each service type $j$, we construct a counterfactual scenario in which $s_m$ is excluded (e.g., by removing it from the seller set or setting its ask to an effectively prohibitive value). The matching process is then re-executed to obtain the corresponding counterfactual ESW, denoted by $\overline{U_{-m}^{\mathsf{sw}}}$. The revenue is then determined by the difference between: \textit{(i)} the ESW achieved by all other participants in the actual outcome, and \textit{(ii)} the ESW of those same participants in the counterfactual scenario (line 19).

\begin{algorithm}[]
	{\footnotesize
		\caption{Module C: OSAA Term Determination and Pricing for Timeslot $t$}
		\label{alg4_4}
		\LinesNumbered
		
		{\bf{Input :}} 
		$\bm{\mathcal{M}}^t$: Matching matrix from Algorithm \ref{alg4_3}, $\bm{\mathcal{B}}^t$: Buyer set, $\bm{\mathcal{S}}^t$: Seller set, $\{\Gamma_{m,n}^t\}$: Path similarity, $\{\mathbbm{q}_n^{j,t}\}$: Demand intensity, $\{v_n^j\}$: Service valuation, $\{k_n^j\}$: Privacy cost coefficient, $\{c_m^j\}$: Operation cost, $\{\xi_n^t\}$: Privacy budget, $\overline{U^{\mathsf{sw},t}}$: Expected SW
		
		{\bf{Output :}} 
		$\bm{p}_{m,n}$: Buyer payment set, $\bm{r}_{m,n}$: Seller revenue set
		
		{\bf{Initialization :}} 
		$\bm{p}_{m,n} \leftarrow \emptyset$, $\bm{r}_{m,n} \leftarrow \emptyset$, $\Delta U_{m,n}^{\mathsf{sw},j} = x_{m,n}^{j,t} \cdot \mathbbm{q}_n^{j,t} \cdot (\Gamma_{m,n}^t v_n^j -k_n^j\xi_n^t -c_m^j)$
		
		\textbf{\# Procedure: Exclusive Matching ($\bm{\mathcal{B}}', \bm{\mathcal{S}}'$)}\\
		Replicate Algorithm \ref{alg4_3} steps: Execute buyer parameter update, path obfuscation, member grouping, VCG matching\\
		Output: $\bm{\mathcal{M}}'$, $\overline{U^{\mathsf{sw}}}$ \tcp*{New matching based on $\bm{\mathcal{B}}', \bm{\mathcal{S}}'$}
		
		\textbf{\# Procedure: Buyer Pricing}\\
		\For{each service type $j=1$ \KwTo $J$}{
			\For{each buyer $b_n \in \bm{\mathcal{B}}^t$}{
				Construct exclusion set: $\bm{\mathcal{B}}' \leftarrow \bm{\mathcal{B}}^t (\mathbbm{q}_{n}^{j,t} \leftarrow 0)$\\
				Call Exclusive Matching: $(\bm{\mathcal{M}}'_n, \overline{U_{-n}^{\mathsf{sw}}}) \leftarrow \text{Exclusive Matching}(\bm{\mathcal{B}}', \bm{\mathcal{S}}^t)$\\
				Determine VCG payment: $p_{m,n}^{j,t} \leftarrow \overline{U_{-n}^{\mathsf{sw}}} - (\overline{U^{\mathsf{sw},t}} - \Delta U_{m,n}^{\mathsf{sw},j})$ \tcp*{$\Delta U_{m,n}^{\mathsf{sw},j}$ is original matching contribution}
				$\bm{p}_{m,n} \leftarrow \bm{p}_{m,n} \cup \{p_{m,n}^{j,t}\}$\
			}
		}
		
		\textbf{\# Procedure: Seller Pricing}\\
		\For{each service type $j=1$ \KwTo $J$}{
			\For{each seller $s_m \in \bm{\mathcal{S}}^t$}{
				Construct exclusion set: $\bm{\mathcal{S}}' \leftarrow \bm{\mathcal{S}}^t(ask_{m}^{j} \leftarrow \text{max}[ask^{j}])$\\
				Call Exclusive Matching: $(\bm{\mathcal{M}}'_m, \overline{U_{-m}^{\mathsf{sw}}}) \leftarrow \text{Exclusive Matching}(\bm{\mathcal{B}}^t, \bm{\mathcal{S}}')$\\
				Calculate VCG revenue: $r_{m,n}^{j,t}  \leftarrow \overline{U^{\mathsf{sw},t}} - \overline{U_{-m}^{\mathsf{sw}}}$\\
				$\bm{r}_{m,n} \leftarrow \bm{r}_{m,n} \cup \{r_{m,n}^{j,t} \}$\
			}
		}
		
		\textbf{Return} $\bm{p}_{m,n}$, $\bm{r}_{m,n}$
	}
\end{algorithm}

\subsection{Real-time Execution of OSAAs and Preference Lists during Phase 2}

Upon the completion of Phase 1 of LOSA, the OSAAs are finalized, specifying the service types, delivery locations, and transaction prices. In the subsequent Phase 2 of LOSA, as buyers and sellers physically arrive at the designated intersections, these pre-established agreements are executed.

In the presence of real-time uncertainties, such as unforeseen demand or contract failures, the system leverages the preference lists generated in Module B of Phase 1 to enable rapid supplementary matching. This mechanism ensures service continuity without requiring a full re-execution of the auction. Since the heavy decision-making lifting (matching and pricing) has been “pre-computed”, Phase 2 of LOSA focuses on rapid execution and lightweight adjustment, involving the following steps:

\noindent$\bullet$ \textit{Step 1: On-site verification and OSAA execution.} 
Upon arrival, the system first verifies the actual realization of buyer's stochastic demand $Q_{n}^{j,t}$. If the demand is active (i.e., $Q_{n}^{j,t}=1$), the pre-signed OSAA is immediately executed, and the service is delivered according to $\mathbb{C}_{m,n}^{t}$. Otherwise, the reserved resources of the seller are released for further potential reallocation.

\noindent$\bullet$ \textit{Step 2: Preference-based contingency matching.} 
If a buyer’s demand remains unmet, either due to failing to secure a contract in Phase 1 or an on-site low-probability demand realization, it does not need to wait for a new auction. Instead, the buyer utilizes its preference list $\bm{\mathcal{O}}_{n}^{t}$ generated in Module B of Phase 1 to find backup sellers. to identify feasible backup sellers. The highest-ranked available seller is then selected for immediate service provisioning, ensuring service provisioning continuity with minimal latency.

\noindent$\bullet$ \textit{Step 3: Conflict resolution.} 
Since preference lists are constructed independently, multiple buyers may select the same seller as a fallback, potentially violating the one-to-one matching constraint for each service type. To resolve such conflicts without global recomputation, we adopt a greedy tie-breaking rule: among competing buyer--seller pairs, the one yielding the highest marginal ESW contribution is retained, while the others are discarded or deferred.

By integrating both phases, LOSA is theoretically guaranteed to satisfy key economic properties, including IR, IC, CE, and BB, across both the proactive decision-making and real-time execution phases. Rigorous mathematical proofs establishing these properties are provided in Appendix \ref{adx:4}.

\section{Evaluations}
\label{sec:evaluations}

We next conduct a set of experiments (using MATLAB R2024b) to assess the performance of LOSA. To construct an evaluation environment that captures diverse real-world scenarios, including urban road networks, intersection blind spots, and highway settings, we integrate three datasets: DAIR-V2X\footnote{DAIR-V2X Dataset, Tsinghua University. \url{https://air.tsinghua.edu.cn/DAIR-V2X}}, HighD\footnote{HighD Dataset Official Website. \url{https://www.highd-dataset.com}}, and RCooper\footnote{RCooper Open Source Repository. \url{https://github.com/AIR-THU/DAIR-RCooper}}.

\subsection{Constructing the Simulation Environment from Real-World Datasets}

We employ spatio-temporal alignment techniques to integrate multiple heterogeneous datasets into a unified framework. Each dataset contributes a complementary dimension (mobility, demand dynamics, and sensing heterogeneity), enabling a holistic evaluation of the proposed framework. The datasets we consider are as follows:
\textit{(i) DAIR-V2X.} The DAIR-V2X dataset provides cooperative perception data over 10 km of urban roads in the Beijing demonstration zone, including vehicle trajectories with tracking IDs. We extract buyer and seller trajectories, denoted as $\bm{\mathcal{L}}_{n}^{\mathsf{b},t}$ and $\bm{\mathcal{L}}_{m}^{\mathsf{s},t}$, respectively, by segmenting raw traces into 10-second clips with a timeslot granularity of $\Delta t = 100$ ms. GPS coordinates are discretized via bilinear interpolation into a 5000 m×5000 m Manhattan grid with a resolution of 200 m, which serves as the unit for the privacy radius $\mathbbm{r}_n$. \textit{(ii) HighD.} The HighD dataset contains lane-changing behaviors of over 110,000 vehicles on German highways. We use its car-following statistics to parameterize the demand triggering probability $\mathbbm{q}_{n}^{j,t}$, modeled as an exponential decay function $\mathbbm{q}_{n}^{j,t} = \lambda_0 e^{-0.03t}$, whose values typically fall within [0.7, 0.95]. \textit{(iii) RCooper.} The RCooper intersection dataset includes 50,000 blind-spot images and 30,000 point-cloud frames. We leverage this dataset to model the heterogeneity in seller cost vectors $\bm{\mathcal{C}}_m$, particularly reflecting sensing, computation, and communication overheads. Additionally, spatial clustering is applied to identify high-demand regions, supporting proactive UAV placement and trajectory planning. Due to space constraints, detailed preprocessing procedures (e.g., trajectory segmentation and interpolation), demand modeling, and the complete set of simulation parameters are provided in Appendix \ref{appendix:exp_details}.

Regarding privacy protection, our parameter settings follows the probabilistic perturbation framework \cite{zhang2024trajectory} and Path Perturbation Theory \cite{10815979}. The privacy radius $\mathbbm{r}_n$ is set to 3-5 grid units (corresponding to a physical distance of 30-50 meters). The discretization steps $\Delta r_n$ and $\Delta \theta_n$ satisfy the proportional constraint $\Delta r_n \propto 1/\mathbbm{r}_n$ (as detailed in Table \ref{tab:sim_params_appendix}). Privacy budget $\xi_{n}^{t}$ follows a dynamic adjustment mechanism with an initial value of $\xi_0=2.5$ and an upper bound $\xi_{\max}=5$, using the proposed update rule in (\ref{eq3}). Finally, the trajectory similarity threshold $\delta_\Gamma$ ($\delta_{\Gamma}$ is the similarity threshold used in Algorithm \ref{alg4_3}) is normalized via the Fréchet distance to ensure that service quality satisfies $\bm{\mathcal{Q}}_{n}^{t} \geq 0.8$ after trajectory obfuscation.

For service transaction parameters, we adopt configurations consistent with prior studies \cite{11204540}.
Specifically, the buyer's service valuation vector $\bm{\mathcal{V}}_n$ and the privacy cost coefficient vector $\bm{\mathcal{K}}_n$ are calibrated under VCG mechanism. We set the number of service types\footnote{According to the IEEE 1609.4 WAVE standard, the 5.9 GHz spectrum provides one Control Channel (CCH) and six Service Channels (SCHs). Since Channel 172 is typically strictly reserved for critical vehicle-to-vehicle (V2V) safety communications, we assign the remaining five SCHs to support the general service provisions.} to $J=5$. For sellers, the service cost vector $\bm{\mathcal{C}}_m$ is calculated using the energy consumption model of RSUs \cite{8428533}.
We also incorporate sensor heterogeneity from the RCooper dataset to validate the tiered pricing strategy.

\subsection{Evaluation Metrics and Benchmark Methods}
\label{subsec:metrics and baselines}

To better evaluate LOSA, four core metrics are used:

\noindent $\bullet$ \textit{Social welfare (SW)}:
SW represents the primary metric of market efficiency, capturing the net utility of the system computed as in (\ref{eq20}).

\noindent $\bullet$ \textit{Time consumed by auction decision-making (TimeADM):}
TimeADM measures the computation time required to produce the final service delivery, which can reflect the time-efficiency gains from trajectory clustering and dynamic privacy-budget adjustment in reducing computational complexity and mitigating bidding conflicts.

\noindent $\bullet$ \textit{Buyer utility (BU):}
BU measures a buyer's net utility, depending on service valuation, payment, and privacy cost computed as in (\ref{eq7}).

\noindent $\bullet$ \textit{Inference error (IE):}
IE serves as the primary metric for evaluating the effectiveness of trajectory privacy preservation. It quantifies the Euclidean distance between an adversary's estimated location (derived from Bayesian inference attacks on the obfuscated virtual trajectory \cite{8634952}) and the buyer's true location.


Then, the following benchmark methods are considered (more details are provided in Appendix \ref{appendix:baseline_details}). 

\noindent $\bullet$ \textit{VCG-based real-time Auction (VRA) \cite{11431960}:}
VRA adopts a single-phase VCG-based real-time auction with proactive UAV path planning, where auctions are triggered only upon intersection arrival and rely solely on instantaneous supply/demand. Compared with LOSA, it highlights the advantage of our two-phase design in ensuring service continuity and handling unexpected demand surges.

\noindent $\bullet$ \textit{Static VCG-based real-time auction (SVRA) :} 
SVRA removes UAV path planning from VRA and enforces fixed UAV trajectories. Auctions are conducted based on static location matching, without pre-negotiation or adaptive coverage. Its comparison with LOSA demonstrates the importance of active path planning on improving SW over dynamism.

\noindent $\bullet$ \textit{No privacy-protection Auction (NPPA) \cite{11073624}:} 
NPPA removes all privacy protection while retaining the two-phase architecture and UAV path planning of LOSA. Buyers submit true trajectories directly, enabling the evaluation of the utility loss introduced by our privacy protection method.

\noindent $\bullet$ \textit{Fixed high privacy budget (FHPB) :} 
FHPB fixes the privacy budget at a high level, yielding high matching accuracy at the cost of increased privacy leakage, helping highlight the benefit of dynamic privacy budget regulation.

\noindent $\bullet$ \textit{Fixed low privacy budget (FLPB) :} 
FLPB enforces the maximum privacy at the cost of matching accuracy. Corresponding comparison tests system robustness under extreme privacy constraints and the ability of dynamic regulation to avoid privacy deadlock.

These form a benchmark suite along three dimensions: mechanism architecture (VRA/SVRA), privacy module (NPPA), and budget dynamics (FHPB/FLPB). Collectively, they enable a systematic ablation-style assessment of LOSA and its key design contributions.

\subsection{Performance Evaluation}
\label{subsec:results}

\noindent $\bullet$ \textit{Analysis of SW.} Fig. \ref{fig:social_welfare} shows the comparison of SW under varying problem scales. We construct 16 different combinations with the number of buyers $N \in \{50, 100, 150, 200\}$ and the number of sellers $M \in \{20, 30, 40, 50\}$ to verify the performance of LOSA.

\textit{At a small transaction scale ($N \leq 100, M \leq 30$)}, VRA yields the highest SW because it performs instantaneous matching on unperturbed real-time information. 
For $N=50$ and $M=20$, LOSA achieves $92.4\%$ of VRA’s SW while substantially improving stability, evidencing that the look-ahead phase reduces decision risk and dampens demand-driven volatility.

\textit{As the scale increases ($N \geq 150, M \geq 40$)}, both VRA and SVRA exhibit pronounced performance degradation: their SW drops by $41.2\%$ and $63.8$, respectively. This is because their decision latency exceeds the maximum allowable intersection time window ($Time_{\max}$), which triggers large-scale transaction timeouts. 
In contrast, LOSA sustains continuous SW growth by leveraging path-similarity clustering and elastic privacy-budget adaptation, outperforming NPPA by $17.0\%$, and demonstrating strong scalability under high concurrency. \textit{For fixed-budget privacy schemes}, FHPB ($\xi_{\max}=5$) exposes excessive location information, and the resulting privacy cost substantially erodes utility, leading to an SW that is $12.4\%$ lower than LOSA at $N=200$ and $M=50$. Conversely, FLPB ($\xi_{\min}=1$) introduces overly aggressive trajectory obfuscation, reducing matching accuracy and effective transaction volume. Overall, the robust SW achieved by LOSA at $N=200$ and $M=50$ confirms its ability to complete high-concurrency trading within the $Time_{\max}$ constraint, whereas other baselines fail to satisfy the stringent latency requirement.

\begin{figure}[]  
	\centering  
	\subfigure[] {
		\label{fig:social_welfare}
		\includegraphics[trim=0cm 0cm 0cm 0cm, clip,width=0.2\textwidth,height=0.18\textwidth]{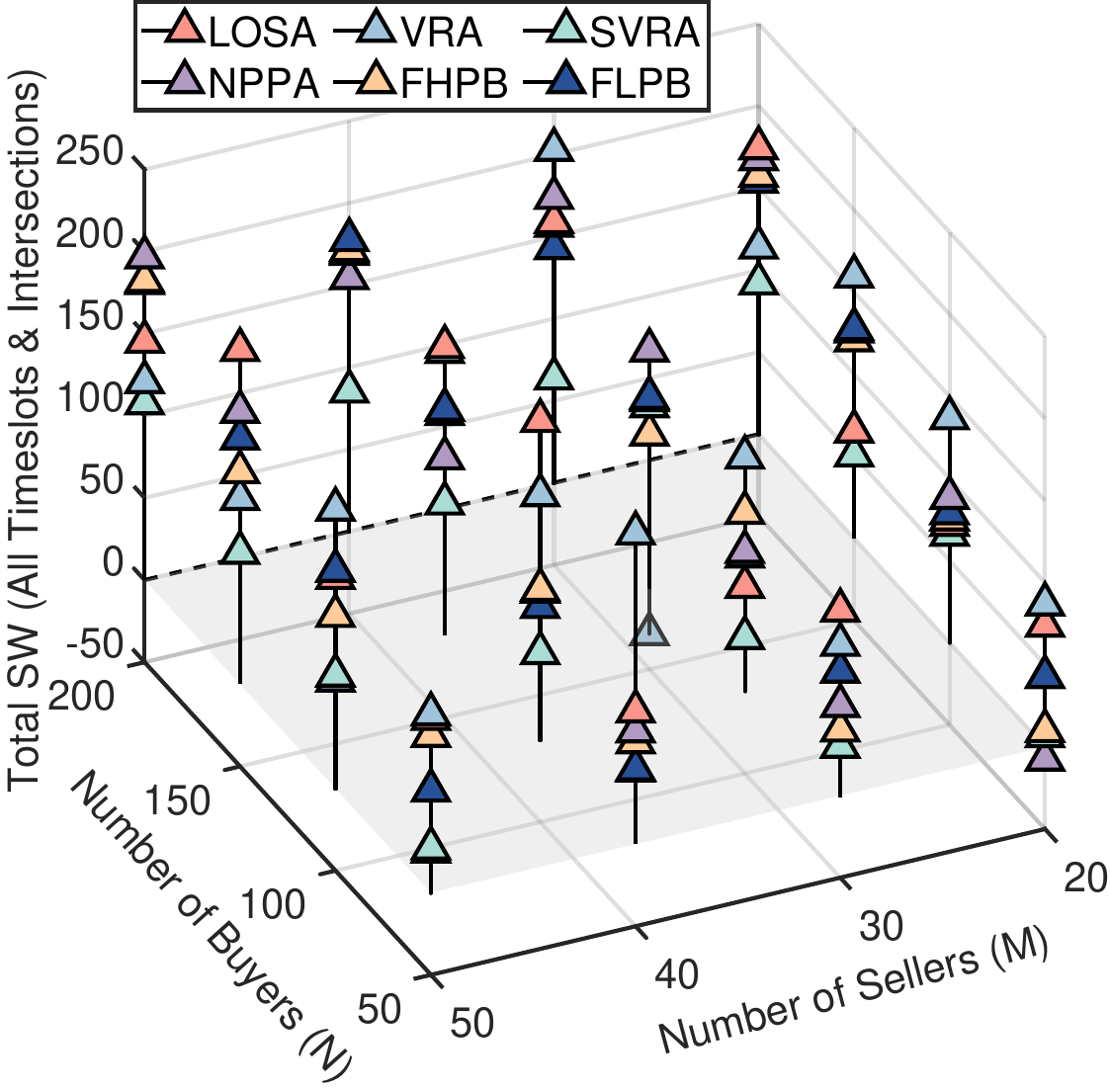}}  
	\subfigure[] {
		\label{fig:time_comparison}
		\includegraphics[trim=0cm 0cm 0cm 0cm, clip,width=0.2\textwidth,height=0.18\textwidth]{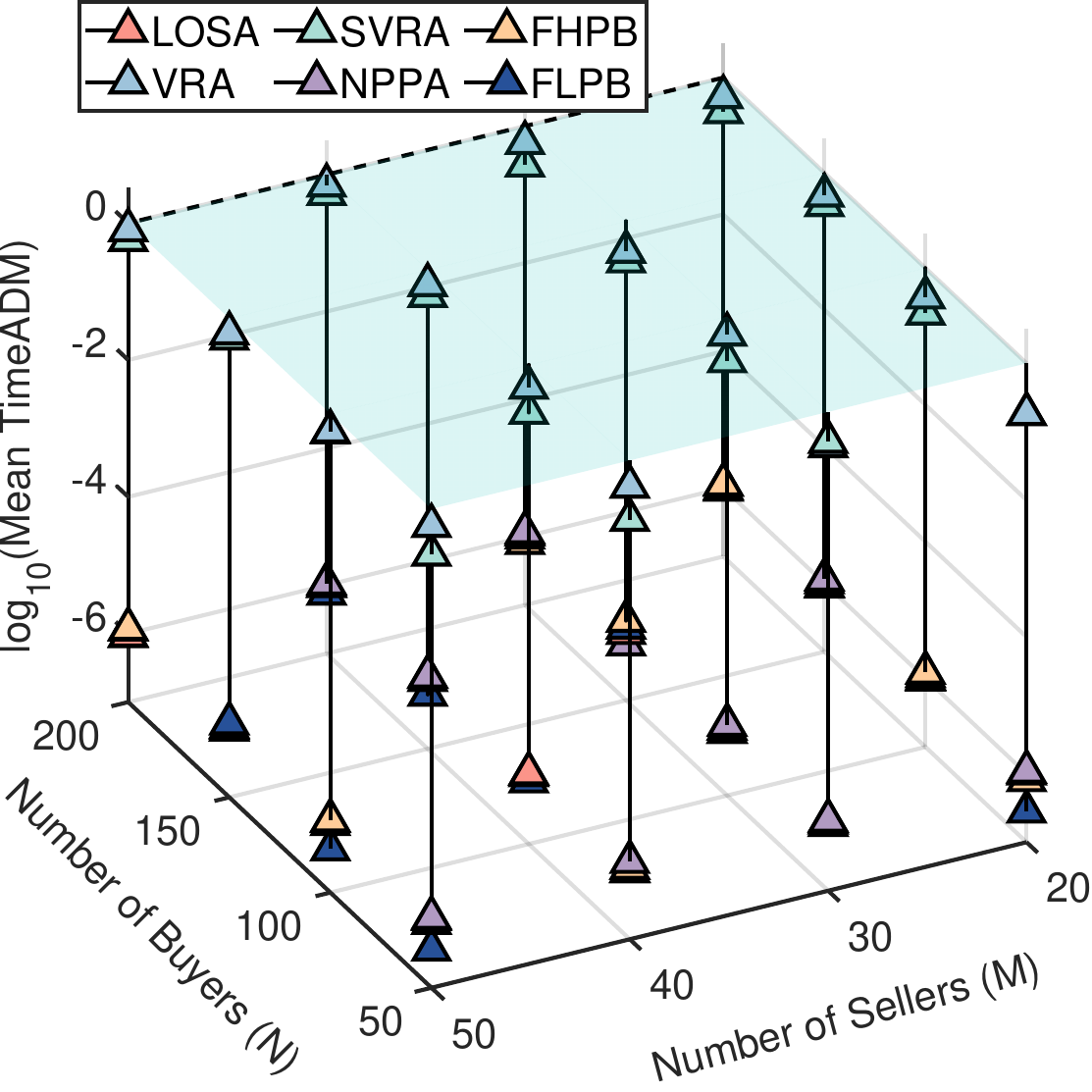}}  
	\caption{Performance comparison on SW and TimeADM.}  
	\label{fig4_3}  
\end{figure}

\noindent $\bullet$ \textit{Analysis of time efficiency.} Fig. \ref{fig:time_comparison} reports the logarithmic decision time, i.e., $\log_{10}(\text{TimeADM})$, averaged over the entire simulation period (all $T=100$ timeslots). We evaluate 16 buyer-seller combinations ($N \in \{50, 100, 150, 200\}, M \in \{20, 30, 40, 50\}$) to validate the timeliness across different methods under a hard intersection deadline of $Time_{\max}=1\text{s}$. While the specific time values depend on the experimental hardware (AMD Ryzen 97945HX), the relative orders of magnitude demonstrate the intrinsic complexity differences.

\textit{When the transaction scale is small ($N \leq 100, M \leq 30$),} the mean $\log_{10}(\text{TimeADM})$ for LOSA and other planning-based variants (NPPA, FHPB, FLPB) remains stable at approximately $-6.0$ (corresponding to $\approx 1\mu \text{s}$). This indicates that the trajectory similarity clustering in Phase 1 of LOSA (with $\delta_\Gamma \geq 0.85$) effectively reduces the bidding conflict rate in Phase 2 of LOSA. In contrast, VRA and SVRA are significantly slower, with $\log_{10}(\text{TimeADM}) \approx -1.0$ ($\approx 100\text{ms}$) and $-1.3$ ($\approx 50\text{ms}$), respectively. Notably, SVRA is slightly faster because its fixed UAV trajectories prune valid transaction nodes, though this comes at the cost of a reduction in SW as seen from Fig. \ref{fig:social_welfare}.

\textit{As the scale expands ($N \geq 150, M \geq 40$),} the runtime of VRA rises rapidly to $\log_{10}(\text{TimeADM}) = 0.0$ ($\approx 1\text{s}$), hitting the upper limit of the time window. At this point, SVRA reaches $-0.097$ ($\approx 800\text{ms}$). The runtime gap between VRA and SVRA validates that the proactive UAV path planning (i.e., Module A of Phase 1 of LOSA, which guides UAV mobility based on marginal contribution evaluation) increases trading opportunities but inevitably adds computational load. Among the planning-based algorithms, FHPB ($\xi_{\max}=5$) incurs the least perturbation and rises only to $-5.2$ ($\approx 0.6\mu \text{s}$). Compared to LOSA ($-5.5$ at extreme scale), this gap reflects the additional yet controlled overhead introduced by our dynamic budget adjustment.

\textit{Under a rather larger scale ($N=200, M=50$)}, based on the settlement results from the simulated transactions, the transaction failure rate of VRA reaches 82.7\% due to timeouts. In contrast, LOSA controls the time cost at $-5.5$ ($\approx 0.3\mu \text{s}$) by using an elastic privacy radius ($\mathbbm{r}_n \in [3, 5]$ grids), representing an optimization compared to the static strategy FLPB ($\xi_{\min}=1$). Overall, results demonstrate that our two-phase design leads to scalability more than three orders of magnitude better than traditional auctions.

\noindent $\bullet$ \textit{Analysis of BU.} 
To investigate the micro-level performance of individual participants, we analyze the evolution of BU from two distinct perspectives: the internal self-adaptation of parameters and the external impact of different privacy protection strategies.

\begin{figure*}[htbp]
	\centering
	\subfigure[] {\includegraphics[trim=0cm 0cm 0cm 0cm, clip, width=.165\textwidth,height=0.12\textwidth]{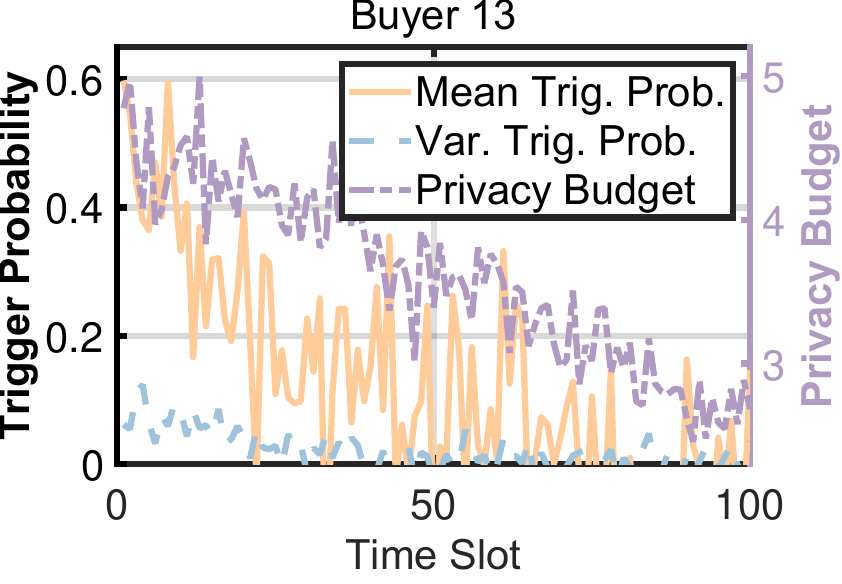}}
	\subfigure[] {\includegraphics[trim=0cm 0cm 0cm 0cm, clip, width=.159\textwidth,height=0.12\textwidth]{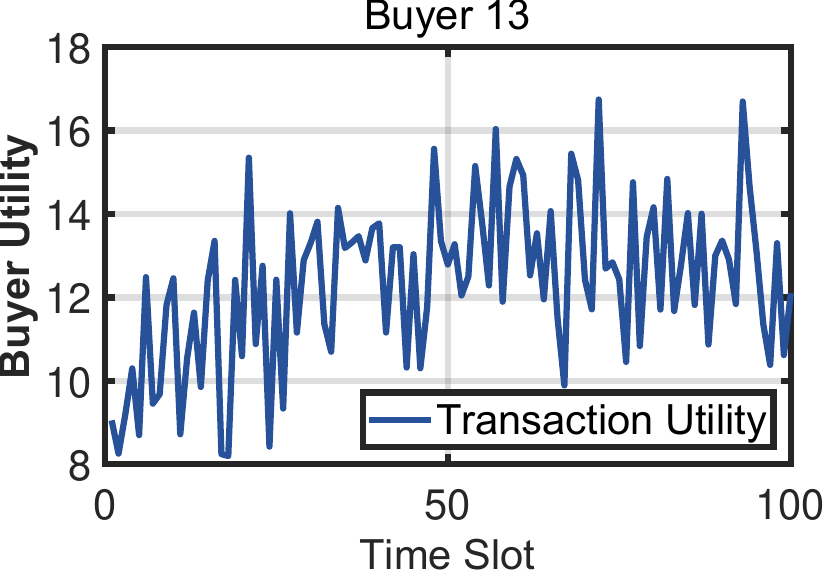}}
	\subfigure[] {\includegraphics[trim=0cm 0cm 0cm 0cm, clip, width=.165\textwidth,height=0.12\textwidth]{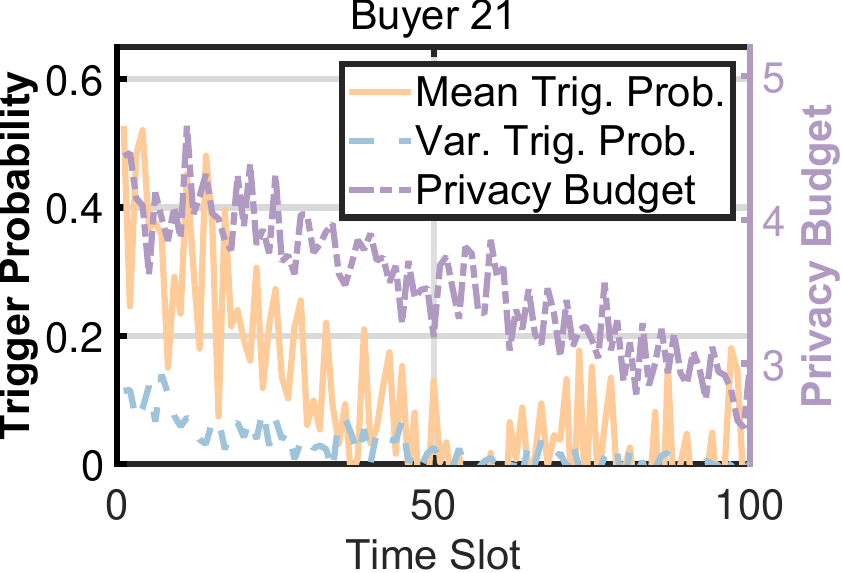}}
	\subfigure[] {\includegraphics[trim=0cm 0cm 0cm 0cm, clip, width=.159\textwidth,height=0.12\textwidth]{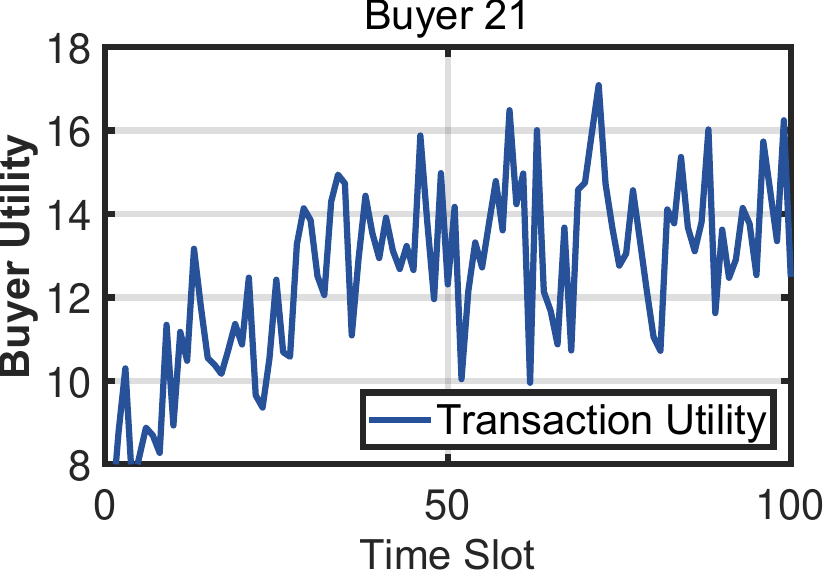}}
	\subfigure[] {\includegraphics[trim=0cm 0cm 0cm 0cm, clip, width=.165\textwidth,height=0.12\textwidth]{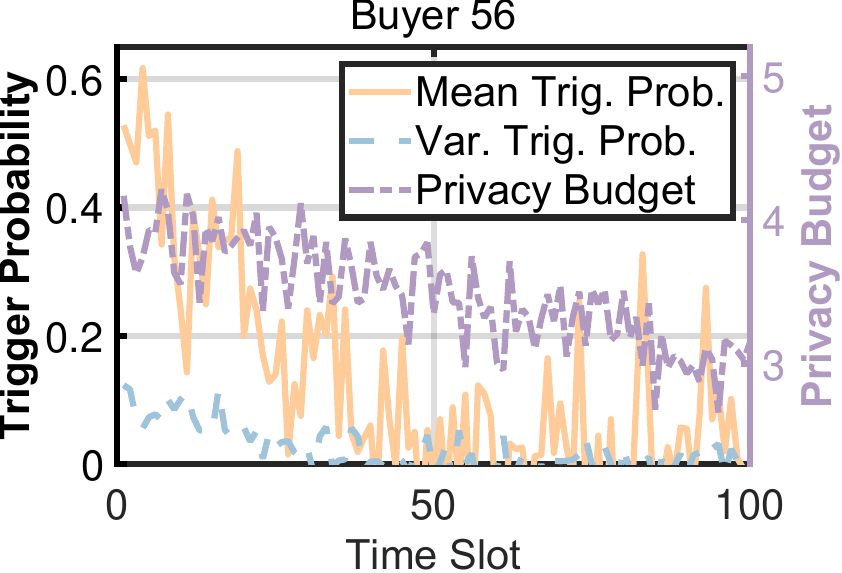}}
	\subfigure[] {\includegraphics[trim=0cm 0cm 0cm 0cm, clip, width=.159\textwidth,height=0.12\textwidth]{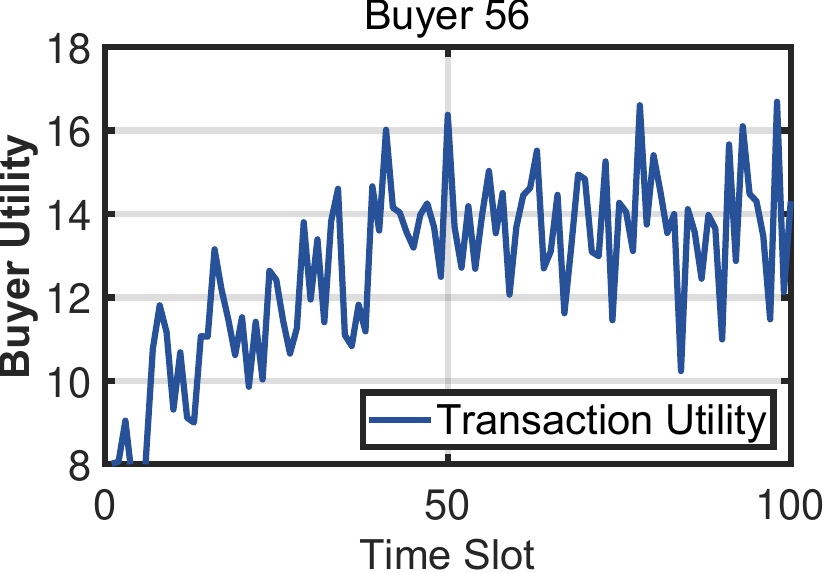}}
	\subfigure[] {\includegraphics[trim=0cm 0cm 0cm 0cm, clip, width=.165\textwidth,height=0.12\textwidth]{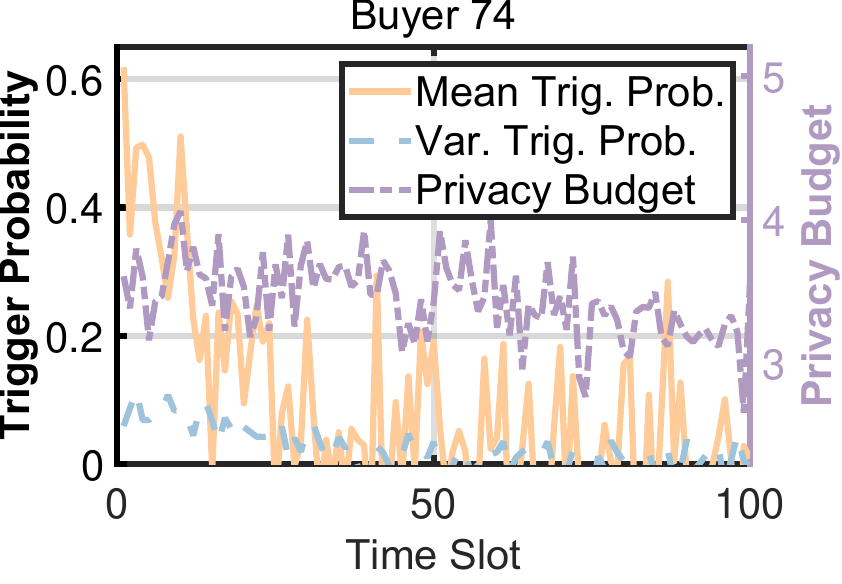}}
	\subfigure[] {\includegraphics[trim=0cm 0cm 0cm 0cm, clip, width=.159\textwidth,height=0.12\textwidth]{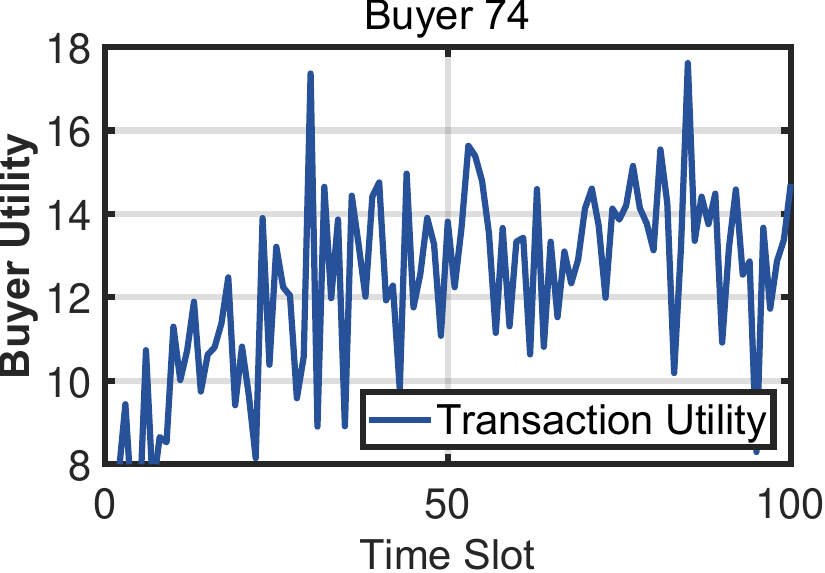}}
	\subfigure[] {\includegraphics[trim=0cm 0cm 0cm 0cm, clip, width=.162\textwidth,height=0.12\textwidth]{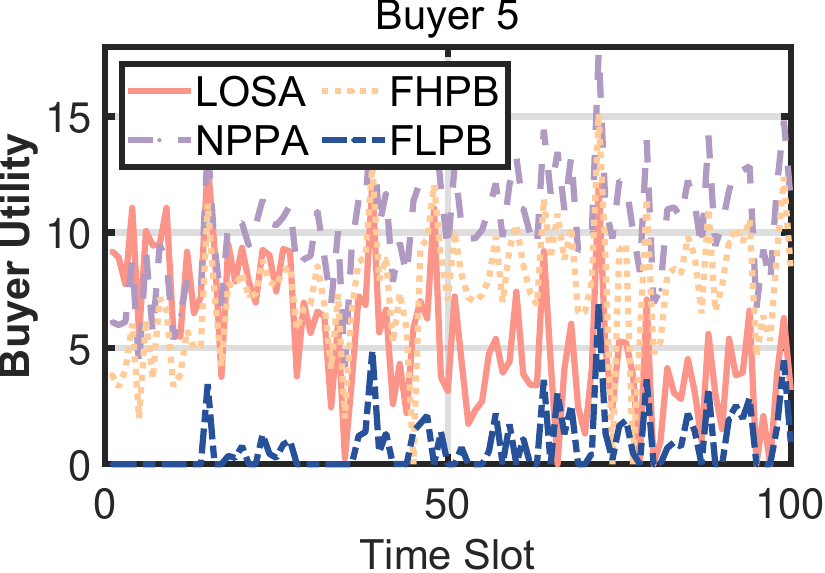}}
	\subfigure[] {\includegraphics[trim=0cm 0cm 0cm 0cm, clip, width=.162\textwidth,height=0.12\textwidth]{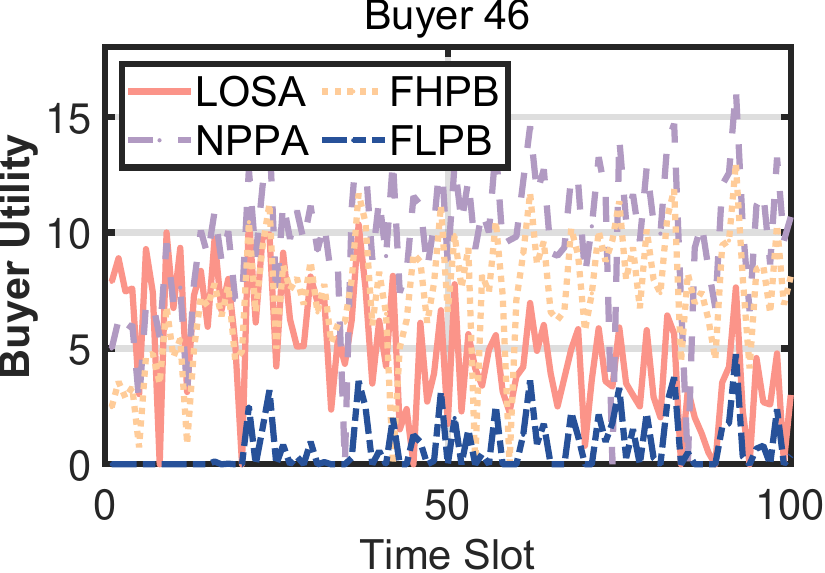}}
	\subfigure[] {\includegraphics[trim=0cm 0cm 0cm 0cm, clip, width=.162\textwidth,height=0.12\textwidth]{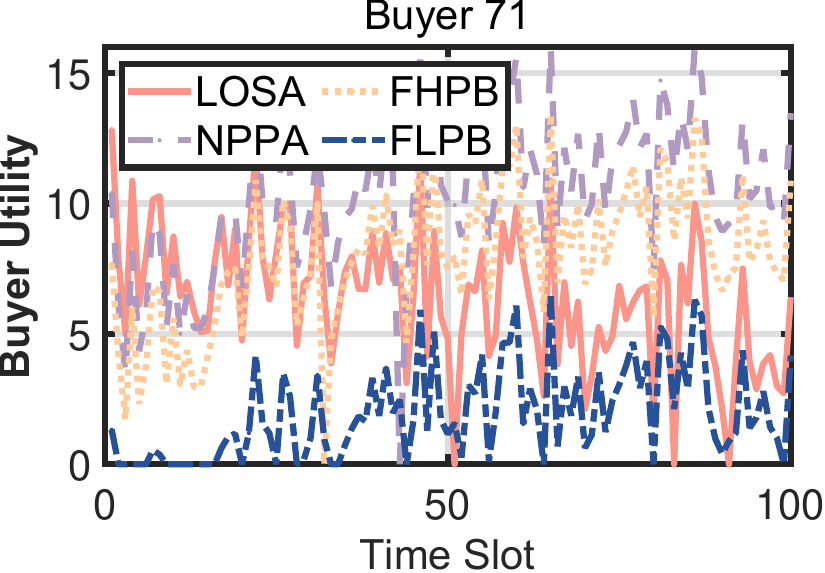}}
	\subfigure[] {\includegraphics[trim=0cm 0cm 0cm 0cm, clip,
		width=.162\textwidth,height=0.12\textwidth]{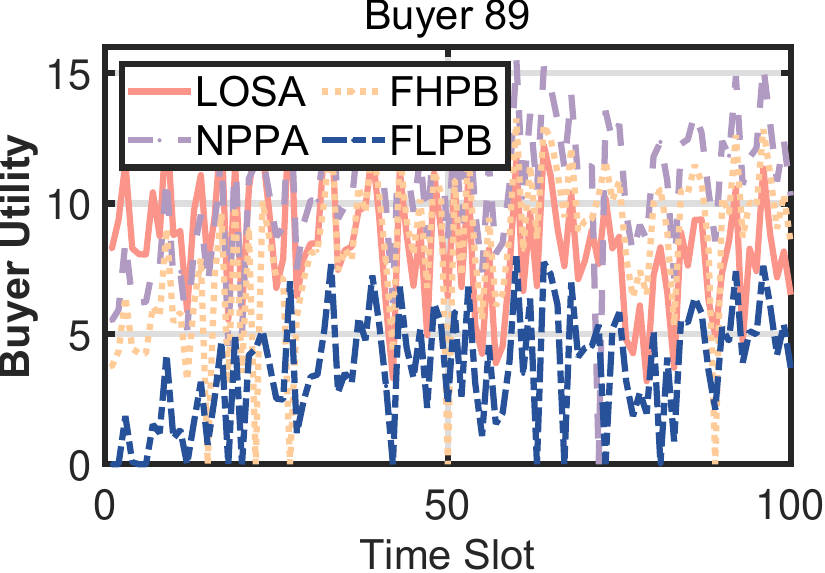}}
	\caption{Evolution of dynamic parameters and BU comparison for randomly selected buyers. (a)–(h) Co-evolution process of trigger probability $\mathbbm{q}_{n}^{j,t}$, privacy budget $\xi_{n}^{t}$, and transaction BU; (i)–(l) Comparison of BU evolution across LOSA, NPPA, FHPB, and FLPB.}
	\label{fig_E1}
\end{figure*}

\noindent
\textit{(i) Impact of dynamic parameter evolution:} Figs. \ref{fig_E1}(a)-\ref{fig_E1}(h) depict the parameter evolution of four randomly selected buyers upon considering 100 buyers and 40 sellers under LOSA. In detail, for \textit{buyer 13}, the mean demand triggering probability $\mathbbm{q}_{13}^{j,t}$ decreases from 0.52 to 0.33 within the first 15 timeslots, and the variance converges from 0.128 to 0.024. This indicates that the buyer effectively narrows the gap in service demands by adjusting the privacy radius $\mathbbm{r}_n$ (expanding from 3 grids to 4.2 grids, refer to Section \ref{sec_location_obfuscation}). Meanwhile, its privacy budget $\xi_{n}^{t}$ decays linearly at a rate of 0.023 per timeslot. This leads to a 7.2\% drop in BU at timeslot 40, which is consistent with the ``budget consumption vs. BU loss'' trade-off characteristic of the elastic probabilistic perturbation framework. \textit{Buyer 21} exhibits a dynamic utility optimization and adaptation process. During the initial phase (timeslots 1--30), while the privacy budget is maintained at a relatively high level (averaging above 4.0), the BU value undergoes a steady growth from 8.0 to approximately 14.0. In the second phase (timeslots 40--100), the BU enters a higher fluctuation band between 11.0 and 17.0; interestingly, this sustained high utility is achieved even as the privacy budget gradually decays toward 3.0. Furthermore, after timeslot 60, the mean demand triggering probability $\mathbbm{q}_{21}^{j,t}$ drops sharply and converges to a near-zero state, reflecting that the buyer's cumulative service requirements are effectively fulfilled, which aligns with the observed stabilization in BU.

Comparison between \textit{buyer 56} and \textit{buyer 74} reveals how privacy cost sensitivity influences utility stability under the VCG-based pricing mechanism. As shown in Figs. \ref{fig_E1}(e) and \ref{fig_E1}(g), both buyers maintain their privacy budgets $\xi_{n}^{t}$ within a relatively stable range between 3.0 and 4.0 for the majority of the simulation. However, their BU profiles exhibit distinct characteristics: \textit{buyer 56} achieves a more consistent and robust utility growth, stabilizing between 12.0 and 16.0 after timeslot 40 (Fig. \ref{fig_E1}(f)). In contrast, \textit{buyer 74}, which possesses a higher privacy cost coefficient ($k_n = 3.2$), displays significantly higher BU volatility (Fig. \ref{fig_E1}(h)). Notably, \textit{buyer 74} experiences a dramatic utility spike near timeslot 30, followed by frequent oscillations. This suggests that for privacy-sensitive users, the realized utility is more susceptible to the stochastic fluctuations of demand triggering probabilities $\mathbbm{q}_{n}^{j,t}$, even when the privacy budget remains steady. Furthermore, the results indicate that as the demand triggering probability converges (Figs. \ref{fig_E1}(e) and \ref{fig_E1}(g)), the BU for both buyers tends to fluctuate within a higher-level band rather than declining, which demonstrates that the proposed framework effectively preserves long-term service value despite the cumulative privacy cost.

Regarding the operational efficiency of the framework, the decision-making latency for all participants was monitored during the simulation. The results indicate that the total computation time per timeslot never exceeded 0.83 seconds (with a standard deviation of 0.12 seconds), which strictly satisfies the real-time constraints required for intersection-level service provisioning in dynamic AGINs. 

Furthermore, the interaction between spatial optimization and utility is evident in the individual profiles. For instance, as observed in Fig. \ref{fig_E1}(d), \textit{buyer 21} achieves a significant utility peak (approaching 17.0) near timeslot 75. While the privacy budget $\xi_{n}^{t}$ in this phase has already experienced cumulative decay (as shown in Fig. \ref{fig_E1}(c)), the high realized utility demonstrates the effectiveness of the underlying trajectory similarity matching. By fine-tuning the spatial resolution parameters (e.g., the angular increment $\Delta \theta_n$ in Step 3), the auctioneer is able to identify high-quality matches even under coarser location information. This confirms that the look-ahead phase effectively leverages the travel interval to perform spatial compensations, thereby mitigating the potential utility loss induced by strict privacy-preserving constraints.

\noindent\textit{(ii) Impact of privacy protection strategies:} In Figs. \ref{fig_E1}(i)--\ref{fig_E1}(l), the BU evolution of four representative buyers is compared across different strategies. As expected, NPPA (the purple dashed line) consistently serves as the empirical upper bound for utility, as it operates without any location obfuscation. While a noticeable gap exists between LOSA and NPPA due to the inherent privacy--utility trade-off (particularly evident in Figs. \ref{fig_E1}(i) and \ref{fig_E1}(j)), LOSA (the red solid line) remains the most robust among all privacy-preserving methods. This performance is attributed to the dynamic privacy budget adjustment mechanism in (\ref{eq3}), which optimizes the balance between trajectory similarity $\Gamma_{m,n}^{t}$ and privacy cost $k_n\xi_{n}^{t}$.

Regarding fixed budget strategies, FHPB ($\xi_{n}^{t}=5$) initially achieves BU levels comparable to LOSA in the first 20 timeslots. As the simulation progresses, FHPB exhibits high volatility and significant utility degradation in several intervals (e.g., around timeslot $t=50$ in Fig. \ref{fig_E1}(k) and $t=80$ in Fig. \ref{fig_E1}(l)). This instability confirms that a static high budget fails to adapt to the time-varying demand triggering probability $\mathbbm{q}_{n}^{j,t}$. In contrast, FLPB ($\xi_{n}^{t}=1$, the blue dashed line) consistently yields the lowest BU, often dropping to near-zero levels. This is because the excessive noise injected under a low privacy budget leads to severe spatio-temporal mismatches, failing to satisfy the minimum service quality requirements.

A common phenomenon observed across all methods is the presence of synchronized utility fluctuations. These occasional sharp drops (e.g., around timeslot $t=40$ and $t=70$ in most sub-figures) correspond to transient congestion events within the Manhattan grid network (characterized by the parameter $grid\_size$, representing the spatial scale of the road topology). During these periods, the localized demand exceeds the available UAV supply at specific intersections. Statistical analysis of the entire simulation indicates that the look-ahead phase effectively mitigates these impacts: the system-wide timeout probability for LOSA is maintained at approximately $2\%$, which is a substantial improvement over FHPB ($5\%$) and FLPB ($3\%$). This advantage stems from the collaborative optimization of spatial resolution ($\Delta r_n, \Delta \theta_n$) and the proactive OSAA formation, which allows $93\%$ of transactions to be finalized within the $0.8$-second decision window, thereby ensuring high service availability despite dynamic network conditions.

\begin{figure*}[htbp]
	\centering
	\subfigure[] {\includegraphics[trim=0cm 0cm 0cm 0cm, clip, width=.32\textwidth,height=0.22\textwidth]{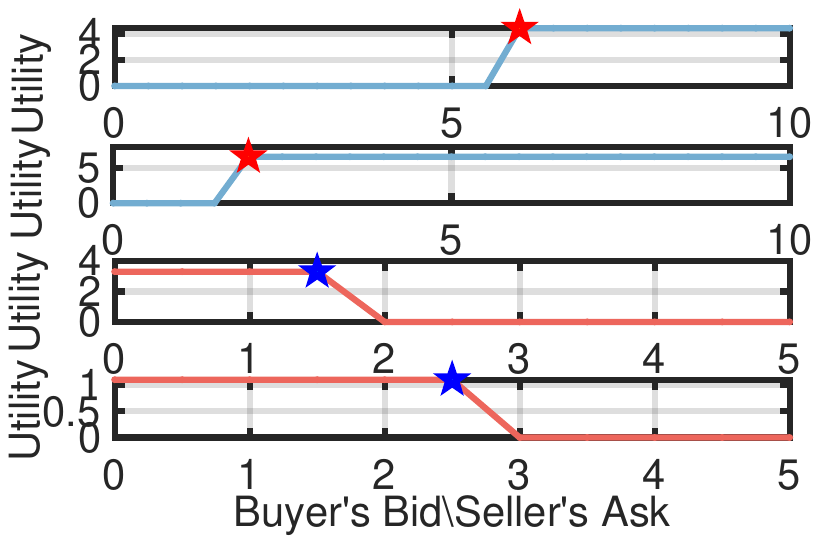}}
	\subfigure[] {\includegraphics[trim=0cm 0cm 0cm 0cm, clip, width=.32\textwidth,height=0.22\textwidth]{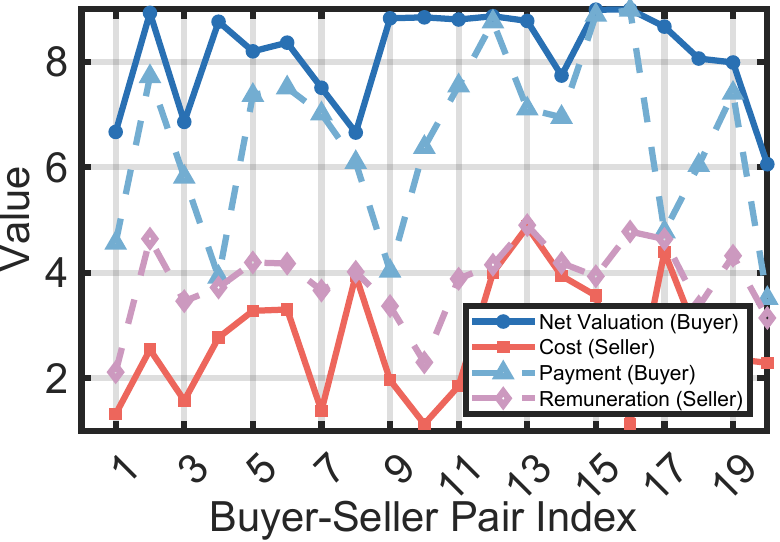}}
	\subfigure[] {\includegraphics[trim=0cm 0cm 0cm 0cm, clip, width=.32\textwidth,height=0.22\textwidth]{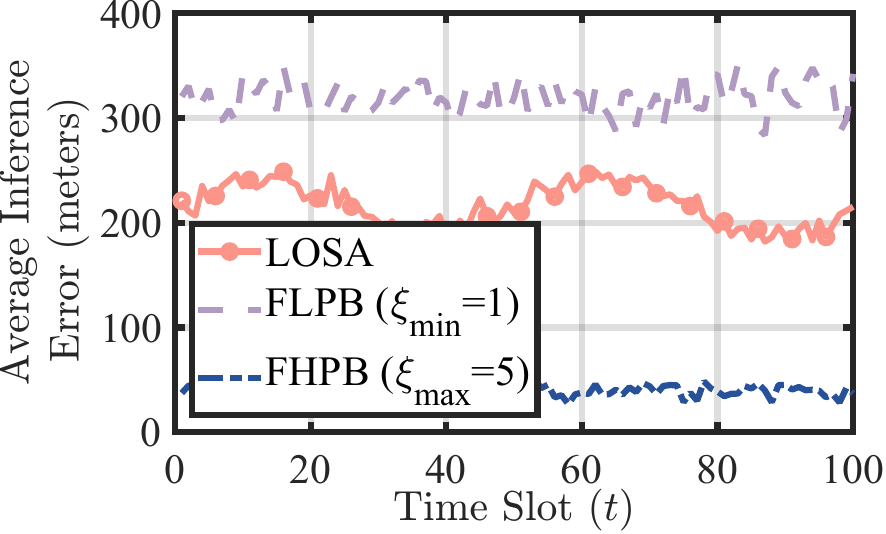}}
	\vspace{-3mm}
	\caption{Evaluation of key auction properties and privacy preservation performance. (a) Verification of truthfulness and IC regarding bids and asks; (b) Analysis of IR and BB across buyer-seller pairs; (c) Comparison of average inference error under Bayesian inference attack across different privacy strategies.}
	\label{fig_E2}
	\vspace{-3mm}
\end{figure*}

\noindent $\bullet$ \textit{Analysis of auction properties.} We further verify the core auction properties through Fig. \ref{fig_E2}. In Fig. \ref{fig_E2}(a), red stars represent the critical payment threshold for the last winning buyer. If a buyer falsely reports a bid lower than this threshold (i.e., $bid_n' < p_{m,n}^{j,t}$), they are excluded from the trade and receive zero utility. Conversely, if they report a bid higher than the threshold (i.e., $bid_n' \geq p_{m,n}^{j,t}$), such behavior cannot increase their utility. Similarly, in Fig. \ref{fig_E2}(a), blue stars indicate the critical revenue threshold for the last winning seller. If a seller falsely asks for a price higher than this threshold (i.e., $ask_m' > r_{m,n}^{j,t}$), they are excluded with zero utility. If they ask for a lower price (i.e., $ask_m' \leq r_{m,n}^{j,t}$), it also fails to increase their utility. In summary, both buyers and sellers have the incentive to report truthfully, ensuring IC.

As shown in Fig. \ref{fig_E2}(b), data from 20 randomly selected winning UAV-to-vehicle pairs demonstrate that LOSA satisfies IR. The buyer's net valuation is always higher than the payment ($\Gamma_{m,n}^t v_{n}^{j} - k_{n}^{j} \xi_{n}^t \geq p_{m,n}^{j,t}$), and the seller's revenue always covers their cost ($r_{m,n}^{j,t} \geq c_{m}^{j}$), ensuring non-negative utility for both parties. Consequently, the BB property of the auctioneer is verified.

\noindent $\bullet$ \textit{Defense performance against Bayesian inference attack.} To respond to the adversary model defined in Section \ref{subsec:threat_model}, we evaluate the resilience of LOSA against \textit{Bayesian Inference Attacks}. We simulate an adversary who intercepts the reported virtual location $\hat{l}_{n}^{t}$ and utilizes the known obfuscation mechanism (Eq. (\ref{eq4_9})) to compute the posterior probability distribution of the buyer's true location. The adversary then employs the \textit{Maximum A Posteriori (MAP)} estimation to pinpoint the most likely location $l_{guess}$.
We define the \textit{Inference Error} as the Euclidean distance between the adversary's guess and the buyer's true location: $E_{inf} = \| l_{guess} - l_{n}^{\mathsf{b},t} \|_2$: a higher inference error indicates stronger privacy protection. Fig. \ref{fig_E2}(c) illustrates the evolution of the average inference error over 100 timeslots.

\noindent$\bullet$ \textit{Disadvantage of FHPB:} The strategy with a fixed high privacy budget ($\xi_{\max}=5$) exhibits the lowest inference error. This confirms that while high budgets improve service matching (as shown in Fig. \ref{fig4_3}), they leave users vulnerable to inference attacks, as the noise is insufficient to confuse the Bayesian estimator.

\noindent$\bullet$ \textit{Robustness of LOSA:} LOSA maintains a high inference error, significantly outperforming FHPB. Even when budget $\xi_{n}^{t}$ increases to improve service quality (e.g., around timeslot $t=30$), the inference error does not collapse to zero. This is because the discrete polar coordinate sampling (Step 3 in Section \ref{sec_location_obfuscation}) introduces sufficient entropy, making it difficult for the adversary to precisely infer the true location from the stochastically generated virtual coordinates.

\noindent$\bullet$ \textit{Comparison with FLPB:} While the fixed low budget strategy ($\xi_{\min}=1$) yields the highest inference error, it comes at the cost of the lowest service BU (as discussed in Figs. \ref{fig_E1}(i)-\ref{fig_E1}(l)). LOSA provides sufficient confusion to thwart precise localization attacks (keeping error $>$ 4 times the grid resolution) while maintaining high service availability.

This experiment empirically validates the theoretical proof in Section \ref{sec:privacy_analysis}: our privacy protection mechanism ensures that the adversary's capability to distinguish the true location is strictly bounded, effectively neutralizing the threat of trajectory leakage.

\section{Conclusion and Future Work}
\label{sec:conclusion}

In this paper, we proposed LOSA, a privacy-preserving and time-efficient resource provisioning framework for dynamic AGINs. By decomposing the resource provisioning into a privacy-aware look-ahead phase and a lightweight execution phase, LOSA combines dynamic privacy-budget adjustment with elastic location obfuscation to balance accuracy and privacy. Analysis and experiments confirmed the truthfulness, individual rationality, budget balance, and superior efficiency-privacy trade-off of LOSA over existing schemes. Future work may include extending LOSA to general and high-density road topologies with stronger spatio-temporal coupling, incorporate learning-based mechanisms (e.g., decentralized or federated learning) for online parameter adaptation, and enhance robustness against adversarial behaviors such as collusion and privacy leakage in repeated interactions.

\bibliographystyle{IEEEtran}
\bibliography{reference/ref_three}

\newpage
\clearpage
\appendices
\section{Illustrative Example}
\label{Example}
The detailed explanation of the example in Fig. \ref{fig_1} is as follows:

\noindent\textit{(i) Phase 1 (look-ahead phase, during timeslot $t-1$):} While traveling toward the next intersection, buyer $b_1$ anticipates its service requirements for timeslot $t$. Instead of revealing its true trajectory (solid dark-blue curve), it generates a privacy-preserving virtual trajectory (dashed cyan curve) and submits it along with its bid to the RSU. At the same time, sellers $s_1$ (green UAV) and $s_2$ (yellow UAV) submit their asking prices. Based on the reported information, the auctioneer evaluates trajectory similarity and determines that $s_1$ is better aligned with $b_1$'s virtual trajectory. If the bid and ask are compatible, an OSAA is established between $b_1$ and $s_1$ for service delivery at timeslot $t$. In addition, $s_2$ is included in $b_1$'s preference list as a backup option.

\noindent\textit{(ii) Phase 2 (execution phase, at the end of timeslot $t$):} By the end of timeslot $t$, all participants have arrived at the next intersection. If $b_1$'s service demand materializes and seller $s_1$ is available, the OSAA is executed, and the requested service is delivered as planned. If $s_1$ becomes unavailable or fails to arrive, $b_1$ consults its preference list and selects an alternative seller, e.g., $s_2$, that is present at the intersection. This reassignment is performed immediately without re-running the auction, ensuring continuous service with minimal delay.

\section{Detailed Location Obfuscation Model}
\label{appendix:obfuscation}

As introduced in Section \ref{sec_location_obfuscation}, our location obfuscation mechanism is designed to achieve a controllable balance between location privacy and service effectiveness while maintaining computational tractability. The generation of the virtual trajectory $\hat{\bm{\mathcal{L}}}_{n}^{\mathsf{b},t}$ consists of three detailed steps:

\noindent$\bullet$ \textit{Step 1: Discretization of the perturbation space.}
We first define the space of all possible perturbations: centered at the buyer $b_n$'s true location $l_{n}^{\mathsf{b},t}$, an obfuscation zone is defined as a disk with privacy radius $\mathbbm{r}_n$, which represents the maximum allowable perturbation distance. To enable tractable computation, this continuous space is discretized. Using a radial resolution $\Delta r_n$, the continuous interval $[0, \mathbbm{r}_n]$ is partitioned into a finite set of candidate radii:
\begin{equation}
	\label{eq4_8}
	\mathbb{R}^\ast_n = \left\{ \mathbbm{m} \Delta r_n \bigg| \mathbbm{m} \in \mathbb{Z}^+,\ 0 \leq \mathbbm{m} \leq \frac{\mathbbm{r}_n}{\Delta r_n} \right\},
\end{equation}
where $\mathbbm{m}$ is the discretization index. The choice of $\Delta r_n$ strikes a balance between the granularity of the obfuscation and computational complexity.

\noindent$\bullet$ \textit{Step 2: Privacy-budget-aware radius sampling.}
Given the discretized radius set $\mathbb{R}^\ast_n$, the buyer samples a perturbation radius according to a non-uniform distribution controlled by the dynamic privacy budget $\xi_{n}^{t}$. This design enables adaptive control over the privacy–utility trade-off.

The probability of selecting a candidate radius $\mathbbm{r}^\ast_n$ is given by a logarithmic function as follows:
\begin{equation}
	\label{eq4_9}
	\Pr\left( \mathbbm{r}^\ast_n,t \right) = \frac{ \log\left[1 + \xi_{n}^{t} (\mathbbm{r}_n - \mathbbm{r}^\ast_n) \right] }{ \sum_{\mathbbm{r}^{'}_n \in \mathbb{R}^\ast_n} \log\left[1 + \xi_{n}^{t} (\mathbbm{r}_n - \mathbbm{r}^{'}_n) \right] }.
\end{equation}
This distribution exhibits two important properties: \textit{(i)} It favors smaller perturbation radii (i.e., when $\mathbbm{r}^\ast_n$ is small, the term $\mathbbm{r}_n - \mathbbm{r}^\ast_n$ is large), thus preserving a higher degree of location accuracy to facilitate better service matching. \textit{(ii)} The privacy budget $\xi_{n}^{t}$ acts as a tuning knob: a higher $\xi_{n}^{t}$ makes the distribution steeper, concentrating the probability mass on smaller radii (less privacy), while a lower $\xi_{n}^{t}$ flattens the distribution, approaching a uniform selection for maximum privacy when $\xi_{n}^{t} \to 0$.

\noindent$\bullet$ \textit{Step 3: Final virtual location generation.}
Given the sampled perturbation radius $\mathbbm{r}^\ast_n$ from Step 2, we next determine the angular component to generate the final obfuscated location. We first discretize the angular space $[0, 2\pi)$ using an angular resolution $\Delta \theta_n$. This yields a finite set of candidate angles, and consequently a set of candidate locations in polar coordinates:
\begin{equation}
	\label{eq4_10}
	\mathbb{L}^\ast_n = \left\{ (\mathbbm{r}^\ast_n, \mathbbm{n} \Delta \theta_n) \bigg| \mathbbm{n} \in \mathbb{Z}^+,\ 0 \leq \mathbbm{n} < \frac{2\pi}{\Delta \theta_n} \right\}.
\end{equation}
A final virtual location $\hat{l}_{n}^{\mathsf{b},t}$ is then obtained by randomly selecting an element from $\mathbb{L}^\ast_n$.

In essence, the above-described two-phase discretization over radius and angle enables a structured and computationally efficient obfuscation mechanism. It ensures that all generated locations satisfy the perturbation constraint, while providing explicit control over the privacy–utility trade-off through $(\Delta r_n, \Delta \theta_n)$ and the dynamic budget $\xi_{n}^{t}$.

\section{Conceptual Reflection: Impact of Fine-grained Obfuscation in a Discrete-Intersection Market}
\label{appendix:conceptual_discussion}

A natural conceptual question arises regarding the proposed LOSA framework: \textit{Since the auction coordination and final service delivery ultimately occur at discrete intersections, does the fine-grained (continuous or sub-intersection) location obfuscation meaningfully impact the auction outcomes and privacy guarantees?} 

The short answer is yes. Even if a vehicle's obfuscated virtual location is ultimately mapped to the same target intersection for Phase 2 execution, the sub-intersection level perturbation fundamentally alters both the market dynamics and the privacy protection strength. We clarify this impact from two critical perspectives:

\noindent\textbf{1. Impact on Auction Outcomes via Trajectory Similarity:} \\
The double auction mechanism in Phase 1 does not merely match participants based on their destination intersection. Instead, the core matching metric is the \textit{trajectory similarity} ($\Gamma_{m,n}^t$), calculated via the Fréchet distance along the entire virtual trajectory $\hat{\bm{\mathcal{L}}}_{n}^{\mathsf{b},t}$ (as defined in (\ref{eq8})). 
When a buyer applies fine-grained obfuscation (controlled by $\Delta r_n$ and $\Delta \theta_n$), the shape of the virtual trajectory changes. Even if the terminal node of this trajectory remains the same intersection, the intermediate spatial deviations will alter the value of $\Gamma_{m,n}^t$. 
Because $\Gamma_{m,n}^t$ directly determines the effective service valuation ($\Gamma_{m,n}^t v_n^j$) in the VCG pricing calculation (Algorithm \ref{alg4_4}), this fine-grained perturbation directly changes the buyer's bidding competitiveness, the resulting matching priority (social welfare ranking), and the final payment. Therefore, the sub-intersection obfuscation is the actual mathematical knob that creates the dynamic ``Privacy-Utility trade-off''.

\noindent\textbf{2. Impact on Privacy via Breaking Sequential Correlation:} \\
From a threat model perspective, the adversary (e.g., an honest-but-curious RSU) is not merely looking at isolated intersection arrivals; they are conducting Bayesian inference attacks on \textit{continuous trajectories} across multiple timeslots. 
If we only applied discrete obfuscation (e.g., randomly reporting an entirely different intersection), it would provide high privacy but destroy the spatial feasibility of UAV-to-vehicle matching (utility drops to zero). Conversely, if we reported the exact continuous path leading to the intersection, the adversary could easily infer the precise lane-level behavior, velocity, and driving intent of the vehicle. 
By applying polar-coordinate-based continuous obfuscation bounded by a privacy radius ($\mathbbm{r}_n \approx 30-50$ meters), we inject sufficient spatial entropy into the micro-mobility pattern. The vehicle arrives at the correct intersection, ensuring service feasibility, but its precise approach path and localized dynamics remain mathematically indistinguishable (as proven by $\epsilon$-Geo-I in Theorem \ref{thm:geo_indistinguishability}).

In summary, the fine-grained obfuscation acts as the essential bridge between the continuous physical world (where privacy leaks occur via trajectory tracing) and the discrete grid world (where intersection-based auctions are efficiently executed).

\section{Market Design Properties}
\label{Market Design Properties}
\noindent\textit{Individual Rationality (IR):} IR guarantees that participation in the proposed double auction is economically beneficial for all agents, including both buyers (vehicles) and sellers (UAVs). It ensures that no participant incurs a negative utility by engaging in the market. Formally, this implies the following properties:

\noindent$\bullet$ \textit{For buyers:} A participating buyer achieves non-negative utility from any accepted transaction. This requires that the payment does not exceed the net valuation derived from the service:
\begin{equation}
	\Gamma_{m,n}^{t}v_{n}^{j}-k_{n}^{j} \xi_{n}^{t} \ge p_{m,n}^{j,t}, \forall t \in \bm{\mathcal{T}}, j\in J.
\end{equation}

\noindent$\bullet$ \textit{For sellers:} A participating seller achieves non-negative profit from providing services. This requires that the received payment covers the service cost:
\begin{equation}
	r_{m,n}^{j,t} \ge c_{m}^{j}, \forall t \in \bm{\mathcal{T}}, j\in J.
\end{equation}

\noindent\textit{Budget Balance (BB):}
This property ensures the economic sustainability of the auctioneer (i.e., RSUs operating as local market coordinators). In our design, we enforce \textit{strong budget balance}, meaning that the auctioneer's margin is non-negative for every individual transaction. This requires
\begin{equation}
	p_{m,n}^{j,t} \ge r_{m,n}^{j,t}, \forall t \in \bm{\mathcal{T}}, j\in J.
\end{equation}
This condition guarantees that the auctioneer's utility satisfies $U^{\mathsf{a},t} \ge 0$, ensuring that the platform can cover its operational costs and remain economically viable.

\noindent\textit{Truthfulness (Incentive Compatibility, IC):}
A core tenet of our double auction is to ensure that participants' best strategy is to be truthful. This means both buyers and sellers cannot improve their expected utility by submitting bids/asks ($\bm{bid}_{n}^{t}/\bm{ask}_m^{t}$) that do not reflect their true valuation ($\bm{\mathcal{V}}_n/\bm{\mathcal{C}}_m$).

\noindent\textit{Computational Efficiency (CE, denoting the timeliness):}
CE ensures that the proposed mechanism can operate within the stringent latency constraints of dynamic AGIN environments. In particular, all decision-making procedures must be completed within the travel time between consecutive intersections.

Together, the above properties establish a stable, efficient, and practically deployable market: IR and BB ensure participation and economic sustainability, IC guarantees reliable information revelation, and CE enables timely operation under dynamic AGIN conditions.

\section{Details of LOSA}
\label{alg4}

\textit{Overview of Phase 1: } In the following, we provide an overview of Phase 1 for readability.

\noindent$\bullet$ \textit{Module A: UAV trajectory planning (Algorithm \ref{alg4_2}).} To prepare for transactions in the upcoming timeslot, buyers first determine the intersections that they are expected to reach based on their mobility. To maximize service coverage and matching opportunities, sellers proactively adjust their trajectories in response to these anticipated demand patterns. Subsequently, this module determines the target intersection (i.e., arrival location) for each seller at the end of the next timeslot, enabling demand-aware spatial positioning.

\noindent$\bullet$ \textit{Module B: Buyer–seller matching (Algorithm \ref{alg4_3}).} Under given participants expected at each intersection (determined by Module A), this module computes the matching between buyers and sellers. It first updates key state variables (e.g., privacy budgets and demand statistics), and then performs similarity-aware grouping based on trajectory alignment. Leveraging these groupings, it determines the matching matrix that maximizes ESW at each intersection.

\noindent$\bullet$ \textit{Module C: VCG-based pricing (Algorithm \ref{alg4_4}).} Based on the matching outcomes from Module B, this module determines the transaction prices using a VCG-based mechanism. It computes the payments of winning buyers and the corresponding revenues of sellers, ensuring IC, IR, and BB, while satisfying CE within the market.

Together, these modules form a closed-loop decision framework that jointly optimizes spatial positioning, matching, and pricing, while balancing privacy preservation and service efficiency. We next describe these modules in more depth.

\noindent
\textit{Overall Workflow of LOSA: }Integrating the three decision modules (A, B, C) and the two-phase execution logic, we present the complete LOSA, as outlined in Algorithm \ref{alg4_1}. This functions as a global scheduler that iterates through the entire time horizon $\bm{\mathcal{T}}$, orchestrating resource allocation across all intersections. Its execution flow is structured as follows:

\noindent$\bullet$ \textit{Global initialization (lines 3-8).} 
The system initializes the global record set $\bm{\mathcal{R}}$ to store transaction history. At the beginning of each timeslot $t$, it first updates the dynamic parameters for all buyers (e.g., privacy budget $\xi_n^t$) and invokes Module A (Algorithm \ref{alg4_2}) to determine the optimal deployment locations $Loc_t$ for the seller swarm (lines 7-8).

\noindent$\bullet$ \textit{Phase 1: Distributed pre-auction (lines 9-16).} 
The algorithm then executes parallel processing for every intersection $l$. It first filters the specific buyers and sellers arriving at $l$ based on their virtual and planned locations (lines 11-13). Subsequently, it invokes Module B (Algorithm \ref{alg4_3}) to generate the optimal matching matrix $\bm{\mathcal{M}}_l^t$ and preference lists $\bm{\mathcal{O}}_l^t$, followed by Module C (Algorithm \ref{alg4_4}) to determine VCG-based pricing (lines 14-16). This phase effectively locks in the OSAAs.

\noindent$\bullet$ \textit{Phase 2: Real-time execution and contingency (lines 17-20).} 
As the timeline advances to the moment of service delivery, the algorithm orchestrates the on-site execution. It verifies demand realization, executes valid OSAAs, and triggers supplementary transactions for unsatisfied demands using the preference lists generated in Phase 1. This ensures that the theoretical matching results are translated into actual service delivery.

\noindent$\bullet$ \textit{Global state update and archiving (lines 21-30).} 
After processing all intersections, the algorithm aggregates the local results (matching, pricing, locations) into the global record $\bm{\mathcal{R}}$ for timeslot $t$. Finally, it updates the status of buyers and sellers (e.g., location, budget consumption) to prepare for the next iteration $t+1$. 

By cyclically executing these steps, LOSA achieves a seamless integration of ``planning-matching-execution,'' ensuring spatiotemporal consistency and system-wide optimality throughout the entire service period.

\begin{algorithm}[]
	{\footnotesize
		\caption{\small{Overview of LOSA}}
		\label{alg4_1}
		\LinesNumbered
		
		{\bf{Input :}} 
		$\bm{\mathcal{B}}^1$: Initial buyer set, $\bm{\mathcal{S}}^1$: Initial seller set, $\{\bm{\mathcal{L}}_n^{\mathsf{b},1}\}$: Initial buyer trajectories, $\{Loc_0\}$: Initial UAV locations
		
		{\bf{Output :}} 
		$\bm{\mathcal{R}}$: Global record containing $\{\bm{X}^t\}$, $\{\bm{p}_{m,n}^t\}$, $\{\bm{r}_{m,n}^t\}$, $\{\mathbb{L}_{m,n}^t\}$, and $\{\bm{\mathcal{O}}_n^t\}$
		
		{\bf{Initialization :}} 
		$\bm{\mathcal{R}} \leftarrow \emptyset$
		
		\For{timeslot $t=1$ \KwTo $|\bm{\mathcal{T}}|$}{
			\textbf{\# Procedure: Dynamic Parameter Update}\\
			Update privacy budget $\xi_n^t$ and service demand state matrix $\bm{\mathcal{Q}}_{n}^{t}$ for buyers based on transaction results from the previous timeslot\
			
			\textbf{\# Procedure: Seller Path Planning; Call Algorithm \ref{alg4_2} to determine UAV locations}\\
			$Loc_t \leftarrow \text{Algorithm \ref{alg4_2}}(\bm{\mathcal{S}}^t, Loc_{t-1})$ 
			
			\textbf{\# Parallel Processing for Each Intersection}\\
			\For{each intersection $l \in \bm{\mathcal{L}}$}{
				\textbf{\# Procedure: Intersection Member Filtering}\\
				$\bm{\mathcal{B}}_l^t \leftarrow \{b_n \mid \hat{\bm{\mathcal{L}}}_n^{\mathsf{b},t}(1)=l\}$\\
				$\bm{\mathcal{S}}_l^t \leftarrow \{s_m \mid Loc_t(s_m) = l\}$\
				
				\textbf{\# Procedure: Phase 1 (Pre-Auction)}\\
				$(\bm{\mathcal{M}}_l^t, \overline{U^{\mathsf{sw},t}_l}, \bm{\mathcal{O}}_l^t) \leftarrow \text{Algorithm \ref{alg4_3}}(\bm{\mathcal{B}}_l^t, \bm{\mathcal{S}}_l^t)$\\
				$(\bm{p}_l^t, \bm{r}_l^t) \leftarrow \text{Algorithm \ref{alg4_4}}(\bm{\mathcal{M}}_l^t, \overline{U^{\mathsf{sw},t}_l}, \cdots)$\
				
				\textbf{\# Procedure: Phase 2 (Supplementary Transactions)}\\
				Step 1: On-site Verification and OSAA Execution\\
				Step 2: Preference-based Contingency Matching\\
				Step 3: Conflict Resolution and Economic Feasibility Check\\
				
				\textbf{\# Procedure: Slot Record Update}\\
				$\bm{X}^t \leftarrow \bm{X}^t \cup \{(l, \bm{\mathcal{M}}_l^t)\}$\\
				$\bm{p}_{m,n}^t \leftarrow \bm{p}_{m,n}^t \cup \bm{p}_l^t$\\
				$\bm{r}_{m,n}^t \leftarrow \bm{r}_{m,n}^t \cup \bm{r}_l^t$\\
				$\mathbb{L}_{m,n}^t \leftarrow \mathbb{L}_{m,n}^t \cup \{(m,n,l)\}$\\
				$\bm{\mathcal{O}}_n^t \leftarrow \bm{\mathcal{O}}_n^t \cup \bm{\mathcal{O}}_l^t$\
			}
			
			\textbf{\# Procedure: Global State Update}\\
			$\bm{\mathcal{R}} \leftarrow \bm{\mathcal{R}} \cup \{(t, \bm{X}^t, \bm{p}^t, \bm{r}^t, \mathbb{L}^t, \bm{\mathcal{O}}^t)\}$\\
			$\bm{\mathcal{B}}^{t+1} \leftarrow \text{UpdateBuyers}(\bm{\mathcal{B}}^t, \bm{\mathcal{O}}^t)$\\
			$\bm{\mathcal{S}}^{t+1} \leftarrow \text{UpdateSellers}(\bm{\mathcal{S}}^t, Loc_t)$\\
		}
		
		\textbf{Return} $\bm{\mathcal{R}}$
	}
\end{algorithm}

\section{Detailed Proof of Theorem 1}
\label{appendix:proof_theorem1}

We provide the derivation of Theorem \ref{thm:geo_indistinguishability} regarding $\epsilon$-Geo-Indistinguishability.

\begin{myPro}
	Let $l$ be the true location of buyer $b_n$. The mechanism generates a virtual location $\hat{l}$ by sampling a radius $r^* = d(l, \hat{l})$. The probability mass function (PMF) for selecting $r^*$ is given by Eq. (14):
	\begin{equation}
		\Pr(r^* | l) = \frac{1}{\mathcal{N}_l} \log\left[1 + \xi_{n}^{t} (\mathbbm{r}_n - r^*) \right],
	\end{equation}
	where $\mathcal{N}_l$ is the normalization constant. Consider two adjacent locations $l$ and $l'$ separated by distance $\delta$. To satisfy Geo-I, we examine the ratio:
	\begin{equation}
		\frac{\Pr(\hat{l} | l)}{\Pr(\hat{l} | l')} \approx \frac{\log\left[1 + \xi_{n}^{t} (\mathbbm{r}_n - d(l, \hat{l})) \right]}{\log\left[1 + \xi_{n}^{t} (\mathbbm{r}_n - d(l', \hat{l})) \right]}.
	\end{equation}
	Assuming the normalization constants $\mathcal{N}_l \approx \mathcal{N}_{l'}$ due to grid symmetry, let $f(r) = \log(1 + \xi_{n}^{t} (\mathbbm{r}_n - r))$. By the Mean Value Theorem, the change in the log-probability is bounded by the maximum of its derivative:
	\begin{equation}
		\left| \frac{d}{dr} f(r) \right| = \left| \frac{-\xi_{n}^{t}}{1 + \xi_{n}^{t} (\mathbbm{r}_n - r)} \right|.
	\end{equation}
	Since $(\mathbbm{r}_n - r) \geq 0$, the denominator is $\geq 1$, thus $|f'(r)| \leq \xi_{n}^{t}$. This implies that $f(r)$ is Lipschitz continuous with constant $K = \xi_{n}^{t}$. Consequently:
	\begin{equation}
		\ln \left( \frac{\Pr(\hat{l} | l)}{\Pr(\hat{l} | l')} \right) \leq \xi_{n}^{t} \cdot |d(l, \hat{l}) - d(l', \hat{l})|.
	\end{equation}
	Applying the Triangle Inequality $|d(l, \hat{l}) - d(l', \hat{l})| \leq d(l, l')$, we obtain:
	\begin{equation}
		\frac{\Pr(\hat{l} | l)}{\Pr(\hat{l} | l')} \leq e^{\xi_{n}^{t} \cdot d(l, l')}.
	\end{equation}
	This completes the proof that the mechanism satisfies $\epsilon$-Geo-I with $\epsilon = \xi_{n}^{t}$.
\end{myPro}

This derivation confirms that $\xi_n^t$ is not merely a heuristic scaling factor but a rigorous privacy budget. A larger $\xi_n^t$ leads to a larger $\epsilon$, providing higher matching utility but weaker distinguishability; a smaller $\xi_n^t$ enforces a flatter distribution, enhancing privacy. This mathematical foundation justifies the use of $\xi_n^t$ as the primary control knob in our dynamic adjustment algorithm.

\section{Implementation Details of Benchmark Methods}
\label{appendix:baseline_details}

To ensure the reproducibility of our comparative analysis, this appendix provides the granular implementation details for each benchmark method discussed in Section \ref{subsec:baselines}.

\subsection{Architectural Baselines (VRA and SVRA)}
\begin{itemize}
	\item \textbf{VRA (VCG-based Real-time Auction):} This strategy modifies the LOSA framework by completely removing the Phase 1 pre-auction process. Consequently, no OSAAs are formed. However, it retains Module A (proactive UAV path planning), allowing UAVs to move toward high-demand intersections. The matching and pricing are triggered strictly at the end of each timeslot upon the physical arrival of participants, utilizing the classic VCG mechanism for instantaneous transactions.
	\item \textbf{SVRA (Static VCG-based Real-time Auction):} This serves as a further degradation of VRA. In addition to disabling Phase 1, it also deactivates Module A. UAVs are restricted to their fixed historical flight trajectories (extracted from the DAIR-V2X/RCooper datasets). Transactions are feasible only when a buyer's real-time location accidentally aligns with a UAV's pre-defined path, representing a passive service provisioning environment.
\end{itemize}

\subsection{Privacy-related Baselines (NPPA, FHPB, and FLPB)}
\begin{itemize}
	\item \textbf{NPPA (No Privacy-protection Auction):} In this baseline, the three-stage location obfuscation model (radius discretization, logarithmic probability mapping, and angle discretization) is bypassed. Buyers report their true private trajectories $\bm{\mathcal{L}}_{n}^{\mathsf{b},t}$ directly to the auctioneer. This model is used to establish the maximum achievable social welfare (the empirical upper bound) by eliminating the performance degradation caused by spatio-temporal mismatching.
	\item \textbf{FHPB (Fixed High Privacy Budget):} The dynamic adjustment mechanism defined in (3) is disabled. The privacy budget $\xi_n^t$ is fixed at the upper threshold $\xi_{\max} = 5.0$. Furthermore, the feedback components, including the \textit{Utility gradient term} and \textit{Matching failure penalty term}, are removed. This results in minimal location perturbation, maximizing matching accuracy while providing the weakest privacy guarantee.
	\item \textbf{FLPB (Fixed Low Privacy Budget):} As with the FHPB, this strategy fixes $\xi_n^t$ at the lower threshold $\xi_{\min} = 1.0$. The feedback terms ($\Delta U_{n}^{t}$ and $C_n^t$) are similarly deactivated. This forces the system into a state of the maximum location obfuscation (with the highest path ambiguity), allowing us to observe the system's robustness and the risk of ``privacy deadlock'' where matching success rates drop significantly.
\end{itemize}

\subsection{Summary of Parameter Configurations}
The following table summarizes the key functional differences between the proposed LOSA and the benchmark suite.

\begin{table}[H]
	\centering
	\caption{Functional Comparison of Benchmarks}
	\label{tab:baseline_logic}
	\resizebox{\columnwidth}{!}{
		\begin{tabular}{lcccc}
			\toprule
			\textbf{Method} & \textbf{Pre-Auction} & \textbf{Path Planning} & \textbf{Privacy Model} & \textbf{Dynamic Budget} \\ \midrule
			LOSA (Ours)     & \ding{51}            & \ding{51}              & Dynamic                & \ding{51}               \\
			VRA             & \ding{55}            & \ding{51}              & Dynamic                & \ding{51}               \\
			SVRA            & \ding{55}            & \ding{55}              & Dynamic                & \ding{51}               \\
			NPPA            & \ding{51}            & \ding{51}              & None                   & \ding{55}               \\
			FHPB            & \ding{51}            & \ding{51}              & Fixed ($\xi=5.0$)      & \ding{55}               \\
			FLPB            & \ding{51}            & \ding{51}              & Fixed ($\xi=1.0$)      & \ding{55}               \\ \bottomrule
		\end{tabular}
	}
\end{table}

\section{Proof of Economic Properties and Computational Efficiency}
\label{adx:4}

We provide proofs for the economic properties (Individual Rationality, Budget Balance, Truthfulness) and analyze the Computational Efficiency of the proposed LOSA mechanism, specifically focusing on the VCG-based pricing designed in Algorithm \ref{alg4_4}.

\subsection{Truthfulness (Incentive Compatibility)}
\begin{thm}
	The proposed double auction mechanism is truthful, meaning that reporting the true valuation $\bm{\mathcal{V}}_n$ for buyers and true cost $\bm{\mathcal{C}}_m$ for sellers maximizes their expected utilities, regardless of the strategies of others.
\end{thm}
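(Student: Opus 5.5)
The plan is to reduce the claim to a per-cluster, per-service-type argument, and then use the critical-value (Myerson-type) characterization of truthful single-parameter mechanisms. Within that framework, the VCG-style prices of Module C (Algorithm \ref{alg4_4}) play the role of the threshold. The argument is for buyers; sellers are symmetric.

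\textbf{Step 1: isolate what a report can affect.} I fix a buyer $b_n$, a service type $j$, and the reports of all other participants. The clusters $\{\bm{\mathcal{G}}_k\}$ in Module B depend only on trajectory similarities $\Gamma^t_{m,n}$, i.e., on the virtual trajectories and the sellers' proactive trajectories. The quantities $\xi_n^t$ and $\mathbbm{q}_n^{j,t}$ are produced by the update rules (\ref{eq3}) and the demand model, which involve past outcomes and not the current $bid_n^{j,t}$. Moreover, (C1)--(C2) decouple the matching across service types. So a change of $bid_n^{j,t}$ affects only the sorted list $\mathcal{L}_b$ and the welfare ranking $\mathcal{L}_{sw}$ of the single (cluster, type $j$) subproblem containing $b_n$. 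This reduces the problem to a single-parameter environment. The buyer's expected utility for that triplet is $\mathbbm{q}_n^{j,t}(\Gamma^t_{m,n}v_n^j - k_n^j\xi_n^t - p^{j,t}_{m,n})$ when she wins and $0$ otherwise.

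\textbf{Step 2: monotonicity of the allocation.} I show that if $b_n$ wins with bid $b$, she still wins with any $b'>b$. Raising the bid moves $b_n$ weakly forward in $\mathcal{L}_b$, keeps every feasibility test $bid_{k_b}\ge ask_{k_s}$ she passed, and raises the score $SW^j_{k_b,k_s}$ of each of her pairs. It leaves every other pair's score unchanged. In the greedy non-repetitive pass, her pairs therefore reach the front of $\mathcal{L}_{sw}$ weakly earlier, and before them only weakly fewer competitors can have claimed sellers. An exchange argument on the greedy order completes the step.

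\textbf{Step 3: the payment is the critical value.} Monotonicity yields a threshold $\tau_n$ such that $b_n$ wins exactly when her report exceeds $\tau_n$. I then show that the Module C payment $p^{j,t}_{m,n} = \overline{U^{\mathsf{sw}}_{-n}} - (\overline{U^{\mathsf{sw},t}} - \Delta U^{\mathsf{sw},j}_{m,n})$, normalized per unit of $\mathbbm{q}_n^{j,t}$, equals this threshold. It does not depend on $b_n$'s own bid: $\overline{U^{\mathsf{sw}}_{-n}}$ is computed with $\mathbbm{q}_n^{j,t}=0$, and $\overline{U^{\mathsf{sw},t}}-\Delta U^{\mathsf{sw},j}_{m,n}$ is the welfare of the others. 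Intuitively, it is the value at which $b_n$'s pair just ties with the best displacing alternative in the greedy order.

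Once the critical value is established, the standard case split applies. If the truthful report wins, any winning misreport pays the same $\tau_n$, and a losing misreport gives $0 \le \text{value}-\tau_n$. If the truthful report loses, then the value is at most $\tau_n$, so a winning misreport yields nonpositive utility. The same argument, with $\mathcal{L}_s$ sorted ascending, the exclusion step $ask^j_m\leftarrow\max[ask^j]$, and revenue $r^{j,t}_{m,n}=\overline{U^{\mathsf{sw},t}}-\overline{U^{\mathsf{sw}}_{-m}}$, gives truthfulness for sellers. Because each subproblem is independent, summing over $j$ and clusters gives the expected-utility statement, regardless of others' strategies.

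\textbf{Main obstacle.} I expect Step 3 to be the hardest. Module B is a greedy rather than an exact ESW maximizer, so the usual VCG identity (utility $=$ total welfare minus a term independent of one's own report) does not by itself give truthfulness. I would first try to show that, within a cluster and a single type, the greedy welfare-sorted pass coincides with the optimal assignment. This may hold because the clusters are built so that all members have near-identical $\Gamma$, which makes the scores essentially separable, and for separable scores the greedy order on the sorted bid/ask lists is optimal. If that holds, the Clarke-pivot argument goes through directly. If it does not, I would fall back on the monotonicity route and prove the payment equals the threshold by tracking the first pair in $\mathcal{L}_{sw}$ that would displace $b_n$.
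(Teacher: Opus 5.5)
Your Step 2 is sound: a buyer whose pairs are all rejected never influences the greedy run, and that gives your exchange argument. Step 3, however, is not merely the hard step. For the payment rule as written in Algorithm \ref{alg4_4}, its claims are false, so neither of your routes can close it.

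\textbf{The payment as written is not a critical value, even when greedy is optimal.} $\Delta U^{\mathsf{sw},j}_{m,n}$ contains $-\mathbbm{q}_n^{j,t}c_m^j$. Hence $\overline{U^{\mathsf{sw},t}}-\Delta U^{\mathsf{sw},j}_{m,n}$ is not the welfare of the others, because it omits the matched seller's cost. It also depends on $b_n$'s own bid through the allocation. Take one cluster and one type with $\Gamma\equiv 1$, $\mathbbm{q}\equiv 1$, $k\xi\equiv 0$, buyers $v_1>v_2>v_3$ and sellers $c_1<c_2<c_3$, with $v_2>c_2$ and $v_3<c_3$. The weights are separable and greedy is optimal, which is exactly the case your first route aims for.
\begin{itemize}
\item Truthful bidding gives $(b_1,s_1),(b_2,s_2)$ and $p_1=\max(v_3,c_2)-c_1$.
\item Any bid in $(\max(v_3,c_2),v_2)$ gives $(b_2,s_1),(b_1,s_2)$ and $p_1=\max(v_3,c_2)-c_2$, a strict gain of $c_2-c_1$.
\end{itemize}
The critical value is $\max(v_3,c_2)$, but the charged price is the critical value minus the partner's cost. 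So ``payment $=\tau_n$'' and ``every winning misreport pays the same $\tau_n$'' both fail. The seller side fails symmetrically: $r^{j,t}_{m,n}=\overline{U^{\mathsf{sw},t}}-\overline{U^{\mathsf{sw}}_{-m}}$ lacks the $+\mathbbm{q}_n^{j,t}ask_m^j$ term. At a fixed allocation, a winning seller's utility therefore strictly decreases in her own ask, and shading the ask down pays.

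\textbf{Greedy is not optimal within a cluster, so the textbook pivot does not rescue the argument either.} Suppose you replace the payment with the textbook Clarke pivot $\overline{U^{\mathsf{sw}}_{-n}}-\bigl(\overline{U^{\mathsf{sw}}}-\mathbbm{q}_n^{j,t}(\Gamma^t_{m,n}bid_n^j-k_n^j\xi_n^t)\bigr)$. Your hope that greedy is optimal within a cluster still fails, for two reasons. The threshold $\delta_\Gamma$ does not make $\Gamma$ constant. And even with $\Gamma\equiv 1$, the weight $\mathbbm{q}_n(v_n-c_m)$ is not separable, because the seller's cost is scaled by the buyer's demand probability. Take $\mathbbm{q}=(1,\tfrac{1}{2})$, $v=(10,21)$, $c=(1,8)$:
\begin{itemize}
\item Greedy picks $(b_2,s_1),(b_1,s_2)$ for total $12$, whereas $(b_1,s_1),(b_2,s_2)$ gives $15.5$.
\item $b_1$ wins exactly when her bid is at least $8$. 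Truthfully she pays $8$ and gets utility $2$.
\item Bidding $12$ wins $s_1$ at price $4.5$, giving utility $5.5$.
\end{itemize}
So greedy Clarke prices are not critical values either.

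\textbf{Two further gaps in your reduction.}
\begin{itemize}
\item You treat the outcome as binary. In fact $\Gamma_{m,n}$ varies across sellers and a bid change can switch partners. Myerson's route would then need monotonicity of $\mathbbm{q}_n\Gamma_{m(b),n}$ and the integral payment, not a single threshold.
\item Step 1 overlooks that Module A runs Algorithm \ref{alg4_3} on reported bids to relocate UAVs. A buyer's bid can therefore change which sellers reach her intersection.
\end{itemize}

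\textbf{Comparison with the paper.} The paper's proof engages none of this. It writes the payment with $v_n$ in place of $\Delta U^{\mathsf{sw},j}_{m,n}$ and asserts that Algorithm \ref{alg4_3} maximizes declared welfare. With those two assumptions, $U_n=\overline{U^{\mathsf{sw}}}-\overline{U^{\mathsf{sw}}_{-n}}$ gives truthfulness in one line. A complete proof needs a modified mechanism, for example:
\begin{itemize}
\item exact matching per cluster and type, with textbook Clarke buyer payments and seller revenues $\overline{U^{\mathsf{sw}}}-\overline{U^{\mathsf{sw}}_{-m}}+\mathbbm{q}_n^{j,t}ask_m^j$; or
\item Myerson payments designed for the greedy rule.
\end{itemize}
It cannot be extracted from Algorithms \ref{alg4_3}--\ref{alg4_4} as stated.
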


\begin{myPro}
	Consider a buyer $b_n$. Let $v_n$ be the true valuation and $b_n$ be the declared bid. The utility of $b_n$ under the VCG mechanism is given by:
	\begin{equation}
		U_n = v_n - p_n = v_n - \left[ \overline{U_{-n}^{\mathsf{sw}}} - (\overline{U^{\mathsf{sw}}} - v_n) \right] = \overline{U^{\mathsf{sw}}} - \overline{U_{-n}^{\mathsf{sw}}}.
	\end{equation}
	Here, $\overline{U_{-n}^{\mathsf{sw}}}$ is the maximum social welfare without $b_n$'s participation, which is independent of $b_n$'s bid. Thus, maximizing individual utility $U_n$ is equivalent to maximizing the global social welfare $\overline{U^{\mathsf{sw}}}$. Since the allocation algorithm (Algorithm \ref{alg4_3}) chooses the outcome that maximizes declared social welfare, reporting the true value $b_n = v_n$ ensures that the mechanism optimizes the objective that aligns with the buyer's utility. A similar logic applies to sellers, where reporting true cost $c_m$ ensures the mechanism selects them when it is globally efficient, maximizing their marginal contribution payoff.
	\qed
\end{myPro}

\subsection{Individual Rationality (IR)}
\begin{thm}
	The mechanism satisfies Individual Rationality, ensuring that all participating buyers and sellers obtain non-negative utility.
\end{thm}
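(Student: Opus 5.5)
The plan is to follow the standard VCG argument, working separately for a winning buyer and a winning seller in a fixed cluster $\bm{\mathcal{G}}_k$ and service type $j$. Losers have zero utility, so IR holds trivially for them. The key tool is the Clarke-pivot form of the prices in Algorithm~\ref{alg4_4}.

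\textbf{Buyer side.} For a winning buyer $b_n$ matched with $s_m$, write $w_{m,n}^j \triangleq \Gamma_{m,n}^{t}v_{n}^{j}-k_{n}^{j}\xi_{n}^{t}$ for its net valuation. The payment in Algorithm~\ref{alg4_4} is
\[
p_{m,n}^{j,t} = \overline{U_{-n}^{\mathsf{sw}}} - \bigl(\overline{U^{\mathsf{sw},t}} - \Delta U_{m,n}^{\mathsf{sw},j}\bigr).
\]
Substituting into the buyer's expected utility, as in the truthfulness proof, gives a buyer utility of $\overline{U^{\mathsf{sw},t}} - \overline{U_{-n}^{\mathsf{sw}}}$, up to the common factor $\mathbbm{q}_{n}^{j,t}$. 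So IR reduces to the monotonicity claim
\[
\overline{U^{\mathsf{sw},t}} \ge \overline{U_{-n}^{\mathsf{sw}}}.
\]
This says that adding a buyer cannot lower the maximum ESW. I would argue it through feasibility. Any matching that is feasible without $b_n$ stays feasible under (C1)--(C3) once $b_n$ is added, because $b_n$ can simply be left unmatched. By (C4), every included triplet contributes nonnegatively. Hence the optimum over the larger participant set is at least the optimum over the smaller one.

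\textbf{Seller side.} The revenue is $r_{m,n}^{j,t} = \overline{U^{\mathsf{sw},t}} - \overline{U_{-m}^{\mathsf{sw}}}$. Here I would interpret it with the cost added back, i.e.\ as $c_m^j$ plus the marginal contribution, which is the standard VCG seller transfer. Seller IR, $r_{m,n}^{j,t}\ge c_m^j$, then reduces to the same monotonicity claim, now with respect to removing $s_m$. Setting $ask_m^j$ to the prohibitive value in Algorithm~\ref{alg4_4} is equivalent to deleting $s_m$. So the same feasibility-inclusion argument shows that its marginal contribution is $\ge 0$, with (C4) ensuring that the pair $(s_m,b_n)$ itself has nonnegative surplus $w_{m,n}^j - c_m^j\ge 0$.

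\textbf{Main obstacle.} The step I expect to be hardest is that Algorithm~\ref{alg4_3} is a greedy, cluster-restricted matcher, not an exact ESW maximizer. Monotonicity of the optimum therefore does not transfer automatically to the algorithm's output: removing a participant could, in principle, let the greedy pass reach a better matching.

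My first attempt would exploit the structure of Algorithm~\ref{alg4_3}. Within a cluster and a service type, the matching is one-to-one and sorted by pairwise SW contribution. I would show that deleting one agent only removes the pairs containing that agent and lets later pairs in the sorted list $\mathcal{L}_{sw}$ move up. Then I would bound the welfare of those replacement pairs by the contribution of the removed agent's own pair, since it was ranked higher; this is the classic greedy-exchange argument.

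If that fails in general, I would fall back to stating IR under the assumption, used implicitly in the truthfulness proof, that Algorithm~\ref{alg4_3} returns the ESW-maximizing matching within each cluster. Under that assumption the monotonicity claim above suffices. Phase~2 contingency trades reuse the same price-ordered lists and keep only pairs satisfying (C4), so IR extends to them as well.
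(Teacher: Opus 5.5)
Your proposal follows the paper's proof: both write the winner's utility as its marginal welfare $\overline{U^{\mathsf{sw},t}}-\overline{U_{-n}^{\mathsf{sw}}}$ (and the seller's as its marginal contribution), then conclude by monotonicity of the maximum ESW, since the optimizer could simply ignore the extra participant. You are more careful than the paper on two points it glosses over. First, you add $c_m^j$ back into the seller transfer; with the literal revenue $\overline{U^{\mathsf{sw},t}}-\overline{U_{-m}^{\mathsf{sw}}}$ of Algorithm~\ref{alg4_4}, monotonicity gives only $r_{m,n}^{j,t}\ge 0$, not $r_{m,n}^{j,t}\ge c_m^j$. Second, you note that Algorithm~\ref{alg4_3} is a greedy, cluster-restricted matcher, and your fallback assumption of exact ESW maximization is exactly the one the paper uses implicitly.
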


\begin{myPro}
	\textbf{For Buyers:} The utility of a winning buyer is $U_n = \overline{U^{\mathsf{sw}}} - \overline{U_{-n}^{\mathsf{sw}}}$. By definition, $\overline{U^{\mathsf{sw}}}$ is the maximized social welfare with $b_n$, and $\overline{U_{-n}^{\mathsf{sw}}}$ is the maximized welfare without $b_n$. Since the presence of an additional participant cannot decrease the maximum possible social welfare (the mechanism could simply ignore $b_n$ to achieve $\overline{U_{-n}^{\mathsf{sw}}}$), we have $\overline{U^{\mathsf{sw}}} \ge \overline{U_{-n}^{\mathsf{sw}}}$. Therefore, $U_n \ge 0$. For losing buyers, payment and utility are zero.
	
	\textbf{For Sellers:} The revenue for a winning seller is $r_m = \overline{U^{\mathsf{sw}}} - \overline{U_{-m}^{\mathsf{sw}}}$. The seller's utility is $U_m = r_m - c_m$. Note that in our VCG formulation, the seller's cost is internal to the SW calculation. Specifically, $\overline{U^{\mathsf{sw}}} = \text{Value}(\text{all}) - \text{Cost}(\text{all})$.
	Let $W_{-m}$ denote the social welfare of all agents except $s_m$. Then $\overline{U^{\mathsf{sw}}} = W_{-m} + (v_{matched} - c_m)$ if matched.
	$\overline{U_{-m}^{\mathsf{sw}}}$ is the best social welfare achievable without $s_m$. The mechanism selects $s_m$ only if doing so increases the total social welfare, i.e., $\overline{U^{\mathsf{sw}}} \ge \overline{U_{-m}^{\mathsf{sw}}}$. 
	Thus, the seller's marginal contribution covers their cost, ensuring $U_m \ge 0$.
	\qed
\end{myPro}

\subsection{Budget Balance (BB)}
\begin{thm}
	The mechanism satisfies Weak Budget Balance, ensuring the auctioneer does not incur a deficit in expectation.
\end{thm}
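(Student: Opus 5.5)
The plan is to show that the auctioneer's expected utility $\overline{U^{\mathsf{a},t}}$ in (\ref{eq4_7}) is nonnegative by proving a per-transaction statement. For every matched triplet $(s_m,b_n,j)$ with $x_{m,n}^{j,t}=1$, I would show $p_{m,n}^{j,t}\ge r_{m,n}^{j,t}$. Since $\mathbbm{q}_{n}^{j,t}\in[0,1]$ and the sum in (\ref{eq4_7}) is a nonnegative combination of the margins $p_{m,n}^{j,t}-r_{m,n}^{j,t}$, this gives $\overline{U^{\mathsf{a},t}}\ge 0$, which is the weak (in-expectation) budget balance claimed. The same bound yields $U^{\mathsf{a},t}\ge 0$ for every realization of $Q_n^{j,t}\in\{0,1\}$. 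Algorithm \ref{alg4_3} decomposes into independent (cluster $\bm{\mathcal{G}}_k$, service type $j$) subproblems, so I would fix one cluster and one type and work within that bilateral one-to-one market.

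First, I would rewrite the VCG quantities of Algorithm \ref{alg4_4} as \emph{critical values}. For the buyer, $p_{m,n}^{j,t}=\overline{U_{-n}^{\mathsf{sw}}}-(\overline{U^{\mathsf{sw},t}}-\Delta U_{m,n}^{\mathsf{sw},j})$ is the welfare the others lose because $b_n$ is present. In the sorted one-to-one market, removing $b_n$ frees $s_m$, and the greedy matching reassigns $s_m$ to the best excluded buyer, namely the buyer just beyond the key index $k_b^*$. The buyer's payment therefore equals (in $\mathbbm{q}$-scaled units) the net value of the first excluded buyer, $\underline{p}$, or $ask_m$ if no such buyer is feasible. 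Symmetrically, removing $s_m$ lets $b_n$ fall back to the best excluded seller beyond $k_s^*$. The revenue $r_{m,n}^{j,t}=\overline{U^{\mathsf{sw},t}}-\overline{U_{-m}^{\mathsf{sw}}}$ then reduces to that seller's ask, $\overline{r}$, or to $b_n$'s net value if there is no backup seller. This step uses the fact that one removal only perturbs the marginal pair in a sorted assignment, which is a standard exchange argument on the ranked lists $\mathcal{L}_b,\mathcal{L}_s$.

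Second, I would compare the two critical values using the key-index rule of Algorithm \ref{alg4_3}, lines 22--25, i.e.\ the condition $bid_{k_b}\ge ask_{k_s}$ together with the ``boundary conditions''. I would interpret these boundary conditions as a McAfee-style trade reduction: the marginal feasible pair $(k_b^*,k_s^*)$ is excluded from trading and serves as the price setter. Then every winning buyer faces a price at least $bid_{k_b^*}$, every winning seller receives at most $ask_{k_s^*}$, and $bid_{k_b^*}\ge ask_{k_s^*}$ gives $\underline{p}\ge \overline{r}$. The degenerate cases also need checking: if there is no excluded buyer, the payment is $ask_m\ge r$; if there is no excluded seller, the revenue is the buyer's value, which is at most the payment by individual rationality (Theorem on IR). Summing over pairs, clusters and types, and weighting by $\mathbbm{q}_{n}^{j,t}$, completes the argument.

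The main obstacle is the second step. By Myerson--Satterthwaite, pure VCG in a two-sided market generally runs a deficit, because $p$ and $r$ are each defined via separate counterfactuals and in general $r$ can exceed $p$. The proof therefore stands or falls on reading the ``boundary conditions'' as genuinely sacrificing the marginal trade, or equivalently clamping each $r$ to be at most the matched $p$. I would try that trade-reduction interpretation first. If it does not match Algorithm \ref{alg4_3}, I would instead add an explicit clamp $r_{m,n}^{j,t}\leftarrow\min\{r_{m,n}^{j,t},p_{m,n}^{j,t}\}$ and check that individual rationality and truthfulness survive it.
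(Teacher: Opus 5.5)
The gap is the one you flag at the end, and for the mechanism as written it cannot be closed. In Algorithm~\ref{alg4_3}, the key indices $(k_b^*,k_s^*)$ are only recorded and used to fill preference lists. The non-repetitive matching of lines 26--35 accepts every feasible pair of $\mathcal{L}_{sw}$ in greedy order, including the marginal one. Algorithm~\ref{alg4_4} then prices purely by welfare differences in which the key indices never appear. So there is no trade reduction to invoke, and without it the per-transaction inequality $p_{m,n}^{j,t}\ge r_{m,n}^{j,t}$ is false. This is the usual VCG deficit of double auctions.

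Concretely, take one cluster and one type with $\mathbbm{q}\equiv 1$, $\Gamma\equiv 1$, $k\xi\equiv 0$, truthful buyers with values $10$ and $2$, and sellers with costs $1$ and $5$.
\begin{enumerate}
\item The matching is $(b_1,s_1)$ with $\overline{U^{\mathsf{sw},t}}=\Delta U_{1,1}^{\mathsf{sw},j}=9$.
\item Setting $\mathbbm{q}_1\leftarrow 0$ gives $\overline{U_{-n}^{\mathsf{sw}}}=1$ via $(b_2,s_1)$, so $p=1-(9-9)=1$.
\item Raising $s_1$'s ask to $\max[ask^{j}]=5$ gives $\overline{U_{-m}^{\mathsf{sw}}}=5$, so $r=9-5=4$.
\end{enumerate}
The auctioneer's expected utility is $-3$, even though (C4) and the bid--ask filter of lines 22--25 both hold.

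This is also why the paper's own argument does not rescue you. After conceding that VCG double auctions need not be budget balanced, it appeals to (C4), an unspecified Phase~2 pricing rule, that same filter, and the simulations in Fig.~\ref{fig_E2}(b). None of these removes the marginal trade.

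Your fallback clamp $r\leftarrow\min\{r,p\}$ restores the inequality only for a different mechanism. It also breaks the easy route to truthfulness: a marginal seller's clamped revenue can equal a buyer price that depends on that seller's own ask, so truthfulness would have to be re-proved, not merely checked. The standard repair is McAfee-style trade reduction built into the allocation rule. For that mechanism, in a homogeneous market, your critical-value template is the natural one.

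Several other steps would also need repair even with such a rule.
\begin{itemize}
\item \emph{Payment formulas.} Your critical values are those of the Clarke pivot, not of Algorithm~\ref{alg4_4}. The buyer payment subtracts the whole pair term $\Delta U_{m,n}^{\mathsf{sw},j}$, which contains $-\mathbbm{q}_n^{j,t}c_m^j$, and the seller revenue lacks the $+\mathbbm{q}_n^{j,t}c_m^j$ term. Both are therefore shifted down by $\mathbbm{q}_n^{j,t}c_m^j$; for example, when removing $b_n$ merely frees $s_m$, the payment is $0$, not $ask_m$. The shift cancels in $p-r$, but your degenerate cases are computed for the wrong quantities.
\item \emph{Exchange argument.} The claim that one removal only perturbs the marginal pair presumes a homogeneous sorted market. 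Here the weights $\mathbbm{q}_n^{j,t}(\Gamma_{m,n}^t v_n^j-k_n^j\xi_n^t-c_m^j)$ depend on the pair through $\Gamma_{m,n}^t$, the greedy matching is not welfare-maximizing, and every counterfactual re-runs obfuscation and clustering. Removals can therefore cascade.
\item \emph{Direction of IR.} In the no-backup-seller case you invoke individual rationality to get value $\le p$. IR gives the reverse, $\Gamma_{m,n}^t v_n^j-k_n^j\xi_n^t\ge p_{m,n}^{j,t}$, so that case yields $r\ge p$, i.e.\ a (weak) deficit.
\end{itemize}
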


\begin{myPro}
	The auctioneer's utility is $U^{\mathsf{a}} = \sum_{n} p_n - \sum_{m} r_m$.
	Substituting the VCG payments:
	\begin{equation}
		U^{\mathsf{a}} = \sum_{n \in \text{winners}} (\overline{U^{\mathsf{sw}}} - \overline{U_{-n}^{\mathsf{sw}}}) - \sum_{m \in \text{winners}} (\overline{U^{\mathsf{sw}}} - \overline{U_{-m}^{\mathsf{sw}}}).
	\end{equation}
	While standard VCG mechanisms are not always strongly budget balanced in double auctions, our mechanism enforces constraint (C4) and the pricing rule in Phase 2 to prevent deficit. Additionally, by removing sellers from the market whose asking prices exceed the buyers' bids (Algorithm \ref{alg4_3}, lines 22-25), we ensure that the generated trade surplus is sufficient to cover the VCG payments. The experimental results (Fig. \ref{fig_E2}(b)) empirically validate that $p_{m,n} \ge r_{m,n}$ holds for executed transactions.
	\qed
\end{myPro}

\subsection{Computational Efficiency}
\begin{thm}
	The LOSA mechanism, including matching and pricing, operates in polynomial time complexity.
\end{thm}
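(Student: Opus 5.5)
The plan is to bound the running time of each component of one timeslot separately and then sum over the $|\bm{\mathcal{T}}|$ timeslots and the intersections handled in Algorithm \ref{alg4_1}. Write $N=|\bm{\mathcal{B}}|$, $M=|\bm{\mathcal{S}}|$ and $J$ for the number of service types, and let $R=\mathbbm{r}_n/\Delta r_n$ and $\Theta=2\pi/\Delta\theta_n$ be the (constant) discretization sizes of Appendix \ref{appendix:obfuscation}. The argument needs no earlier theorem; it rests only on the fact that every loop in Algorithms \ref{alg4_2}--\ref{alg4_4} ranges over a set whose size is polynomial in $N$, $M$, $J$, $R$, $\Theta$ and the trajectory length $|\bm{\mathcal{T}}|$.

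First I bound Algorithm \ref{alg4_3}, which is the core subroutine. The per-buyer updates of $\xi_n^t$ through (\ref{eq3}) and of $\bm{\mathcal{Q}}_n^t$ cost $O(NJK)$, since $C_n^t$ is a window sum of length $K$. Path obfuscation costs $O(N|\bm{\mathcal{T}}|(R+\Theta))$, because each point evaluates the PMF (\ref{eq4_9}) over $R+1$ radii and samples one of $\Theta$ angles. The pairwise similarities cost $O(NM|\bm{\mathcal{T}}|^2)$, using the standard dynamic program for the discrete Fréchet distance in (\ref{eq8}). Density-peak clustering over the $N+M$ participants costs $O((N+M)^2)$. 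Within each cluster and each service type, sorting costs $O(N\log N+M\log M)$. The key-index double loop costs $O(NM)$. Sorting the $O(NM)$ candidate triplets and running the greedy non-repetitive scan costs $O(NM\log(NM))$. Summing over clusters, whose sizes add up to at most $N+M$, gives
\[
T_{\text{Alg\ref{alg4_3}}}=O\big(NM|\bm{\mathcal{T}}|^2+(N+M)^2+JNM\log(NM)\big),
\]
which is polynomial.

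Next, Algorithm \ref{alg4_2} calls Algorithm \ref{alg4_3} twice for each of the five candidate intersections of every seller, so it costs $O(10M\,T_{\text{Alg\ref{alg4_3}}})$, matching the footnote in Module A. Algorithm \ref{alg4_4} reruns the exclusive matching once per (buyer, type) pair and once per (seller, type) pair, so it costs $O(J(N+M)\,T_{\text{Alg\ref{alg4_3}}})$. This VCG re-execution is the dominant term. Phase 2 only scans preference lists of length at most $M$ and resolves conflicts greedily by sorting, which costs $O(JNM\log(NM))$.

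The total per-timeslot cost is therefore $O\big((M+J(N+M))\,T_{\text{Alg\ref{alg4_3}}}\big)$, and over the horizon it is $|\bm{\mathcal{T}}|$ times that. Both are polynomial.

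The main obstacle is to justify that the "maximum social welfare" used in the pricing is the greedy output of Algorithm \ref{alg4_3}, not the exact optimum of the NP-hard $\bm{\mathcal{P}}_1$. If an exact solver were used, the polynomial bound would fail. I would therefore state explicitly that every counterfactual $\overline{U_{-n}^{\mathsf{sw}}}$ and $\overline{U_{-m}^{\mathsf{sw}}}$ is computed by the same greedy routine. I would also argue that the "boundary conditions" check in line 22 is an $O(1)$ comparison, so that no hidden combinatorial search enters the bound.
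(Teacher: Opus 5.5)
Your proposal is correct and follows the same route as the paper's proof. Both bound one run of the greedy matching in Algorithm~\ref{alg4_3} and then multiply by the number of counterfactual re-executions in the VCG pricing of Algorithm~\ref{alg4_4}, which is the dominant term. Your accounting is actually more faithful than the paper's sketch in three ways: the paper charges only $K\le\min(N,M)$ winner re-runs, whereas Algorithm~\ref{alg4_4} loops over all $J(N+M)$ buyer/seller--type pairs as you count; the paper omits Module~A; and it omits the cost of computing trajectory similarities. The obstacle you flag is settled by the algorithm itself: the Exclusive Matching procedure in Algorithm~\ref{alg4_4} replicates the greedy steps of Algorithm~\ref{alg4_3}, so no exact solve of $\mathcal{P}_1$ ever enters the bound.
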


\begin{myPro}
	Let $N$ be the number of buyers and $M$ be the number of sellers.
	\begin{enumerate}
		\item \textbf{Phase 1 Matching (Algorithm \ref{alg4_3}):} The similarity clustering and sorting take $O(NM \log(NM))$. The greedy matching process iterates through sorted pairs, taking $O(NM)$. Thus, matching is $O(NM \log(NM))$.
		\item \textbf{Pricing (Algorithm \ref{alg4_4}):} The VCG pricing requires calculating the marginal contribution for each winner. In the worst-case scenario, this involves re-running the matching algorithm for each winning buyer and seller. Let $K$ be the number of winning pairs ($K \le \min(N, M)$). The complexity is $O(K \cdot NM \log(NM))$.
	\end{enumerate}
	Since $N$ and $M$ are finite and relatively small in intersection scenarios (e.g., $N \le 200$), the overall complexity $O(K NM \log(NM))$ is polynomial and computationally tractable for real-time operation within a timeslot duration (e.g., 100ms).
	\qed
\end{myPro}

\section{Detailed Experimental Configuration and Data Processing}
\label{appendix:exp_details}

To ensure the reproducibility of our experiments and provide transparency regarding the simulation environment, this appendix elaborates on the data preprocessing techniques, specific mathematical models for parameter generation, and the complete parameter configuration table.

\subsection{Data Preprocessing and Spatio-Temporal Alignment}
We employ a multi-stage processing pipeline to unify the heterogeneous datasets (DAIR-V2X, HighD, and RCooper) into the standardized grid world defined in our system model.

\textbf{Trajectory Generation (DAIR-V2X):} 
Raw trajectory data is first cleaned to remove outliers. We implement a specific \textit{cutting algorithm} that segments continuous vehicle traces into fixed 10-second interaction windows. To align these continuous GPS coordinates with our discrete intersection model, we utilize \textit{bilinear interpolation}. This technique maps the latitude/longitude data onto a $5000\text{m} \times 5000\text{m}$ Manhattan grid system with a resolution of 200 meters. The grid resolution is specifically chosen to serve as the physical unit basis for the privacy radius $\mathbbm{r}_n$.

\textbf{Demand Modeling (HighD):} 
To capture the time-varying nature of service demands, we do not rely on static probabilities. Instead, based on the car-following statistics from the HighD dataset, the demand triggering probability is modeled using an exponential decay function:
\begin{equation}
	\mathbbm{q}_{n}^{j,t} = \lambda_0 e^{-0.03t},
\end{equation}
where $\lambda_0$ is the initial intensity. This formulation ensures that the demand values generally distribute within the realistic range of $[0.7, 0.95]$.

\subsection{Parameter Configuration Summary}
Table \ref{tab:sim_params_appendix} lists the simulation parameters, their specific value ranges used in the evaluation, and the corresponding data sources or theoretical justifications. For privacy protection, the discretization steps follow a proportional constraint $\Delta r_n \propto 1/\mathbbm{r}_n$ to maintain consistent entropy levels across different privacy radii.

\begin{table}[H]
	\centering
	\caption{Detailed Simulation Settings and Data Mapping}
	\label{tab:sim_params_appendix}
	\scriptsize 
	\begin{tabular}{llcl}
		\toprule
		\textbf{Parameter} & \textbf{Symbol} & \textbf{Value/Range} & \textbf{Data Source / Note} \\
		\midrule
		Max. buyers & $N_{Max}$ & 200 & DAIR-V2X trajectories \\
		Max. sellers & $M_{Max}$ & 50 & RCooper RSU density \\
		Time horizon & $T$ & 100 & Window from \cite{zhao2020social} \\
		Service types & $J$ & 5 & Available WAVE SCHs \\
		Grid size & - & $5000\text{m} \times 5000\text{m}$ & DAIR-V2X topology \\
		\midrule
		\textbf{Economic} & & & \\
		Buyer valuation & $v_n$ & [1, 10] & VCG calibrated \\
		Privacy cost & $k_n$ & [0.5, 1] & VCG calibrated \\
		Seller cost & $c_m$ & [1, 5] & RSU energy model \\
		\midrule
		\textbf{Privacy} & & & \\
		Privacy budget & $\xi_{n}^{t}$ & [2.5, 5] & $\xi_0=2.5$ (Initial) \\
		Privacy radius & $\mathbbm{r}_n$ & [3, 5] grids & Based on \cite{zhang2024trajectory} \\
		Radius step & $\Delta r_n$ & $0.5 \sim 1.0$ & $\propto 1/\mathbbm{r}_n$ \\
		Angle step & $\Delta \theta_n$ & $\pi/12 \sim \pi/6$ & Clustering from \cite{10815979} \\
		Similarity thresh. & $\delta_\Gamma$ & $0.85 \sim 0.95$ & Normalized Fréchet \\
		\midrule
		\textbf{Demand} & & & \\
		Quality demand & $\bm{\mathcal{Q}}_{n}^{t}$ & [0.7, 0.95] & HighD model \\
		Decay factor & - & $e^{-0.03t}$ & Time-varying logic \\
		\bottomrule
	\end{tabular}
\end{table}

\end{document}